\newtheorem{theorem}{Theorem}
\newtheorem{lemma}[theorem]{Lemma}
\newtheorem{corollary}[theorem]{Corollary}
\newtheorem{prop}[theorem]{Proposition}
\newtheorem{remark}{Remark}
\newcommand{\rounddown}[1]{\left\lfloor#1\right\rfloor}
\newcommand{\roundup}[1]{\left\lceil#1\right\rceil}
\newcommand{\Z}{\mathbb{Z}}
\newcommand{\R}{\mathbb{R}}
\DeclareRobustCommand{\cev}[1]{%
  {\mathpalette\do@cev{#1}}%
}
\newcommand{\do@cev}[2]{%
  \vbox{\offinterlineskip
    \sbox\z@{$\m@th#1 x$}%
    \ialign{##\cr
      \hidewidth\reflectbox{$\m@th#1\vec{}\mkern4mu$}\hidewidth\cr
      \noalign{\kern-\ht\z@}
      $\m@th#1#2$\cr
    }%
  }%
}
\newcommand{\knote}[1]{{\color{red}[{Karthik: \bf #1}]\marginpar{\color{red}*}}}
\newcommand{\snote}[1]{{\color{cyan}[{Siyue: \bf #1}]\marginpar{\color{blue}*}}}
\DeclareMathOperator{\supp}{supp}
\DeclareMathOperator{\opt}{OPT}
\title{Minimum Cost Nowhere-zero Flows and Cut-balanced Orientations}
\author{Karthekeyan Chandrasekaran\thanks{Grainger College of Engineering, University of Illinois, Urbana-Champaign. Email: {\tt karthe@illinois.edu}. Supported in part by NSF grant CCF-2402667.}\and Siyue Liu\thanks{Tepper School of Business, Carnegie Mellon University. Email: {\tt siyueliu@andrew.cmu.edu}. This material is based upon work supported in part by the Air Force Office of
Scientific Research under award number FA9550-23-1-0031.}
\and R. Ravi\thanks{Tepper School of Business, Carnegie Mellon University. Email: {\tt ravi@andrew.cmu.edu}. This material is based upon work supported in part by the Air Force Office of
Scientific Research under award number FA9550-23-1-0031.}
}
\date{}
\begin{document}
\maketitle

\begin{abstract}
Flows and colorings are disparate concepts in graph algorithms---the former is tractable while the latter is intractable. Tutte \cite{tutte1954contribution, tutte1966algebraic} introduced the concept of nowhere-zero flows to unify these two concepts. Jaeger \cite{jaeger1976balanced} showed that nowhere-zero flows are equivalent to cut-balanced orientations. 
Motivated by connections between nowhere-zero flows, cut-balanced orientations, Nash-Williams' well-balanced orientations, and postman problems, we study optimization versions of nowhere-zero flows and cut-balanced orientations. 
Given a bidirected graph with asymmetric costs on two orientations of each edge, we study the min cost nowhere-zero $k$-flow problem and min cost $k$-cut-balanced orientation problem. We show that both problems are NP-hard to approximate within any finite factor. Given the strong inapproximability result, we design bicriteria approximations for both problems: we obtain a $(6,6)$-approximation to the min cost nowhere-zero $k$-flow and a $(k,6)$-approximation to the min cost $k$-cut-balanced orientation.
For the case of symmetric costs (where the costs of both orientations are the same for every edge), we show that the nowhere-zero $k$-flow problem remains NP-hard and admits a $3$-approximation. 
\end{abstract}

\newpage
\section{Introduction}\label{sec:intro}
Flows and colorings are disparate concepts in graph theory, especially from a computational perspective---flow problems are typically tractable while coloring problems are typically intractable. However, these two concepts are related in planar graphs via planar duality. 
Inspired by this connection, Tutte \cite{tutte1954contribution,tutte1966algebraic} introduced nowhere-zero flows to unify flows and colorings in arbitrary graphs. This unified viewpoint has driven several fundamental results in connectivity and orientations (e.g., see \cite{west2001introduction} and \cite{CLR24}). Existence of nowhere zero flows is characterized by the existence of cut-balanced orientations \cite{jaeger1976balanced}. Motivated by these connections, we investigate optimization problems associated with nowhere-zero flows and cut-balanced orientations.

We begin by defining nowhere-zero flows, cut-balanced orientations, and the optimization problems of interest. 
In an undirected graph $G=(V,E)$, for a subset of vertices $U\subseteq V$, we denote the set of edges with exactly one end-vertex in $U$ by $\delta_G(U)$. In a directed graph $D=(V,A)$, for $U\subseteq V$, we denote the arcs leaving and entering $U$ by $\delta^+_{D}(U)$ and $\delta^-_{D}(U)$ respectively. For a subset of edges $F\subseteq E$, we denote $\delta_F(U):=\delta_G(U)\cap F$. Similarly, for a subset of arcs $B\subseteq A$, we define $\delta_{B}^{\pm}(U):=\delta_{D}^{\pm}(U)\cap B$.
Let $G=(V, E)$ be an undirected graph and $k\ge 2$ be an integer.
A \emph{nowhere-zero $k$-flow} in $G$ is a tuple $(\vec{E},f)$, where $\vec{E}$ is an orientation of $E$ and $f:\vec{E} \rightarrow \{1,2,...,k-1\}$ is a function that satisfies flow conservation, i.e., $f(\delta_{\vec{E}}^+(v))=f(\delta_{\vec{E}}^-(v))$ for every $v\in V$. A \emph{nowhere-zero flow} is a nowhere-zero $k$-flow for some finite integer $k\geq 2$. An orientation $\vec{E}$ of $E$ \emph{induces} a nowhere-zero $k$-flow if there is a function $f: \vec{E} \rightarrow \{1,2,...,k-1\}$ such that $(\vec{E}, f)$ is a nowhere-zero $k$-flow. 
An orientation $\vec{E}$ of $E$ is \emph{$k$-cut-balanced} if $|\delta_{\vec{E}}^+(U)|\geq \frac{1}{k}|\delta_{E}(U)| \ \forall U\subseteq V$. 
A \emph{cut-balanced orientation} is a $k$-cut-balanced orientation for some finite integer $k\ge 2$. 
The \emph{bidirected graph} of $G=(V,E)$, denoted $\vec{G}=(V,E^+\cup E^-)$, is obtained by making two copies $e^+\in E^+$ and $e^-\in E^-$ of every edge $e\in E$ and orienting them in opposite directions. Let  $c:E^+\cup E^-\rightarrow \Z_{\ge 0}$ be costs on the arcs of the bidirected graph $\vec{G}$. The cost of an orientation $\vec{E}$ is denoted by $c(\vec{E}):=\sum_{e\in \vec{E}}c(e)$ and the cost of a nowhere-zero flow $(\vec{E}, f)$ is denoted by $c(f):=\sum_{e\in \vec{E}}c(e)f(e)$. 
We consider the following problems for a fixed integer $k\ge 2$:

\begin{mdframed}
\textbf{Weighted nowhere-zero $k$-flow ($\operatorname{WNZF}(k)$)}:  \\
\begin{tabular}{ l l }
\textbf{Given:}& Undirected $2$-edge-connected graph $G=(V, E)$ and costs $c: E^+\cup E^-$ \\
\textbf{Goal:}& Find a nowhere-zero $k$-flow $(\vec{E},f)$ with minimum cost $c(f)$.
\end{tabular}
\end{mdframed}

\begin{mdframed}
\textbf{Weighted $k$-cut-balanced orientations ($\operatorname{WCBO}(k)$)}:\\
\begin{tabular}{ l l }
\textbf{Given:}& Undirected $2$-edge-connected graph $G=(V, E)$ and costs $c: E^+\cup E^-$ \\
\textbf{Goal:}& Find a $k$-cut-balanced orientation $\vec{E}$ with minimum cost $c(\vec{E})$.
\end{tabular}
\end{mdframed}

The \emph{feasibility versions} of  $\operatorname{WNZF}(k)$ and $\operatorname{WCBO}(k)$ are the problems of determining whether a given undirected $2$-edge-connected graph has a nowhere-zero $k$-flow and a $k$-cut-balanced orientation, respectively. An orientation induces a nowhere-zero $k$-flow if and only if it is $k$-cut-balanced \cite{jaeger1976balanced, goddyn2001open,goddyn1998k,thomassen2012weak}. Hence, the feasibility versions of  $\operatorname{WNZF}(k)$ and $\operatorname{WCBO}(k)$ are equivalent, and we will discuss their complexity shortly. 
We say that the cost function $c: E^+\cup E^-$ is \emph{symmetric} if $c(e^+)=c(e^-)$ for every $e\in E$ and \emph{asymmetric} otherwise. If the costs $c$ are symmetric, $\operatorname{WCBO}(k)$ is equivalent to the feasibility problem while $\operatorname{WNZF}(k)$ is non-trivial. We denote $\operatorname{WNZF}(k)$ for symmetric costs as $\operatorname{SWNZF}(k)$. We define $\operatorname{WNZF}(\infty)$ as the problem of finding a min-cost nowhere-zero flow with no restriction on $k$, i.e., 
\[
\operatorname{WNZF}(\infty):=\min\{c(f): \exists\ \vec{E} \text{ such that $(\vec{E}, f)$ is a nowhere-zero $k$-flow for some integer $k\ge 2$}\}. 
\]
We define $\operatorname{WCBO}(\infty)$  and $\operatorname{SWNZF}(\infty)$ analogously. 
In all optimization problems above, we assume that the input graph is $2$-edge-connected since this is a necessary condition for the existence of a nowhere-zero flow/cut-balanced orientation.

\subsection{Background, Motivations, and Connections}
\vspace{1mm}
\noindent \textbf{Nowhere-zero Flows.}
Nowhere-zero flow is a rich area of study in graph theory, with close connections to graph coloring and chromatic polynomials. 
We begin by observing that $2$-edge-connectivity is necessary and sufficient for the existence of a nowhere-zero flow.\footnote{The necessity of $2$-edge-connectivity for nowhere-zero flows follows from flow-conservation and nowhere-zero property. We sketch its sufficiency: we recall that every $2$-edge-connected graph $G=(V, E)$ admits a strongly connected orientation $\vec{E}$, i.e., an orientation $\vec{E}$ such that $|\delta^+_{\vec{E}}(U)|\ge 1$ for every $\emptyset\neq U\subsetneq V$; such an orientation is a $k$-cut-balanced orientation for some sufficiently large integer $k$; now, recall that an orientation is a $k$-cut-balanced orientation if and only if it induces a nowhere-zero $k$-flow~\cite{jaeger1976balanced}.} 
Moreover, for integers $k_2\ge k_1\ge 2$, a nowhere-zero $k_1$-flow is also a nowhere-zero $k_2$-flow. 
Given this, most works on nowhere-zero flows focused on the least integer $k$ for which every $2$-edge-connected graph admits a nowhere-zero $k$-flow. 
Tutte \cite{tutte1954contribution,tutte1966algebraic} conjectured that every $2$-edge-connected graph has a nowhere-zero $5$-flow---we will call this Tutte's $5$-flow conjecture. 
The Petersen graph does not have a nowhere-zero $4$-flow, showing that $5$ is best possible.
Jaeger \cite{jaeger1976nowhere,jaeger1979flows} proved that every $2$-edge-connected graph has a nowhere-zero $8$-flow. Seymour \cite{seymour1981nowhere} improved Jaeger's result by showing that every $2$-edge-connected graph has a nowhere-zero $6$-flow. Seymour's proof is non-constructive and a subsequent algorithmic proof was given by Younger \cite{younger1983integer}. DeVos and Nurse \cite{devos2023short} gave a shorter proof of Seymour's result which can also be made constructive. 

We now discuss the feasibility variants of $\operatorname{WNZF}(k)$ and $\operatorname{WCBO}(k)$. A consequence of Seymour's result is that for every integer $k\ge 6$, all instances of both $\operatorname{WNZF}(k)$ and $\operatorname{WCBO}(k)$ are feasible. We now discuss the status for integers $k\in \{2,3, 4, 5\}$. 
An undirected graph admits a nowhere-zero $2$-flow if and only if it is Eulerian; we recall that Eulerian property is verifiable in polynomial time and hence, the feasibility versions of $\operatorname{WNZF}(2)$ and $\operatorname{WCBO}(2)$ are solvable in polynomial time. A planar graph admits a nowhere-zero $k$-flow if and only if its dual is $k$-vertex-colorable \cite{jaeger1988nowhere}; deciding $3$-vertex-colorability of a planar graph is NP-complete \cite{garey1974some} and consequently, feasibility variants of $\operatorname{WNZF}(3)$ and $\operatorname{WCBO}(3)$ are NP-complete (even in planar graphs) \cite{kochol1998hypothetical}. 
Tutte \cite{tutte1954contribution} showed that a simple cubic graph has a nowhere-zero $4$-flow if and only if it is $3$-edge-colorable. Deciding $3$-edge-colorability of a given simple cubic graph is NP-complete \cite{holyer1981np} and hence, feasibility variants of $\operatorname{WNZF}(4)$ and $\operatorname{WCBO}(4)$ are also NP-complete (see also \cite{jaeger1979flows} and Section 7.3 of \cite{west2001introduction}). Interestingly, Kochol \cite{kochol1998hypothetical} proved that if Tutte's $5$-flow conjecture is false, then deciding whether a cubic graph has a nowhere-zero $5$-flow is NP-complete. In fact, for $k=3,4,5$, NP-completeness of the feasibility variants implies no finite approximations for corresponding optimization problems $\operatorname{WNZF}(k)$ and $\operatorname{WCBO}(k)$. This is because a finite approximation algorithm for the instance with all costs set to 0 would also certify feasibility.
We summarize these facts in the 
first two columns of Table \ref{table:main}.

From the discussion above, we have that 
the feasibility variants of $\operatorname{WNZF}(2)$ and $\operatorname{WCBO}(2)$ are polynomial-time solvable and for $k\ge 6$, all instances of $\operatorname{WNZF}(k)$ and $\operatorname{WCBO}(k)$ are feasible. Given this status, it is natural to ask whether the optimization variants $\operatorname{WNZF}(k)$ and $\operatorname{WCBO}(k)$ are solvable in polynomial time for $k=2$ and $k\ge 6$. We observe that $\operatorname{WNZF}(2)$ and $\operatorname{WCBO}(2)$ are equivalent since a nowhere-zero $2$-flow $(\vec{E},f)$ has $f(e)=1\ \forall e\in\vec{E}$, and thus the cost of the flow is equal to the cost of the orientation. 
In particular, $\operatorname{WNZF}(2)$ and $\operatorname{WCBO}(2)$ are equivalent to the min cost Eulerian orientation problem, which can be solved in polynomial time \cite{win1989windy}. 
We emphasize that these connections already highlight the discrete nature of $\operatorname{WNZF}(k)$ in contrast to the classic min cost flow problem which is inherently a continuous optimization problem. 

\vspace{1mm}
\noindent \textbf{Well-balanced Orientations.}
The notion of $k$-cut-balanced orientation is closely related to that of well-balanced orientation, a fundamental notion in connectivity-preserving orientation problems. Let $G=(V, E)$ be an undirected graph. An orientation $\vec{E}$ is \emph{well-balanced} if $\lambda_{\vec{E}}(u,v)\ge \lfloor \lambda_E(u,v)/2\rfloor$ for every $u,v\in V$, where 
$\lambda_E(u,v):=\min_{u\in U\subseteq V-v}|\delta_E(U)|$ and 
$\lambda_{\vec{E}}(u,v):=\min_{u\in U\subseteq V-v}|\delta_{\vec{E}}^+(U)|$. 
Nash-Williams \cite{nash1960orientations} showed that every $2$-edge-connected graph admits a well-balanced orientation---this result is known as the strong orientation theorem in the literature. The strong orientation theorem guarantees a single orientation that halves the pairwise connectivity of all pairs. In contrast, Nash-Williams' weak orientation theorem is about global connectivity: it states that every $2k$-edge-connected graph admits a $k$-arc-connected orientation. The weak orientation theorem can be derived from the strong orientation theorem. Nash-Williams' proof of the strong orientation theorem is rather involved. Subsequent works \cite{mader1978reduction,frank1993applications} have given alternative proofs, but they are still complex. In fact, discovering a simpler proof of Nash-Williams' strong orientation theorem has been a long-standing question in graph theory and combinatorial optimization (see Section 9.8 in \cite{frank2011connections}). 

As a motivating question for this quest for simplification, Frank \cite{frank1993applications} posed an optimization variant of Nash-Williams' strong orientation theorem, namely the min-cost well-balanced orientation problem: given an undirected graph with costs on both orientations of each edge, find a min cost well-balanced orientation. Bern\'ath, Iwata, Kir\'{a}ly, Kir\'{a}ly, and Szigeti \cite{bernath2008recent} proved that this problem is NP-hard. Further, Bern\'ath and Joret \cite{bernath2008well} proved that deciding if a mixed graph has a well-balanced orientation is also NP-hard. These results rule out the possibility of approximating the min cost well-balanced orientation within any finite factor. On the other hand, the optimization variant of Nash-Williams' weak orientation theorem---namely, given an undirected $2k$-edge-connected graph with costs on both orientations of each edge, finding a min-cost $k$-arc-connected orientation---is solvable in polynomial time via Edmonds-Giles' theory of submodular flows \cite{edmonds1977min}. 

We observe that $k$-cut-balanced orientations are closely related to well-balanced orientations: if $\vec{E}$ is a $k$-cut-balanced orientation, then for every pair $(u,v)\in V$, we have that 
\[\lambda_{\vec{E}}(u,v)=\min\limits_{u\in U\subseteq V-v}|\delta_{\vec{E}}^+(U)|\geq \min\limits_{u\in U\subseteq V-v}\frac{1}{k}|\delta_E(U)|=\frac{1}{k}\lambda_E(u,v).
\]
Thus, a $k$-cut-balanced orientation can be viewed as an ``approximately'' well-balanced orientation with the approximation factor being $k/2$. Since there exists an efficient algorithm to find a $6$-cut-balanced orientation in every $2$-edge-connected graph, it follows that there exists an efficient algorithm to find a ``$3$-approximately'' well-balanced orientation. More generally, a $k$-cut-balanced orientation with small cost  is a ``$k/2$-approximately'' well-balanced orientation with small cost, thus relating $\operatorname{WCBO}(k)$ to a bicriteria version of Frank's problem. This connection was one of the motivations for us to study the optimization version of $k$-cut-balanced orientation problem, namely $\operatorname{WCBO}(k)$.

\vspace{1mm}
\noindent \textbf{Orientation and Postman problems.}
We observe that $\operatorname{WCBO}(\infty)$ is polynomial-time solvable. In fact, it is equivalent to the problem of finding a min cost strongly connected orientation: 
an orientation $\vec{E}$ is $k$-cut-balanced for some finite integer $k\ge 2$ if and only if 
for every $\emptyset\neq U\subsetneq V$, we have that $|\delta_{\vec{E}}^+(U)|\geq \frac{1}{k}|\delta_{E}(U)|$, i.e., $|\delta_{\vec{E}}^+(U)|\geq \roundup{\frac{1}{k}|\delta_{E}(U)|}$, i.e., $|\delta_{\vec{E}}^+(U)|\geq 1$. Min cost strongly connected orientation can be solved in polynomial time (via \cite{lucchesi1978minimax, edmonds1977min}). 

In $\operatorname{WNZF}(\infty)$, we seek a nowhere-zero flow $(\vec{E}, f)$ which is equivalent to an integer-valued flow $f: E^+\cup E^-\rightarrow \Z_{\ge 0}$ such that the flow is positive in {\it exactly} one of $e^+$ and $e^-$ for every edge $e\in E$; instead, if we seek an integer-valued flow $f: E^+\cup E^-\rightarrow \Z_{\ge 0}$ such that {\it at least} one of $e^+$ and $e^-$ has positive flow value for every $e\in E$, then the resulting problem is the \emph{asymmetric postman problem} \cite{minieka1979chinese}: find a min cost directed Eulerian tour of the graph $G$ that traverses every edge $e\in E$ at least once. 
In contrast to the symmetric postman problem which can be solved in polynomial time \cite{edmonds1973matching}, 
the asymmetric postman problem is already NP-hard but admits a $3/2$-approximation \cite{win1989windy, raghavachari1999approximation}. A well-studied special case of the asymmetric postman problem is the \emph{mixed postman} problem, where we are given a mixed graph (consisting of undirected edges and directed arcs) with costs on the arcs and \textit{symmetric} costs on the edges and the goal is to find a min cost Eulerian tour that traverses every arc and edge at least once. This is also NP-hard \cite{papadimitriou1976complexity} and several works have focused on its approximability \cite{edmonds1973matching,christofides1984optimal,frederickson1979approximation,raghavachari19993}---the current best approximation ratio is $\frac{3}{2}$ \cite{raghavachari19993}. 
The mixed postman problem with the restriction that each arc/edge can be traversed at most once has also been studied. The arc-restricted problem is NP-hard and admits a $4/3$-approximation~\cite{zaragoza2003postman}, while the edge-restricted problem is NP-hard to approximate within any finite factor \cite{zaragoza2003postman,veerasamy1999approximation,martinez2006complexity}.

$\operatorname{WNZF}(\infty)$ can be viewed as the asymmetric postman problem with \emph{orientation constraints}: i.e., flow is allowed to be positive on exactly one of the two orientations of each edge.
We will subsequently take this viewpoint while designing LP-based approximation algorithms for $\operatorname{WNZF}(k)$. We remark that $\operatorname{WNZF}(\infty)$, i.e., postman problem with orientation constraints, is different from the (mixed) postman problem with the restriction that each edge can be traversed at most once---the latter requires the flow value to be exactly $1$ on one orientation of each edge and $0$ on the other while $\operatorname{WNZF}(\infty)$ requires the flow value to be positive on exactly one orientation of each edge. Orientation constraints requiring flow value to be positive on exactly one orientation of each edge arise naturally in network design problems since they model one-way roads. 
We mention that several prior works have considered orientation constraints in directed network design problems and most of them have concluded that orientation constraints tend to make the problem much harder (e.g., see \cite{frank2003combined,khanna2005directed,singh2018approximating,cygan2013steiner}). 
\subsection{Our Results}
We first show hardness and inapproximability of $\operatorname{WNZF}(k)$ and $\operatorname{WCBO}(k)$. 
\begin{theorem}\label{thm:hardness-WNZF-WCBO}
    For every finite integer $k\ge 3$, $\operatorname{WNZF}(k)$ and $\operatorname{WCBO}(k)$ are NP-hard to approximate within any finite factor.
\end{theorem}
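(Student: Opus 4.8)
The plan is to reduce from the feasibility version of $\operatorname{WNZF}(k')$ (equivalently $\operatorname{WCBO}(k')$) for an appropriate $k'$ on a graph family where deciding feasibility is already NP-complete. By the facts recalled in the excerpt, for each fixed $k' \in \{3,4\}$ the feasibility of a nowhere-zero $k'$-flow is NP-complete (via $3$-vertex-colorability of planar duals for $k'=3$, and $3$-edge-colorability of cubic graphs for $k'=4$); and if Tutte's $5$-flow conjecture fails, the same holds for $k'=5$. So for $k=3$ and $k=4$, the reduction is simply the identity: on the instance $(G, c \equiv 0)$ of $\operatorname{WNZF}(k)$, any finite-factor approximation algorithm must return a solution of cost $0$ when one exists and therefore decides feasibility, contradicting NP-hardness unless $\mathrm{P}=\mathrm{NP}$. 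The same argument applies verbatim to $\operatorname{WCBO}(k)$ via the Jaeger equivalence. This handles the small cases and, more importantly, is the template; the real content is $k \ge 6$, where every instance is feasible, so the zero-cost trick fails.

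For $k \ge 6$ the idea is to ``embed'' an NP-hard feasibility question about nowhere-zero $5$-flows (or, if one prefers an unconditional statement, $4$-flows on cubic graphs) inside a $\operatorname{WNZF}(k)$ instance using the cost function as a penalty mechanism, rather than a constant. Concretely, given a cubic graph $H$ for which we wish to decide whether it has a nowhere-zero $4$-flow, I would build an auxiliary $2$-edge-connected graph $G$ that contains a copy of $H$ together with a ``gadget'' attached to each edge (or vertex) of $H$, and choose asymmetric costs so that: (i) any nowhere-zero $k$-flow of $G$ of cost below a threshold $T$ must restrict to a nowhere-zero $4$-flow on the $H$-part (the gadget arcs on the ``expensive'' side carry huge cost, forcing flow values on the copied edges of $H$ into $\{1,2,3\}$ and forcing conservation there); and (ii) conversely, a nowhere-zero $4$-flow of $H$ can be extended through the gadgets to a nowhere-zero $k$-flow of $G$ of cost exactly $T$. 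Then a finite-factor approximation would have to return a solution of cost $< \infty\cdot T$, i.e. of cost at most a finite multiple of $T$; by scaling the ``forbidden'' arc costs to be astronomically larger than this multiple of $T$, any returned solution of finite cost certifies the existence of a nowhere-zero $4$-flow in $H$, and its absence forces cost $+\infty$ (or, in a model with finite costs, a cost exceeding any prescribed approximation ratio times $T$). This yields inapproximability within any finite factor.

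The key design constraints on the gadget are: it must (a) keep $G$ $2$-edge-connected so that a nowhere-zero $k$-flow always exists (otherwise we are back to a feasibility obstruction and the statement would be about NP-hardness, not inapproximability); (b) be \emph{cheaply routable} — i.e. admit a nowhere-zero $k$-flow of small cost for \emph{every} prescribed boundary flow value in $\{1,\dots,k-1\}$ that the $H$-part might hand it, so that the only obstruction to a cheap global solution is the feasibility of the $4$-flow on $H$ itself; and (c) via its expensive arcs, pin down the flow values on the $H$-edges to lie in $\{1,2,3\}$ and with the ``correct'' orientation pattern so that a cheap solution genuinely induces a nowhere-zero $4$-flow on $H$. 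A natural candidate is to subdivide each edge $e=uv$ of $H$ and attach a small constant-size strongly-connected widget at the subdivision vertex whose internal arcs are free in one direction and cost-$M$ in the other, with $M$ chosen after the approximation ratio; the widget absorbs/supplies flow so that conservation at $u,v$ in $G$ reduces to conservation in $H$, while the cost-$M$ arcs are avoidable exactly when the flow on $e$ stays in $\{1,2,3\}$ oriented consistently.

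The main obstacle I anticipate is step (c) combined with (b): simultaneously forcing the induced $H$-flow to be a genuine nowhere-zero \emph{4}-flow (values capped at $3$, not just at $k-1$) while still guaranteeing that \emph{every} legal boundary demand can be cheaply satisfied within the gadget — because a nowhere-zero $k$-flow is rigid (flow must be positive on exactly one orientation of each edge, and conservation is exact), the gadget has little slack, and ensuring that the ``correct'' completion exists and is cheap for all $H$-$4$-flows, while all ``incorrect'' completions are expensive, requires a careful parity/divisibility argument on the widget (likely choosing widget sizes coprime to or dividing $k$ appropriately). A secondary subtlety is packaging the ``any finite factor'' conclusion cleanly: since costs are integers, one fixes a target ratio $\rho$, sets $M > \rho \cdot T + 1$, and argues a $\rho$-approximation on this instance solves the NP-hard problem — so the reduction is uniform in $\rho$ but the instance depends on $\rho$, which is exactly what ``inapproximable within any finite factor'' permits.
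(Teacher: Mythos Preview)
Your handling of $k\in\{3,4\}$ via zero costs is correct and matches the paper. The gap is for $k\ge 5$. For $k=5$ your argument is only conditional on Tutte's conjecture failing, whereas the theorem is unconditional. For $k\ge 6$ you have not actually constructed the gadget, and the obstacle you flag is real: the cost $c(e)f(e)$ is \emph{linear} in the flow value, so on any single arc you cannot make flow value $4$ expensive while value $3$ is cheap; your widget would have to enforce this indirectly, and attaching it at a subdivision vertex $w$ of an $H$-edge $uv$ decouples the flows on $uw$ and $wv$ (the widget can absorb or supply the difference), so what you recover on $H$ is not obviously a flow at all. As written, the proposal is a strategy with its hardest step left open.

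The paper sidesteps flow-value control entirely by working on the $\operatorname{WCBO}$ side. It reduces from (restricted) SAT to the intermediate problem: given $G$ and a partial orientation $\vec{F}$ of some edges, decide whether $\vec{F}$ extends to a $k$-cut-balanced orientation of all of $E$. A direct SAT gadget (variable-pair nodes, clause nodes, a root, with carefully chosen arc multiplicities depending on $k$) shows this is NP-hard for every $k\ge 4$, and $k=3$ is just feasibility; this covers $k=5$ unconditionally. Then one sets $c(e)=0$ and $c(e^{-1})=\infty$ for each $e\in\vec{F}$, and $c=0$ on the unoriented edges: any finite-factor approximation for $\operatorname{WCBO}(k)$ on this instance returns a cost-$0$ solution exactly when $\vec{F}$ extends. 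Inapproximability for $\operatorname{WNZF}(k)$ then follows via Lemma~\ref{lemma:approx_translate}. The insight you are missing is that asymmetric costs naturally encode \emph{orientation} constraints rather than flow-value bounds, so the right NP-hard intermediate problem is partial-orientation completion, not embedding a small-$k'$ nowhere-zero flow instance inside a larger graph.
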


Theorem \ref{thm:hardness-WNZF-WCBO} rules out the possibility of approximation algorithms for $\operatorname{WNZF}(k)$ and $\operatorname{WCBO}(k)$. Given this status, we investigate bicriteria approximation algorithms for both these problems. We say that an algorithm is an \emph{$(\alpha,\beta)$-approximation} for $\operatorname{WNZF}(k)$ (resp. $\operatorname{WCBO}(k)$) if the algorithm returns a nowhere-zero $\beta k$-flow (resp. $\beta k$-cut-balanced orientation) with cost at most $\alpha c(\opt)$ where $c(\opt)$ is the minimum cost of a nowhere-zero $k$-flow (resp. $k$-cut-balanced orientation). We show the following bicriteria approximation for $\operatorname{WNZF}(k)$. 
\begin{theorem}\label{thm:approx-nwz-kflow}
For every finite integer $k\ge 6$ and for $k=\infty$, $\operatorname{WNZF}(k)$ admits a $(6,6)$-approximation. 
\end{theorem}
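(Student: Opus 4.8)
The plan is to exploit the viewpoint, emphasized above, that $\operatorname{WNZF}(k)$ is an asymmetric postman problem with an added orientation constraint, and to build an LP-based algorithm. The relaxation I would use seeks a circulation $x\colon E^+\cup E^-\to\R_{\ge 0}$ on the bidirected graph $\vec G$ (i.e.\ $x(\delta^+_{\vec G}(v))=x(\delta^-_{\vec G}(v))$ for all $v$) that satisfies the covering constraints $x(e^+)+x(e^-)\ge 1$ for every $e\in E$ and, when $k$ is finite, the capacity constraints $x(e^+)+x(e^-)\le k-1$; the objective is $\min\sum_a c(a)x(a)$. Every nowhere-zero $k$-flow $(\vec E,f)$ yields a feasible solution of the same cost (set $x=f$ on $\vec E$ and $x=0$ elsewhere), so the LP optimum lower-bounds $c(\opt)$. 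First I would solve this LP for an optimal fractional circulation $x^*$.

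Next I would round $x^*$ to an integral circulation $g$ on $\vec G$ still satisfying $g(e^+)+g(e^-)\ge 1$ for every edge, losing only a constant factor in cost (and, for finite $k$, keeping $g(e^+)+g(e^-)$ bounded); this is precisely the rounding task for the asymmetric/windy postman, for which a constant-factor guarantee against the LP value is available. Viewing $g$ as a directed Eulerian multigraph covering every edge, I would cancel oppositely-directed flow on each edge, replacing $g$ by $h$ with $h(e^+)=g(e^+)-\min(g(e^+),g(e^-))$ and symmetrically for $e^-$; this preserves the circulation property and does not increase cost, and $h$ is now supported on at most one orientation of each edge. Let $E'$ be the set of edges with $h>0$; then $\vec{E'}:=\supp(h)$ is a union of directed cycles, so $(V,E')$ is Eulerian with Eulerian orientation $\vec{E'}$, and I would replace $h$ on $E'$ by the all-ones function—again a circulation on $\vec{E'}$, and only cheaper since $h\ge 1$ there—so that $h$ is the nowhere-zero $2$-flow equal to $1$ on all of $\vec{E'}$.

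The third step handles the leftover set $F:=E\setminus E'$ (the edges on which $g$ carried equal flow both ways). The target is an integer circulation $\psi$ on $G$ that is nonzero on every edge of $F$, has $|\psi(e)|\le 3k-1$ for all $e$, and costs $O(c(\opt))$; the output is then $\phi:=h+2\psi$. On $E'$, $\phi$ takes odd values (since $h=1$ and $2\psi$ is even), hence is nonzero and of magnitude at most $6k-1$; on $F$, $\phi=2\psi\ne 0$ of magnitude at most $6k-2$; so $\phi$ is a nowhere-zero $6k$-flow (for $k=\infty$ no magnitude bound is needed). To construct $\psi$: on each $2$-edge-connected block of $(V,F)$ use a nowhere-zero $6$-flow of that block—guaranteed by Seymour's theorem and with values at most $5$—and, since $(V,F)$ itself need not be $2$-edge-connected, route a unit of flow around each bridge of $(V,F)$ through a cheap path that is allowed to use edges of $\vec{E'}$, the existence of such closing paths following from $2$-edge-connectivity of $G$ and from $\vec{E'}$ being a union of directed cycles; the directions of the leftover edges are then fixed by $\psi$. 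Finally, $c(\phi)\le c(h)+2c(\psi)$ is charged against $c(x^*)\le c(\opt)$ to obtain the $(6,6)$ guarantee; moreover, the orientation $\vec E$ induced by $\phi$ is $6k$-cut-balanced by Jaeger's theorem, so one may instead return the minimum-cost nowhere-zero $6k$-flow on $\vec E$, computed via a min-cost flow with lower bounds $1$ (and, for finite $k$, upper bounds $6k-1$) on the arcs of $\vec E$, which only decreases the cost.

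The step I expect to be the main obstacle is building the leftover circulation $\psi$—equivalently, extending the fixed Eulerian orientation $\vec{E'}$ to a globally $6k$-cut-balanced orientation of $G$—while keeping its cost within a small constant factor of $c(\opt)$ and all its values below $6k$. The difficulty is exactly what makes these problems hard: $(V,F)$ can fail to be $2$-edge-connected, so one is forced to borrow capacity from $\vec{E'}$ and to commit, for each leftover edge, to a direction consistent with a single circulation. Controlling the cost of these commitments and of the block flows (a cost-weighted, LP-driven refinement of Seymour's $6$-flow argument), controlling the resulting flow values, and tuning the charging so that the factors land at exactly $(6,6)$—and at $6$ for $k=\infty$, where the second coordinate is vacuous—is where the bulk of the technical work lies.
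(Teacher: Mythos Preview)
Your plan starts from the same LP as the paper (their \eqref{eq:nwz-k-flow}), but the step you yourself flag as ``the main obstacle'' is a real gap, and the paper sidesteps it by a structural lemma you do not use. The paper does not round and then repair leftover edges. It proves an extreme-point theorem (Theorem~\ref{thm:APP_half_integral}): every optimal vertex $z^*$ is half-integral; on each edge either both coordinates are integers (exactly one positive, the other zero) or $z^*(e^+)=z^*(e^-)=\tfrac12$; and the integral arcs already form a partial $k$-flow $f$. Hence the leftover set $E_2$ consists exactly of edges on which the LP paid $\tfrac12(c(e^+)+c(e^-))$, i.e.\ symmetrically. The algorithm takes an arbitrary nowhere-zero $6$-flow $g$ on the whole graph and returns the cheaper of $6f+g$ and $6f-g$; averaging the two cancels $g$ on the integral edges and on $E_2$ leaves $\tfrac12 g(e)(c(e^+)+c(e^-))\le 5\cdot(\text{LP cost on }e)$, so $\min\{c(6f+g),c(6f-g)\}\le 6\,c^\top z^*$, while Proposition~\ref{prop:sum-flows} gives the nowhere-zero $6k$ property directly.

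Your route has no such symmetry for free. After cancellation, edges in $F$ satisfy $g(e^+)=g(e^-)\ge 1$, so the \emph{rounded} $g$ paid $\ge c(e^+)+c(e^-)$ there, but this already costs you a rounding factor you never pin down (the windy-postman rounding you cite is for the LP without the capacity constraint $x(e^+)+x(e^-)\le k-1$). More seriously, your construction of $\psi$ is incomplete: a Seymour $6$-flow on the $2$-edge-connected blocks of $(V,F)$ is cost-oblivious, and your bridge-routing step---closing each bridge of $(V,F)$ through $\vec{E'}$---has no cost bound at all ($2$-edge-connectivity of $G$ gives existence of closing paths, not cheapness), nor any control on how many such routes pile up on a single arc of $\vec{E'}$, so neither the cost nor the $6k$ value bound is established. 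The ``cost-weighted, LP-driven refinement of Seymour's $6$-flow argument'' you posit would itself be the theorem; and even granting every step, the accounting $c(\phi)\le c(h)+2c(\psi)$ with the factors above does not land at $(6,6)$. The two missing ideas are precisely the half-integrality of the extreme point (which makes the leftover edges LP-symmetric with no rounding loss) and the $6f\pm g$ averaging trick (which uses one $6$-flow on all of $G$ and thereby avoids block decompositions and bridge routing entirely).
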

We observe that Theorem \ref{thm:approx-nwz-kflow} implies a $6$-approximation for $\operatorname{WNZF}(\infty)$ as stated below. 
\begin{corollary}\label{cor:approx-nwz-flow}
$\operatorname{WNZF}(\infty)$ admits a $6$-approximation. 
\end{corollary}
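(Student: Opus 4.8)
The plan is to derive the $6$-approximation for $\operatorname{WNZF}(\infty)$ directly from the $(6,6)$-approximation for $\operatorname{WNZF}(\infty)$ given by Theorem~\ref{thm:approx-nwz-kflow}, by observing that the bicriteria relaxation is vacuous when $k=\infty$. Recall that an $(\alpha,\beta)$-approximation for $\operatorname{WNZF}(k)$ returns a nowhere-zero $\beta k$-flow of cost at most $\alpha\, c(\opt)$, where $c(\opt)$ is the minimum cost of a nowhere-zero $k$-flow. The key point is that for $k=\infty$ we have $\beta k = \infty$ as well, so the returned object is simply a nowhere-zero flow (a nowhere-zero $k'$-flow for some finite $k'\ge 2$, with no a priori bound on $k'$), which is exactly the class of feasible solutions for $\operatorname{WNZF}(\infty)$. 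Hence the ``$\beta$'' slack in the second criterion disappears, and a $(6,6)$-approximation collapses to an honest $6$-approximation in the usual single-criterion sense.

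Concretely, I would argue as follows. Apply the algorithm from Theorem~\ref{thm:approx-nwz-kflow} with $k=\infty$; it outputs a tuple $(\vec{E},f)$ that is a nowhere-zero $k'$-flow for some finite integer $k'\ge 2$, with $c(f)\le 6\,c(\opt_\infty)$, where $c(\opt_\infty)=\operatorname{WNZF}(\infty)$ is the minimum cost over all nowhere-zero flows. Since $(\vec{E},f)$ is itself a nowhere-zero flow, it is a feasible solution to $\operatorname{WNZF}(\infty)$, and its cost is within a factor $6$ of the optimum by definition of $c(\opt_\infty)$. This is precisely the statement that $\operatorname{WNZF}(\infty)$ admits a $6$-approximation.

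There is essentially no obstacle here; the only thing to be careful about is the bookkeeping of what ``$(6,6)$-approximation for $\operatorname{WNZF}(\infty)$'' means, i.e., to confirm that Theorem~\ref{thm:approx-nwz-kflow}'s guarantee for the $k=\infty$ case is stated against the benchmark $c(\opt_\infty)$ and not against some finite-$k$ benchmark. Assuming (as the theorem asserts) that the $k=\infty$ case is genuinely covered with the cost bound measured relative to $\operatorname{WNZF}(\infty)$, the corollary is immediate. One could also remark that the algorithm in fact produces a nowhere-zero $6$-flow in this case --- indeed, by Seymour's theorem every $2$-edge-connected graph admits a nowhere-zero $6$-flow, so $k'$ can always be taken to be $6$ --- but this stronger structural observation is not needed for the cost guarantee claimed in the corollary.
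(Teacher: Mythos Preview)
Your argument is correct and is exactly the observation the paper makes: Theorem~\ref{thm:approx-nwz-kflow} applied with $k=\infty$ yields a nowhere-zero flow of cost at most $6\,c(\opt_\infty)$, and since any nowhere-zero flow is feasible for $\operatorname{WNZF}(\infty)$, the bicriteria guarantee collapses to a plain $6$-approximation. The paper does not give a more detailed proof than this.
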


Next, we turn to bicriteria approximation for $\operatorname{WCBO}(k)$. We recall that $\operatorname{WCBO}(\infty)$ is equivalent to the min cost strongly connected orientation problem which is solvable in polynomial time \cite{younger1983integer, edmonds1977min}. Hence, we focus on $\operatorname{WCBO}(k)$ for finite integers $k\ge 3$. 
Moreover, Theorem \ref{thm:hardness-WNZF-WCBO} implies that $\operatorname{WCBO}(k)$ is inapproximable for every finite integer $k\ge 3$. We complement these results by showing the following bicriteria approximation. 
\begin{theorem}\label{thm:approx-cut-balanced}
For every integer $k\ge 6$, $\operatorname{WCBO}(k)$ admits a $(k,6)$-approximation. 
\end{theorem}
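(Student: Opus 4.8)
The idea is to reduce $\operatorname{WCBO}(k)$ to the already-solved $\operatorname{WNZF}(k)$ via Theorem~\ref{thm:approx-nwz-kflow}, at the price of losing a factor of $k$ in the cost. First I would observe the following ``scaling'' trick relating orientation cost to flow cost. Suppose $\vec{E}^*$ is an optimal $k$-cut-balanced orientation of cost $c(\vec{E}^*)=c(\opt)$. By the equivalence of $k$-cut-balancedness and inducing a nowhere-zero $k$-flow~\cite{jaeger1976balanced}, there is a function $f^*:\vec{E}^*\to\{1,\dots,k-1\}$ so that $(\vec{E}^*,f^*)$ is a nowhere-zero $k$-flow. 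Since every arc carries at most $k-1$ units, $c(f^*)=\sum_{e\in\vec{E}^*}c(e)f^*(e)\le (k-1)\,c(\vec{E}^*)\le k\cdot c(\opt)$. Hence the minimum cost of a nowhere-zero $k$-flow in the same instance (with the same costs) is at most $k\cdot c(\opt)$, where $\opt$ is the optimum of $\operatorname{WCBO}(k)$.

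Next I would run the $(6,6)$-approximation for $\operatorname{WNZF}(k)$ guaranteed by Theorem~\ref{thm:approx-nwz-kflow} (valid for $k\ge 6$). This returns a nowhere-zero $6k$-flow $(\vec{E},f)$ with $c(f)\le 6\cdot(\text{min cost NZ $k$-flow})\le 6k\cdot c(\opt)$. Now I simply discard the flow values and keep the orientation $\vec{E}$. Because $(\vec{E},f)$ is a nowhere-zero $6k$-flow, $\vec{E}$ induces a nowhere-zero $6k$-flow, hence by the Jaeger equivalence $\vec{E}$ is $6k$-cut-balanced. For the cost, since $f(e)\ge 1$ for every $e\in\vec{E}$, we have $c(\vec{E})=\sum_{e\in\vec{E}}c(e)\le \sum_{e\in\vec{E}}c(e)f(e)=c(f)\le 6k\cdot c(\opt)$. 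Thus $\vec{E}$ is a $6k$-cut-balanced orientation of cost at most $6k\cdot c(\opt)$; wait, I need cost at most $k\cdot c(\opt)$ to match the claimed $(k,6)$-guarantee.

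The previous paragraph only gives a $(6k,6)$-approximation, so the real work is to shave the extra factor of $6$ off the cost. The clean fix is to apply the $(6,6)$-approximation not to the cost $c$ but to an instance where we are comparing against the flow optimum directly, and observe that the $\operatorname{WNZF}(k)$ algorithm's bound is $c(f)\le 6\cdot c(\opt_{\operatorname{WNZF}(k)})$. Combined with $c(\opt_{\operatorname{WNZF}(k)})\le k\cdot c(\opt_{\operatorname{WCBO}(k)})$ from the first paragraph this gives $c(f)\le 6k\cdot c(\opt_{\operatorname{WCBO}(k)})$, and then $c(\vec E)\le c(f)$ as above. To get down to $k\cdot c(\opt)$, I expect one must instead run the $\operatorname{WNZF}(k)$ algorithm and use that it produces an orientation whose \emph{orientation cost} (not flow cost) is already within $6$ of the NZF optimum — i.e. inspect the internals of the proof of Theorem~\ref{thm:approx-nwz-kflow}, which is LP-based (the excerpt says it views the problem as an asymmetric postman problem with orientation constraints); the LP lower bound on $c(f)$ is at least the LP lower bound on $c(\vec E)$, and the natural accounting bounds $c(\vec E)\le (\text{LP for orientation})\cdot 6 \le \text{(cost of an LP solution derived from any $k$-cut-balanced orientation)}\cdot 6$. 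Since a $k$-cut-balanced orientation $\vec E^*$ of cost $c(\opt)$ itself yields a feasible fractional solution of orientation-cost $c(\opt)$ (put one unit on each arc of $\vec E^*$, which is a valid fractional circulation-support), the LP value is at most $c(\opt)$, hence $c(\vec E)\le 6\,c(\opt)$ — still a factor $6$, not $k$. The honest statement of the bicriteria guarantee is that the \emph{congestion} blows up by $6$ while the \emph{cost} blows up by a factor that the authors bound by $k$; so I anticipate the intended proof simply takes the optimal $k$-cut-balanced orientation, notes it induces a NZF of flow-cost $\le (k-1)c(\opt)<k\,c(\opt)$, feeds the instance to the $\operatorname{WNZF}(k)$ $(6,6)$-approximation, and \emph{keeps the returned orientation}, whose orientation-cost is at most its flow-cost $\le 6\cdot(k-1)c(\opt)$. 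The main obstacle — and the place I would look most carefully at the proof of Theorem~\ref{thm:approx-nwz-kflow} — is exactly this discrepancy: reconciling whether the cost factor is genuinely $k$ or $6(k-1)$, which hinges on whether the $\operatorname{WNZF}(k)$ algorithm's cost guarantee is against the flow optimum or can be strengthened to bound the returned orientation's cost against the orientation optimum.
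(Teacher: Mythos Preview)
Your proposal does not prove the claimed $(k,6)$ bound; it only recovers the weaker $(6(k-1),6)$-approximation, and you correctly diagnose this yourself. The chain of inequalities $c(\vec{E})\le c(f)\le 6\cdot c(\opt_{\operatorname{WNZF}(k)})\le 6(k-1)\cdot c(\opt_{\operatorname{WCBO}(k)})$ is exactly the content of Lemma~\ref{lemma:approx_translate} applied to Theorem~\ref{thm:approx-nwz-kflow}, and the paper explicitly states that this gives only $(6(k-1),6)$, which Theorem~\ref{thm:approx-cut-balanced} is meant to improve. Your various attempts to shave the factor $6$ by ``inspecting the internals'' of Theorem~\ref{thm:approx-nwz-kflow} do not work: the half-integrality of the $\operatorname{WNZF}(k)$ LP gives no useful control over the orientation cost relative to the $\operatorname{WCBO}(k)$ optimum, and the extra factor is genuine for that route.

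The paper's proof takes a different path that you have not found. It works directly with the LP relaxation \eqref{eq:k-cut_balanced} for $\operatorname{WCBO}(k)$ in the orientation variables $y$, and the crucial structural fact (Lemma~\ref{lemma:1/k-integral}) is that its extreme points are $1/k$-integral, proved by recognising the projection onto $y|_{E^+}$ as a submodular flow polyhedron. Given an extreme point optimum $y^*$, the arcs with $y^*(e)=1$ form a partial orientation $\vec{E}_1$ satisfying $|\delta_{\vec{E}_1}^+(U)|\le\frac{k-1}{k}|\delta_E(U)|$, which by a Hoffman-circulation argument (Lemma~\ref{lemma:complete_k-balanced}) can be extended to a partial $k$-cut-balanced orientation $\vec{F}$, hence to a partial $k$-flow $f$. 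Combining with an arbitrary nowhere-zero $6$-flow $g$ via Proposition~\ref{prop:sum-flows} gives a nowhere-zero $6k$-flow $6f+g$, whose orientation $\vec{E}$ is $6k$-cut-balanced. The cost bound $c(\vec{E})\le k\cdot c^\top y^*$ then follows because on $\vec{E}_1$ one pays $c(e)=c(e)y^*(e)$, while on the remaining arcs $y^*(e)\in(0,1)$ is $1/k$-integral, hence $y^*(e)\ge 1/k$ and $c(e)\le k\,c(e)y^*(e)$. The $1/k$-integrality is the missing idea: it is what lets you charge each non-integral arc to $k$ times its LP mass rather than to the flow value, and there is no way to extract this from the black-box use of Theorem~\ref{thm:approx-nwz-kflow}.
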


In most applications of nowhere-zero $k$-flows and $k$-cut-balanced orientations, we note that $k$ is a small constant. In particular, Theorems \ref{thm:approx-nwz-kflow} and \ref{thm:approx-cut-balanced} imply $(6,6)$-approximation algorithms for min cost nowhere-zero $6$-flow and min cost $6$-cut-balanced orientation respectively. 

Next, we turn to the symmetric cost variants of the problems. 
As mentioned earlier, the symmetric cost variant of $\operatorname{WCBO}(k)$ is equivalent to the feasibility variant, so we focus only on the symmetric cost variant of $\operatorname{WNZF}(k)$. We recall that $\operatorname{SWNZF}(\infty)$ is closely related to the symmetric postman problem: it can be viewed as an orientation constrained min-cost symmetric postman problem. While the min-cost symmetric postman problem is polynomial-time solvable, we show that $\operatorname{SWNZF}(\infty)$ is NP-hard even for unit costs. We show the following hardness and inapproximability results for $\operatorname{SWNZF}(k)$. 
\begin{theorem}\label{thm:complexity-SWNZF-k}
    For every finite integer $k\ge 3$ and for $k=\infty$, $\operatorname{SWNZF}(k)$ for unit costs is NP-hard. 
    Moreover, for $k=3$ and $k=4$, $\operatorname{SWNZF}(k)$ is NP-hard to approximate within any finite factor and $\operatorname{SWNZF}(5)$ is NP-hard to approximate within any finite factor assuming Tutte's $5$-flow conjecture is false.
\end{theorem}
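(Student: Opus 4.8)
The plan is to obtain all claims by reductions, separating the two parts. For the inapproximability claims (second paragraph), I would reuse the mechanism behind Theorem~\ref{thm:hardness-WNZF-WCBO}: for $k\in\{3,4,5\}$ the \emph{feasibility} version of nowhere-zero $k$-flow is NP-complete (for $k=5$ conditionally, assuming Tutte's $5$-flow conjecture is false, via Kochol~\cite{kochol1998hypothetical}). The point is that feasibility hardness already lives in the symmetric-cost world, since feasibility is cost-oblivious. So take an instance $G$ of the $k$-flow feasibility problem and set all costs to be unit (hence symmetric). If $G$ has no nowhere-zero $k$-flow, then $\operatorname{SWNZF}(k)$ has no feasible solution at all, and any finite-factor approximation algorithm would have to report infeasibility; if $G$ has one, the optimum is finite. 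Thus a finite-factor approximation for $\operatorname{SWNZF}(k)$ decides feasibility of nowhere-zero $k$-flow, giving the inapproximability for $k=3,4$ unconditionally and for $k=5$ under the stated conjecture. I should double check the convention on what the algorithm outputs on infeasible instances, but the standard reading is that a finite-approximation algorithm implicitly certifies feasibility, exactly as the excerpt already notes for $\operatorname{WNZF}(k)$.

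For the first claim---NP-hardness of $\operatorname{SWNZF}(k)$ with unit costs for every finite $k\ge 3$ and for $k=\infty$---the above argument already handles $k\in\{3,4,5\}$ (and $k=5$ only conditionally, which is fine since the statement only claims unconditional hardness; for the unconditional $k=5$ case I would instead invoke the same construction used for $k\ge 6$). For $k\ge 6$ and $k=\infty$ every $2$-edge-connected graph is feasible, so hardness must come from the \emph{cost}, not feasibility; with unit costs, $c(f)=\sum_{e\in\vec E}f(e)$, so we are minimizing the total flow value over all nowhere-zero $k$-flows (equivalently, over all nowhere-zero flows when $k=\infty$, since a minimum-total-value nowhere-zero flow uses values bounded by a constant). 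This is essentially an orientation-constrained minimum-total-flow problem. I would reduce from the edge-restricted mixed postman problem, or more directly from a variant of the asymmetric postman / minimum Eulerian subgraph-type problem: intuitively, minimizing $\sum f(e)$ forces the flow to behave like a union of few short cycles, and one wants to encode a known NP-hard covering/TSP-like structure. A cleaner route may be a direct reduction showing that deciding whether $\operatorname{SWNZF}(\infty)\le B$ for unit costs is NP-hard---e.g.\ from Hamiltonicity of cubic graphs or from $3$-edge-colorability type gadgets---and then observing that on the constructed instances the optimum is the same whether $k$ is $\infty$ or any fixed $k\ge 6$, because the optimal flow only uses small values.

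The main obstacle is precisely this last part: constructing a gadget where \emph{the orientation constraint} (flow positive on exactly one of $e^+,e^-$) combines with \emph{minimizing total flow} to encode an NP-hard problem, while keeping the graph $2$-edge-connected and the costs unit. Without orientation constraints this problem is polynomial (min-cost symmetric postman / min-cost Eulerian orientation), so the hardness genuinely has to exploit the ``exactly one direction'' feature, and I expect the gadget to need ``expensive'' two-edge-connected pieces that force a choice of direction and whose cheap completion corresponds to a yes-instance of the source problem. A secondary, more routine obstacle is verifying that in the reduction the optimal value is insensitive to $k$ for all $k\ge 6$ (and $k=\infty$), so that one construction covers every case at once; this should follow from a short argument that an optimal min-total-value nowhere-zero flow can be taken to be a sum of a bounded number of unit circulations, hence uses values well below $6$.
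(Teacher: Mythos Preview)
Your handling of the second paragraph (inapproximability for $k\in\{3,4,5\}$) is essentially the paper's argument: feasibility of nowhere-zero $k$-flow is NP-complete for these $k$ (for $k=5$ conditionally), and since feasibility is cost-oblivious, any finite-factor approximation for $\operatorname{SWNZF}(k)$ would decide feasibility. The paper sets $c\equiv 0$ rather than $c\equiv 1$, but either works.

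For the first paragraph, however, there is a genuine gap, and you identify it yourself: the reduction is missing. You list candidate source problems (edge-restricted mixed postman, Hamiltonicity of cubic graphs, $3$-edge-colorability) but give no gadget, and the paper's construction is not one that drops out of any of these. The paper reduces from \textsc{NAE3SAT}. For each variable $x_i$ it builds an even cycle $R_i$ whose odd-indexed vertices represent positive occurrences and even-indexed vertices represent negative occurrences; each such vertex has a single edge to its clause node (or to a dummy $v_0$), so it has degree~$3$. By flow conservation at a degree-$3$ vertex, exactly one incident arc must carry value $\ge 2$, and a short counting argument shows the total flow value is at least $|E|+\sum_i d_i$, with equality forcing the value-$2$ arcs to be exactly the odd or exactly the even edges of each $R_i$; that binary choice encodes the truth value of $x_i$, and the single edge $(v_j,v_0)$ at each clause node forces not-all-equal.

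Your plan for transferring the result across all $k\ge 3$ is also too loose. The claim that ``an optimal min-total-value nowhere-zero flow can be taken to be a sum of a bounded number of unit circulations, hence uses values well below $6$'' is not a general fact you can appeal to; in the paper it is an artifact of the specific gadget. The paper's transfer argument is instead: in the YES direction the constructed flow is already a nowhere-zero $3$-flow, and in the NO direction the lower bound $|E|+\sum_i d_i$ holds for \emph{every} nowhere-zero flow regardless of $k$. Hence one reduction simultaneously proves hardness of $\operatorname{SWNZF}(k)$ for every $k\ge 3$ and $k=\infty$.
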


Motivated by the hardness results in Theorem \ref{thm:complexity-SWNZF-k}, we turn to approximation algorithms for $\operatorname{SWNZF}(k)$. If the costs are symmetric, we observe that every nowhere-zero $6$-flow $f$ is a $5$-approximation for $\operatorname{SWNZF}(k)$ for every finite integer $k\ge 6$ and for $k=\infty$: this is because the flow value of $f$ on each edge is at most $5$, while the min cost nowhere-zero $k$-flow has to send at least one unit flow on each edge. We improve on this naive $5$-approximation for $\operatorname{SWNZF}(k)$ to achieve an approximation factor of $3$. We refer the reader to Table \ref{table:main} for a summary of our results.
\begin{theorem}\label{thm:approx-SWNZF-k}
For every finite integer $k\ge 6$ and for $k=\infty$, $\operatorname{SWNZF}(k)$ admits a $3$-approximation. 
\end{theorem}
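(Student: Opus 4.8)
The plan is to leverage the same extreme-point structure of the LP relaxation of $\operatorname{WNZF}(k)$ that underlies Theorem \ref{thm:approx-nwz-kflow}, but to exploit the symmetry of the costs more carefully so that the blow-up in the flow value is absorbed \emph{only} in the multiplicity parameter and the cost overhead drops from $6$ to $3$. First I would write down the natural LP relaxation: a fractional flow $x: E^+\cup E^- \to [0,1]$ (scaled suitably) with $x(e^+)+x(e^-)\le 1$ for every edge $e$, flow conservation at every vertex, and the nowhere-zero lower bound $x(e^+)+x(e^-)\ge \epsilon$ (or a cut-covering constraint modelling that at least one orientation carries flow). Since the costs are symmetric, the objective $\sum_e c(e)\bigl(x(e^+)+x(e^-)\bigr)$ depends only on $y_e := x(e^+)+x(e^-)$, i.e.\ on the total ``traffic'' across edge $e$, not on which direction it flows. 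I would argue that an extreme point $x^*$ is half-integral in the same sense as before: the support decomposes into integral edges (value $1$ on one side, $0$ on the other) and fractional edges (value $\tfrac12$ on both sides), and the integral arcs form a ``partial'' flow $f$ obeying conservation.

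The key new idea for the symmetric case is that on the fractional edges we do \emph{not} need to pick a canonical orientation and then correct it with $\pm g$; instead, symmetry lets us account for cost without the factor-of-$2$ loss from choosing the wrong direction. Concretely, let $g$ be an arbitrary nowhere-zero $6$-flow of $G$ (which exists and is computable in polynomial time by Seymour's theorem and \cite{younger1983integer, devos2023short}), with $1\le g(e)\le 5$ on every edge. Composing, $2f+g$ and $2f-g$ are both nowhere-zero flows supported everywhere (the integral part contributes an even amount $\ge 0$, while $g$ contributes a nonzero odd-or-small amount that cannot be cancelled), with flow values bounded by $2(k-1)+5 < 3k$ for $k\ge 6$, hence nowhere-zero $3k$-flows; for $k=\infty$ we simply get a nowhere-zero flow. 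The algorithm returns the cheaper of $2f+g$ and $2f-g$. For the cost bound: on an integral edge $e$, the returned flow has value at most $2\cdot 1\cdot(k-1)+5$ but $c(\opt)$ already pays $c(e)\cdot x^*_e$-worth, and more carefully the average of $c(2f+g)$ and $c(2f-g)$ over that edge is $2 c(e) f(e) \le 2 c(e) x^*(\vec e)$ plus the $g$-term; on a fractional edge, $c(e)(x^*(e^+)+x^*(e^-)) = c(e)$, and symmetry means the orientation we ultimately use costs exactly $c(e)g(e)\le 5c(e)$ — here I would instead route the $g$-flow but charge it against the LP, using that $\sum_{e\text{ frac}} c(e) \le \sum_{e\text{ frac}} c(e)\,(x^*(e^+)+x^*(e^-)) = c(x^*)|_{\text{frac}}$, and a short averaging/accounting argument combining the integral and fractional contributions yields total cost $\le 3\, c(x^*) \le 3\, c(\opt)$.

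The main obstacle I anticipate is the cost accounting on the fractional edges: naively the nowhere-zero $6$-flow $g$ can put up to $5$ units on a fractional edge whose LP value is only $\tfrac12$, so a direct charge gives a factor of $10$, not $3$. Getting down to $3$ requires using the averaging between $2f+g$ and $2f-g$ together with symmetry in a sharper way — e.g.\ observing that on integral edges the two candidates have costs $c(e)(2f(e)\pm g(e))$ whose average is $2c(e)f(e)$, so the cheaper one pays at most that plus the fractional slack, and then bounding the fractional contribution separately by $c$ restricted to fractional edges, which is at most $2$ times the LP contribution there since each fractional edge has $x^*(e^+)+x^*(e^-)=1$ wait that gives factor depending on $g$ — so the real work is to show the min of the two candidates, restricted to fractional edges, costs at most $2\sum_{\text{frac}} c(e)$, i.e.\ the ``$\pm g$'' trick saves a factor of roughly $5/2$ on those edges via a parity/cancellation argument. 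I would also need to double-check the edge cases $k=6$ (tightness of $2(k-1)+5 = 17 < 18 = 3k$) and confirm that $2f\pm g$ never vanishes on any edge, which follows since on integral edges $2f(e)\ge 0$ is even and $g(e)\ge 1$, and on fractional edges $f(e)=0$ so the value is $\pm g(e)\ne 0$; hence nowhere-zero is automatic and only the cost bound needs the delicate argument.
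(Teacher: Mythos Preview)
Your approach has a fatal gap: the composition $2f\pm g$ need not be nowhere-zero. On an integral edge where the partial $k$-flow $f$ has value $1$ and the auxiliary $6$-flow $g$ runs in the opposite direction with value $2$, you get $2f(e)+g(e)=2-2=0$; likewise $2f(e)-g(e)=0$ when $g$ runs in the same direction with value $2$. The parity remark (``$2f(e)$ is even and $g(e)\ge 1$'') does not rescue this, since $g(e)$ can itself be even. The whole point of the scaling factor $6$ in Proposition~\ref{prop:sum-flows} is precisely that $|6f(e)|\ge 6>5\ge|g(e)|$ forces the sign of $f$ to survive; scaling by $2$ destroys this guarantee. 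Moreover, even ignoring the nowhere-zero failure, your output would be a nowhere-zero $3k$-flow rather than a $k$-flow, so at best you would get a bicriteria $(3,3)$ result, not the unicriteria $3$-approximation the theorem asserts. And the cost accounting does not close either: on both integral and fractional edges the average $\tfrac12(|2f(e)+g(e)|+|2f(e)-g(e)|)=\max(2f(e),|g(e)|)$ can be $5$ against an LP payment of $f(e)=1$ (integral) or $z^*(e^+)+z^*(e^-)=1$ (fractional), so the averaging trick only yields a factor $5$, not $3$.

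The paper's proof takes a completely different, combinatorial route with no LP at all. It calls a nowhere-zero $6$-flow $(\vec{E},f)$ \emph{locally optimal} if no directed cycle $C\subseteq\vec{E}$ satisfies $\sum_{e\in C}c(e)f(e)>\sum_{e\in C}c(e)(6-f(e))$, equivalently $\sum_{e\in C}c(e)f(e)\le 3\sum_{e\in C}c(e)$ for every such $C$. Decomposing $f$ into directed cycles and applying this inequality plus Cauchy--Schwarz gives $c(f)\le 3\sum_e c(e)\le 3\,c(\opt)$. A locally optimal $6$-flow is then found in strongly polynomial time by recasting the cycle-cancelling process as a single min-cost circulation problem. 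Since the output is a nowhere-zero $6$-flow, it is automatically a feasible nowhere-zero $k$-flow for every $k\ge 6$ (and for $k=\infty$), which is what makes the guarantee unicriteria.
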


\begin{table}[htbp]
\begin{center}
\begin{tabular}{|c|c|c|c|c|}
\hline
\textbf{Integer $k$} & \textbf{Feasibility}& \textbf{WNZF$(k)$} & \textbf{WCBO$(k)$} & \textbf{SWNZF$(k)$}\\
    & \textbf{NZF$(k)$} $\&$ \textbf{CBO$(k)$} & & &  
\\ \hline
2  & Poly-time [folklore] & Poly-time \cite{win1989windy} & Poly-time \cite{win1989windy} & Poly-time \cite{win1989windy} 
\\ \hline
3, 4 & NP-complete & No finite approx & No finite approx & No finite approx \\
& \cite{jaeger1988nowhere, tutte1954contribution, holyer1981np} & & &
\\ \hline
5  & NP-complete & No finite approx & No finite approx & No finite approx \\
(assuming Tutte's & \cite{kochol1998hypothetical}  & & & \\
$5$-flow conjecture & & & &\\ is false) & & & & 
\\ \hline
finite $k\ge 6$ & Always feasible & No finite approx  & No finite approx & NP-hard for unit costs\\
 & \cite{seymour1981nowhere, younger1983integer}  & (Theorem \ref{thm:hardness-WNZF-WCBO}) & (Theorem \ref{thm:hardness-WNZF-WCBO}) & (Theorem \ref{thm:complexity-SWNZF-k}) \\
 & & $(6,6)$-approx & $(k,6)$-approx & $3$-approx \\
 & & (Theorem \ref{thm:approx-nwz-kflow}) & (Theorem \ref{thm:approx-cut-balanced}) & (Theorem \ref{thm:approx-SWNZF-k}) 
\\ \hline
$\infty$ & Always feasible & & Poly-time & NP-hard for unit costs\\
 & \cite{seymour1981nowhere, younger1983integer}  & & \cite{lucchesi1978minimax,edmonds1977min} & (Theorem \ref{thm:complexity-SWNZF-k}) \\
 & & $6$-approx & & $3$-approx \\
 & & (Corollary \ref{cor:approx-nwz-flow}) & & (Theorem \ref{thm:approx-SWNZF-k})
\\ \hline
\end{tabular}
\caption{Hardness and approximation algorithms for $\operatorname{WNZF}(k)$, $\operatorname{WCBO}(k)$ and $\operatorname{SWNZF}(k)$. 
Feasibility $\operatorname{NZF}(k)$ and $\operatorname{CBO}(k)$ refer to the feasibility versions of $\operatorname{WNZF}(k)$ and $\operatorname{WCBO}(k)$ respectively.
All inapproximability results are assuming $P\neq NP$.}
\label{table:main}
\end{center}
\end{table}

\vspace{-3em}
\subsection{Techniques}
We observe that the equivalence of nowhere-zero $k$-flows and $k$-cut-balanced orientations~\cite{jaeger1976balanced} implies a reduction between $\operatorname{WNZF}(k)$ and $\operatorname{WCBO}(k)$ with a small loss in approximation factor--see Lemma \ref{lemma:approx_translate}. A consequence of this reduction is that NP-hardness of approximating $\operatorname{WNZF}(k)$ within any finite factor implies the same for $\operatorname{WCBO}(k)$ and vice-versa. With this consequence, we observe that proving Theorem \ref{thm:hardness-WNZF-WCBO} reduces to showing that $\operatorname{WCBO}(k)$ does not admit a finite approximation. We prove the latter by a reduction from the satisfiability problem. We prove the hardness result for $\operatorname{SWNZF}(k)$ mentioned in Theorem \ref{thm:complexity-SWNZF-k} by a reduction from Not-All-Equal-3-SAT. Our reductions are based on careful choice of gadgets.  

We now discuss the techniques underlying our algorithmic results. For $\operatorname{WNZF}(k)$, we write an integer program and show extreme points properties of its LP relaxation. In particular, let $x^*\in \mathbb{R}^{E^+\cup E^-}$ be an extreme point optimum solution to the LP relaxation. We show that $x^*$ is half-integral. We prove that the edges of the undirected graph can be partitioned into integral and non-integral edges: An integral edge has $x^*$ being integral in both directions; a non-integral edge has $x^*$ equal to $\frac{1}{2}$ in both directions. We show that the integral arcs in the solution already form a `partial' $k$-flow $f$ (that is conserved at every node but may be zero on some edges). Since we are interested in a nowhere-zero flow, we need to send flow along non-integral edges. 
To send flow on the non-integral edges, we use an arbitrary nowhere-zero $6$-flow $g$ of the whole graph which can be found in polynomial time. We then use the observation that the composed flow $6f+g$ is nonzero in every edge and has all flow values at most $6k-1$ to conclude that $6f+g$ is a nowhere-zero $6k$-flow. For the same reason, $6f-g$ is also a nowhere-zero $6k$-flow. Our algorithm returns the smaller cost one among $6f+g$ and $6f-g$. Finally since the non-integral edges have LP values equal to $\frac{1}{2}$ in both directions in the extreme point $x^*$, we can bound the approximation factor of the returned nowhere-zero $6k$-flow relative to the optimum objective value of the LP-relaxation.

For $\operatorname{WCBO}(k)$, the reduction mentioned in the first paragraph above (and detailed in Lemma \ref{lemma:approx_translate}) and Theorem \ref{thm:approx-nwz-kflow} imply a $(6(k-1),6)$-approximation. We improve on the factors to achieve a $(k,6)$-approximation in Theorem \ref{thm:approx-cut-balanced}. For this, 
we write an integer program and study its LP-relaxation. In particular, we show that the LP-relaxation is $1/k$-integral via a connection to the theory of submodular flows. The integral arcs of this LP-relaxation no longer form a `partial' $k$-flow. However, we show that we can always complete the orientation given by the integral arcs into a $k$-cut-balanced orientation, which induces a partial $k$-flow $f$. Thus, we can use a similar approach as for $\operatorname{WNZF}(k)$ to obtain a $6k$-cut-balanced orientation with cost at most $k$ times the cost of the LP-relaxation.

For $\operatorname{SWNZF}(k)$, we achieve a unicriteria approximation via combinatorial techniques. Our algorithm finds a \emph{locally optimal} nowhere-zero $6$-flow $(\vec{E},f)$, that is, a nowhere-zero $6$-flow whose cost cannot be reduced by pushing $6$ units of flow along any directed cycle in $\vec{E}$. 
We show that a locally optimal nowhere-zero $6$-flow gives a $3$-approximation to $\operatorname{SWNZF}(k)$ for every finite integer $k\geq 6$ and for $k=\infty$.
Next, we turn to the question of existence and an efficient algorithm to find a locally optimal nowhere-zero $6$-flow. 
We show that for an arbitrary nowhere-zero $6$-flow $f$, there is a locally optimal nowhere-zero $6$-flow $f'$ that can be obtained starting from $f$ and repeatedly pushing $6$ units of flow along the reverse directions of directed cycles; this proves the existence of a locally optimal nowhere-zero $6$-flow. We phrase the problem of finding such a locally optimal nowhere-zero $6$-flow via such cycle augmentations from an arbitrary $f$ as a min-cost circulation problem. This can be solved in polynomial time, and its optimum can be used to recover a locally optimal nowhere-zero $6$-flow. 

\vspace{1mm}
\noindent \textbf{Organization.}
In Section \ref{sec:basic}, we setup some notation, present basic properties of nowhere-zero $k$-flows, and show the equivalence of nowhere-zero $k$-flows and $k$-cut-balanced orientations. In Section \ref{sec:reduction}, we give the reduction between approximation algorithms for $\operatorname{WNZF}(k)$ and $\operatorname{WCBO}(k)$. In Section \ref{sec:complexity-WNZF+WCBO}, we prove the hardness of $\operatorname{WNZF}(k)$ and $\operatorname{WCBO}(k)$. In Section \ref{sec:algo-WNZF+WCBO}, we give LP relaxations and design bicriteria approximation algorithms for $\operatorname{WNZF}(k)$ and $\operatorname{WCBO}(k)$. In Section \ref{sec:SWNZF}, we prove the hardness and give unicriteria approximation algorithms for $\operatorname{SWNZF}(k)$.

\section{Preliminaries}\label{sec:basic}
Let $G=(V, E)$ be an undirected graph. Let $\vec{G}=(V,E^+\cup E^-)$ denote its bidirected graph. A \emph{partial orientation} is an orientation $\vec{F}$ on a subset $F\subseteq E$ of edges. A \emph{partial $k$-flow} (or \emph{$k$-flow}) is a partial orientation $\vec{F}$ and a function $f:\vec{F} \rightarrow \{1,2,...,k-1\}$ such that $f(\delta_{\vec{F}}^+(v))=f(\delta_{\vec{F}}^-(v))$. Denote the reverse orientation of an arc $e$ by $e^{-1}$. For a partial orientation $\vec{F}$, denote the reverse orientation by $\cev{F}$. Given a $k$-flow $(\vec{F},f)$, we can extend the domain of the function $f$ to the arcs $E^+\cup E^-$ of the bidirected graph $\vec{G}$ by defining $f(e):=-f(e^{-1}) \ \forall e\in \cev{F}$ and $f(e):=0 \ \forall e\notin \vec{F}\cup \cev{F}$. We call the resulting extended function the \emph{extension} of the partial $k$-flow $f$ to $E^+\cup E^-$. We note that the extension of a partial $k$-flow satisfies flow conservation. This gives us the following equivalent definition of a $k$-flow: a $k$-flow is a function $f:E^+\cup E^-\rightarrow \{0, \pm 1,...,\pm (k-1)\}$ such that $f(e^+)=-f(e^-)\ \forall e\in E$ and $f(\delta_{\vec{G}}^+(v))=f(\delta_{\vec{G}}^-(v))$. A $k$-flow $f$ is a nowhere-zero $k$-flow if $f(e)\neq 0\ \forall e\in E^+\cup E^-$. For a $k$-flow $f:E^+\cup E^-\rightarrow \{0, \pm 1,...,\pm (k-1)\}$, we denote $\supp(f):=\{e\in E: f(e^+)\neq 0\}$ as the edges oriented by $f$ and $\supp^+(f):=\{e\in E^+\cup E^-: f(e)>0\}$ as the partial orientation associated with $f$. One can verify the two definitions of (nowhere-zero) $k$-flows are equivalent by letting $\vec{F}=\supp^+(f)$. 
Due to the equivalence of the two definitions, we will sometimes refer to a $k$-flow $(\vec{F}, f)$ by the extension $f:E^+\cup E^-\rightarrow \{0, \pm 1,...,\pm (k-1)\}$, omitting the orientation $\vec{F}$.

Equipped with this alternative definition, we can talk about negation, scaling and summation of $k$-flows. Let $G=(V, E)$ be an undirected graph. For a $k$-flow $(\vec{F},f)$, $-f$ is defined as the negation of the extension of $f$ to $E^+\cup E^-$. We observe that $-f$ is also a $k$-flow and moreover, $\supp^+(-f)=\cev{F}$. Let $(\vec{E}_1,f_1)$ be a $k_1$-flow and $(\vec{E}_2,f_2)$ be a $k_2$-flow of $G$ for some subsets $E_1,E_2\subseteq E$. The sum of the two flows $f=f_1+f_2$ is defined as the sum of the extensions of $f_1$ and $f_2$ to $E^+\cup E^-$. Let $\vec{E}:=\supp^+(f)$ be the partial orientation associated with $f$. We note that $f$ may not be a nowhere-zero flow even if $E_1\cup E_2=E$ because of flow cancellation between $f_1$ and $f_2$. 
However, given $E_1\cup E_2=E$, if we scale $f_1$ by a factor of $k_2$, the resulting flow $f=k_2 f_1+f_2$ is nowhere-zero. This observation has been very useful in constructing nowhere-zero $k$-flows in general graphs (e.g. \cite{jaeger1976nowhere,jaeger1979flows,seymour1981nowhere}). We summarize it in the following proposition and give its proof for completeness.

\begin{prop}\label{prop:sum-flows}
    Let $G=(V,E)$ be an undirected graph and $E_1,E_2\subseteq E$ be such that $E_1\cup E_2=E$. Let $(\vec{E}_1,f_1)$ be a $k_1$-flow and $(\vec{E}_2,f_2)$ be a $k_2$-flow. Then, $f=k_2f_1+f_2$ is a nowhere-zero $k_1k_2$-flow. Moreover, $\vec{E}:=\supp^+(f)$ has the same orientation as $\vec{E}_1$ on $E_1$ and the same orientation as $\vec{E}_2$ on $E_2\setminus E_1$.
\end{prop}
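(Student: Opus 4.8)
The plan is to verify three things in turn: that $f = k_2 f_1 + f_2$ is a valid $k_1 k_2$-flow (i.e., it satisfies flow conservation and its values lie in the right range), that it is nowhere-zero, and that the associated partial orientation $\supp^+(f)$ agrees with $\vec{E}_1$ on $E_1$ and with $\vec{E}_2$ on $E_2 \setminus E_1$. Flow conservation is immediate from the preliminaries: the extensions of $f_1$ and $f_2$ to $E^+ \cup E^-$ each satisfy conservation at every vertex, scaling by $k_2$ preserves this, and a sum of conserving functions conserves. Likewise $f(e^+) = -f(e^-)$ for every $e \in E$ holds because it holds for the extensions of $f_1$ and $f_2$ separately and is preserved under scaling and addition.

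Next I would establish the nowhere-zero property, which is the crux of the argument. Fix an edge $e \in E$; I want $f(e^+) \neq 0$. Since $E_1 \cup E_2 = E$, either $e \in E_1$ or $e \in E_2$. Consider the case $e \in E_1$ first: then $f_1(e^+) \in \{\pm 1, \dots, \pm(k_1-1)\}$ is nonzero, so $k_2 f_1(e^+)$ is a nonzero multiple of $k_2$, hence $|k_2 f_1(e^+)| \geq k_2$; meanwhile $|f_2(e^+)| \leq k_2 - 1 < k_2$, so $k_2 f_1(e^+) + f_2(e^+)$ cannot be zero (adding something of absolute value strictly less than $k_2$ to a nonzero multiple of $k_2$ stays off zero). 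If instead $e \notin E_1$ but $e \in E_2$, then $f_1(e^+) = 0$, so $f(e^+) = f_2(e^+) \neq 0$. Either way $f(e^+) \neq 0$. The same bookkeeping also pins down the magnitude: $|f(e^+)| \leq k_2|f_1(e^+)| + |f_2(e^+)| \leq k_2(k_1 - 1) + (k_2 - 1) = k_1 k_2 - 1$, so $f$ takes values in $\{0, \pm 1, \dots, \pm(k_1 k_2 - 1)\}$, and combined with the nowhere-zero property just shown and flow conservation, $f$ is a nowhere-zero $k_1 k_2$-flow.

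For the orientation claim, recall $\supp^+(f) = \{e \in E^+ \cup E^- : f(e) > 0\}$. For $e \in E_1$, the sign of $f(e^+) = k_2 f_1(e^+) + f_2(e^+)$ is determined by the sign of $k_2 f_1(e^+)$, since $|f_2(e^+)| < k_2 \leq |k_2 f_1(e^+)|$; hence $f(e^+) > 0$ iff $f_1(e^+) > 0$, so $\supp^+(f)$ orients $e$ the same way $\vec{E}_1$ does. For $e \in E_2 \setminus E_1$, we have $f_1(e^+) = 0$, so $f(e^+) = f_2(e^+)$ and the orientation of $e$ under $\supp^+(f)$ coincides with that of $\vec{E}_2$. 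This completes the proof.

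I do not anticipate a real obstacle here — the only subtlety is the elementary ``nonzero multiple of $k_2$ plus something smaller than $k_2$ stays nonzero'' observation, which drives both the nowhere-zero conclusion and the orientation-agreement conclusion, and it must be invoked with the right case split on whether $e$ lies in $E_1$.
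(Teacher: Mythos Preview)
Your proof is correct and follows essentially the same approach as the paper: both rely on the elementary observation that $|k_2 f_1(e)| \ge k_2 > k_2-1 \ge |f_2(e)|$ on $E_1$, which simultaneously yields the nowhere-zero property, the orientation agreement, and (via the triangle inequality) the $k_1k_2-1$ upper bound. Your write-up is somewhat more explicit (separately recording flow conservation and the antisymmetry $f(e^+)=-f(e^-)$), but the argument is the same.
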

\begin{proof}
    For each $e\in \supp^+(f_1)$, since $f(e)=k_2f_1(e)+f_2(e)\geq k_2-(k_2-1)\geq 1$, $f_2$ does not cancel $k_2f_1$ to value $0$. Thus, $\vec{E}$ has the same orientation as $\vec{E}_1$ on $E_1$ and the same orientation as $\vec{E}_2$ on $E_2\setminus E_1$. Moreover, for each $e\in E^+\cup E^-$, the flow value is bounded by $f(e)=k_2f_1(e)+f_2(e)\leq k_2(k_1-1)+(k_2-1)\leq k_1k_2-1$. Therefore, $f$ is a nowhere-zero $k_1k_2$-flow.
\end{proof}

The feasibility of nowhere-zero $k$-flows and $k$-cut-balanced orientations are equivalent due to the following lemma by Jaeger \cite{jaeger1976balanced}. Since we will use this fact repeatedly, we include a proof.
\begin{lemma}[Jaeger \cite{jaeger1976balanced}]\label{lemma:equivalence}
Let $G=(V, E)$ be an undirected graph and $k\ge 2$ be an integer. An orientation is $k$-cut balanced if and only if it induces a nowhere-zero $k$-flow. Moreover, given a $k$-cut balanced orientation, there exists a polynomial-time algorithm to construct a nowhere-zero $k$-flow.
\end{lemma}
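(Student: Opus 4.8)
**Proof proposal for Lemma \ref{lemma:equivalence} (Jaeger's equivalence of $k$-cut-balanced orientations and nowhere-zero $k$-flows).**

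The plan is to prove both directions, with the forward direction (cut-balanced $\Rightarrow$ induces a nowhere-zero $k$-flow) being the substantive one, and to extract a polynomial-time algorithm from that argument. For the easy direction, suppose $\vec{E}$ induces a nowhere-zero $k$-flow $(\vec{E},f)$. Fix $\emptyset\neq U\subsetneq V$. Summing flow conservation over all $v\in U$ gives $f(\delta^+_{\vec{E}}(U))=f(\delta^-_{\vec{E}}(U))$; call this common value $\Phi\geq 0$. Since $1\le f(e)\le k-1$ on every arc, we get $|\delta^+_{\vec{E}}(U)|\le \Phi \le (k-1)|\delta^-_{\vec{E}}(U)|$, hence $|\delta^+_{\vec{E}}(U)| \le (k-1)|\delta^-_{\vec{E}}(U)| = (k-1)(|\delta_E(U)| - |\delta^+_{\vec{E}}(U)|)$, which rearranges to $|\delta^+_{\vec{E}}(U)|\ge \frac{1}{k}|\delta_E(U)|$. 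So $\vec{E}$ is $k$-cut-balanced.

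For the hard direction, let $\vec{E}$ be a $k$-cut-balanced orientation of $D=(V,\vec{E})$. The goal is to produce $f:\vec{E}\to\{1,\dots,k-1\}$ with conservation at every vertex, i.e.\ a positive integral circulation with upper bound $k-1$ on every arc. I would set this up as a feasible circulation / flow problem and invoke Hoffman's circulation theorem: a circulation with lower bounds $\ell(e)=1$ and upper bounds $u(e)=k-1$ on every arc exists if and only if for every $U\subseteq V$, $\ell(\delta^-_{\vec{E}}(U)) \le u(\delta^+_{\vec{E}}(U))$, that is $|\delta^-_{\vec{E}}(U)| \le (k-1)|\delta^+_{\vec{E}}(U)|$. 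But this last inequality is exactly equivalent to the $k$-cut-balanced condition $|\delta^+_{\vec{E}}(U)|\ge\frac1k|\delta_E(U)|$ (by the same rearrangement as above, using $|\delta^-_{\vec{E}}(U)| = |\delta_E(U)| - |\delta^+_{\vec{E}}(U)|$). Hence the feasibility condition holds for every $U$, so the desired $f$ exists, and $(\vec{E},f)$ is a nowhere-zero $k$-flow induced by $\vec{E}$. For the algorithmic claim, feasibility of a circulation with lower and upper bounds is computed in polynomial time by a standard reduction to a single max-flow computation (add a source $s$ and sink $t$: route $\ell(e)$ units for each arc $e=(a,b)$ as an arc $s\to b$ and $a\to t$, put capacity $u(e)-\ell(e)$ on $e$, add an arc $t\to s$ of infinite capacity, and find a max $s$--$t$ flow; a saturating flow yields the circulation). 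Since we have already argued feasibility, this produces the flow $f$ explicitly.

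The main obstacle, such as it is, is purely bookkeeping: getting the direction of the inequality right when translating between the cut-balanced condition stated with $\delta^+$ and the Hoffman condition stated with lower bounds on $\delta^-$ and upper bounds on $\delta^+$. One must be careful that the condition is required for \emph{all} subsets $U$ (equivalently all $U$ and their complements, which is automatic), and that $2$-edge-connectivity is not separately needed here—it is already implied by the existence of a $k$-cut-balanced orientation, since $|\delta^+_{\vec{E}}(U)|\ge\frac1k|\delta_E(U)|>0$ forces $|\delta_E(U)|\ge 2$ for every cut once we also apply the bound to $V\setminus U$. An alternative to invoking Hoffman's theorem is a direct inductive/augmenting-path construction: start from the all-ones function $f\equiv 1$ on $\vec{E}$ (which is conserved only if $D$ is Eulerian) and correct imbalances by pushing flow along paths, but this essentially re-derives Hoffman's theorem, so citing it is cleaner. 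I would therefore present the Hoffman-theorem route as the main proof and mention the max-flow reduction for the algorithmic part.
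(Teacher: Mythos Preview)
Your proposal is correct and follows essentially the same route as the paper: flow conservation plus the bounds $1\le f(e)\le k-1$ for the easy direction, and Hoffman's circulation theorem with $\ell\equiv 1$, $u\equiv k-1$ for the substantive direction and the algorithm. One small bookkeeping slip: from $|\delta^+_{\vec{E}}(U)|\le (k-1)|\delta^-_{\vec{E}}(U)|$ you actually get the \emph{upper} bound $|\delta^+_{\vec{E}}(U)|\le\frac{k-1}{k}|\delta_E(U)|$; the desired lower bound $|\delta^+_{\vec{E}}(U)|\ge\frac{1}{k}|\delta_E(U)|$ comes from the symmetric inequality $|\delta^-_{\vec{E}}(U)|\le (k-1)|\delta^+_{\vec{E}}(U)|$ (or, equivalently, by applying your bound to $V\setminus U$), exactly as the paper does.
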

\begin{proof}
    Reverse direction: Let $\vec{E}$ and $f:\vec{E}\rightarrow \{1,...,k-1\}$ be a nowhere-zero $k$-flow. By flow conservation, $f(\delta^+_{\vec{E}}(U))=f(\delta^-_{\vec{E}}(U))\ \forall U \subseteq V$. Thus, we have $1\cdot |\delta^+_{\vec{E}}(U)|\leq f(\delta^+_{\vec{E}}(U))=f(\delta^-_{\vec{E}}(U))\leq (k-1)\cdot |\delta^-_{\vec{E}}(U)|$. Similarly, we also have $1\cdot |\delta^-_{\vec{E}}(U)|\leq f(\delta^-_{\vec{E}}(U))=f(\delta^+_{\vec{E}}(U))\leq (k-1)\cdot |\delta^+_{\vec{E}}(U)|$. It follows from the equality $|\delta_E(U)|=|\delta_{\vec{E}}^+(U)|+|\delta_{\vec{E}}^-(U)|$ that $\frac{1}{k}|\delta_{E}(U)|\leq |\delta^+_{\vec{E}}(U)|\leq \frac{k-1}{k}|\delta_{E}(U)|$. Thus, $\vec{E}$ is a $k$-cut-balanced orientation.

    Forward direction: Let $\vec{E}$ be a $k$-cut-balanced orientation. By Hoffman's circulation theorem \cite{hoffman2003some}, there exists a circulation $f\in\R^{\vec{E}}$ satisfying $1\leq f(e)\leq k-1\ \forall e\in \vec{E}$, if and only if $|\delta_{\vec{E}}^-(U)|\leq (k-1)|\delta_{\vec{E}}^+(U)|\ \forall U\subsetneqq V,U\neq\emptyset$, which is equivalent to $|\delta_{\vec{E}}^+(U)|\geq \frac{1}{k}|\delta_{E}(U)|\ \forall U\subsetneqq V,U\neq\emptyset$, satisfied by the $k$-cut-balanced condition. Thus, $f$ is a nowhere-zero $k$-flow. Moreover, we can use any circulation algorithm to construct $f$ in polynomial time (see e.g. \cite{ford2015flows,chen2022maximum}).
\end{proof}

\section{Reductions between $\operatorname{WNZF}(k)$ and $\operatorname{WCBO}(k)$}\label{sec:reduction}
Lemma \ref{lemma:equivalence} implies that approximation algorithms for $\operatorname{WNZF}(k)$ and $\operatorname{WCBO}(k)$ can be translated to each other with a small loss in factors as shown in the following lemma. A consequence of the lemma below is that NP-hardness of approximating $\operatorname{WCBO}(k)$ within any finite factor implies the same for $\operatorname{WNZF}(k)$, and vice versa.
\begin{lemma}\label{lemma:approx_translate}
    If $\operatorname{WNZF}(k)$ has an $(\alpha,\beta)$-approximation algorithm, then $\operatorname{WCBO}(k)$ has a $((k-1)\alpha,\beta)$-approximation algorithm; if $\operatorname{WCBO}(k)$ has an $(\alpha,\beta)$-approximation algorithm, then $\operatorname{WNZF}(k)$ has a $((\beta k-1)\alpha,\beta)$-approximation algorithm.
\end{lemma}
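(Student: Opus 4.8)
The plan is to exploit the equivalence between nowhere-zero $k$-flows and $k$-cut-balanced orientations from Lemma \ref{lemma:equivalence} in both directions, being careful that the objective for $\operatorname{WNZF}(k)$ counts flow values $c(f)=\sum_{e\in\vec{E}}c(e)f(e)$ while the objective for $\operatorname{WCBO}(k)$ counts only the orientation cost $c(\vec{E})=\sum_{e\in\vec{E}}c(e)$, and that the two optima are compared against the \emph{same} cost function $c$. The key observation connecting the two objectives is the sandwich inequality: for any nowhere-zero $k$-flow $(\vec{E},f)$ we have $c(\vec{E})\le c(f)\le (k-1)c(\vec{E})$, since $1\le f(e)\le k-1$ on every oriented edge and $c\ge 0$. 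I would state this as a preliminary observation.

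First direction: suppose $\operatorname{WNZF}(k)$ has an $(\alpha,\beta)$-approximation. Given a $\operatorname{WCBO}(k)$ instance, run the $\operatorname{WNZF}(k)$ algorithm on the same graph and costs; it returns a nowhere-zero $\beta k$-flow $(\vec{E},f)$ with $c(f)\le \alpha\cdot c(\opt_{\operatorname{WNZF}(k)})$. Output the orientation $\vec{E}$, which by Lemma \ref{lemma:equivalence} (reverse direction) is $\beta k$-cut-balanced, so the second criterion $\beta$ is met. For the cost: $c(\vec{E})\le c(f)\le \alpha\cdot c(\opt_{\operatorname{WNZF}(k)})$. Now I need $c(\opt_{\operatorname{WNZF}(k)})\le (k-1)\,c(\opt_{\operatorname{WCBO}(k)})$. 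Take an optimal $k$-cut-balanced orientation $\vec{E}^*$; by Lemma \ref{lemma:equivalence} (forward direction) it induces some nowhere-zero $k$-flow $f^*$ with $\supp^+(f^*)$ inducing the same orientation $\vec{E}^*$, and $c(f^*)\le (k-1)c(\vec{E}^*)=(k-1)\,c(\opt_{\operatorname{WCBO}(k)})$; hence $c(\opt_{\operatorname{WNZF}(k)})\le c(f^*)\le (k-1)\,c(\opt_{\operatorname{WCBO}(k)})$. Chaining, $c(\vec{E})\le (k-1)\alpha\cdot c(\opt_{\operatorname{WCBO}(k)})$, giving a $((k-1)\alpha,\beta)$-approximation.

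Second direction: suppose $\operatorname{WCBO}(k)$ has an $(\alpha,\beta)$-approximation. Given a $\operatorname{WNZF}(k)$ instance, run the $\operatorname{WCBO}(k)$ algorithm on the same graph and costs to get a $\beta k$-cut-balanced orientation $\vec{E}$ with $c(\vec{E})\le \alpha\cdot c(\opt_{\operatorname{WCBO}(k)})$. Apply Lemma \ref{lemma:equivalence} (forward direction, which is constructive) to $\vec{E}$ to obtain in polynomial time a nowhere-zero $\beta k$-flow $f$ inducing the same orientation $\vec{E}$; then $c(f)\le (\beta k-1)c(\vec{E})\le (\beta k-1)\alpha\cdot c(\opt_{\operatorname{WCBO}(k)})$. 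Finally I need $c(\opt_{\operatorname{WCBO}(k)})\le c(\opt_{\operatorname{WNZF}(k)})$: an optimal nowhere-zero $k$-flow $(\vec{E}^{**},f^{**})$ has, by Lemma \ref{lemma:equivalence} (reverse direction), $\vec{E}^{**}$ being $k$-cut-balanced with $c(\vec{E}^{**})\le c(f^{**})=c(\opt_{\operatorname{WNZF}(k)})$, so $c(\opt_{\operatorname{WCBO}(k)})\le c(\opt_{\operatorname{WNZF}(k)})$. Chaining, $c(f)\le (\beta k-1)\alpha\cdot c(\opt_{\operatorname{WNZF}(k)})$, giving a $((\beta k-1)\alpha,\beta)$-approximation.

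The only subtlety I would flag as the "main obstacle" is a bookkeeping one rather than a mathematical one: one must make sure that when Lemma \ref{lemma:equivalence} is applied to an orientation $\vec{E}$ returned by an algorithm, the resulting flow is supported \emph{exactly} on that orientation (so the costs line up edge-by-edge) — this holds because Hoffman's circulation theorem gives $1\le f(e)\le k-1$ on every arc of $\vec{E}$, so $\supp^+(f)=\vec{E}$ — and that the inequality $c(f)\le(\beta k-1)c(\vec{E})$ uses $\beta k$ (not $k$) as the flow bound since $\vec{E}$ is only guaranteed $\beta k$-cut-balanced. Everything else is the nonnegativity of $c$ together with the two one-line inequalities relating $c(f)$ and $c(\vec{E})$.
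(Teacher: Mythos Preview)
Your proposal is correct and follows essentially the same argument as the paper: both directions use Lemma~\ref{lemma:equivalence} to pass between orientations and flows, together with the sandwich $c(\vec{E})\le c(f)\le (k-1)c(\vec{E})$ (respectively $(\beta k-1)c(\vec{E})$) and the comparison of the two optima via the induced flow/orientation. The chain of inequalities you write is exactly the one in the paper's proof, and your flagged subtleties (that the induced flow lives on $\vec{E}$, and that the second direction uses $\beta k-1$ rather than $k-1$) are precisely the points the paper is implicitly using.
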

\begin{proof}
    Let $(\vec{E}_1^*,f_1^*)$ be a nowhere-zero $k$-flow with minimum cost and let $\vec{E}_2^*$ be a $k$-cut-balanced orientation with minimum cost on the same instance. Let $f_2^*$ be a nowhere-zero $k$-flow induced by $\vec{E}_2^*$, which can be constructed in polynomial time according to Lemma \ref{lemma:equivalence}.

Suppose that $\operatorname{WNZF}(k)$ has an $(\alpha,\beta)$-approximation algorithm. Then, applying the algorithm returns a nowhere-zero $\beta k$-flow $(\vec{E},f)$. By Lemma \ref{lemma:equivalence}, $\vec{E}$ is a $\beta k$-cut-balanced orientation. Moreover, the orientation satisfies
    \[
    c(\vec{E})\leq c(f)\leq \alpha \cdot c(f_1^*)\leq \alpha \cdot c(f_2^*)\leq (k-1)\alpha \cdot c(\vec{E}_2^*),
    \]
    where the first inequality follows from the fact that $f$ is nowhere-zero. The second inequality follows from the fact that $f$ is an $(\alpha,\beta)$-approximate solution to $\operatorname{WNZF}(k)$. The third inequality follows from the fact that $f_1^*$ is a min cost nowhere-zero $k$-flow. The fourth inequality follows from the fact that $f_2^*$ is a nowhere-zero $k$-flow.

Suppose that $\operatorname{WCBO}(k)$ has an $(\alpha,\beta)$-approximation algorithm. Then, applying the algorithm returns a $\beta k$-cut-balanced orientation $\vec{E}$. By Lemma \ref{lemma:equivalence}, the orientation $\vec{E}$ induces a nowhere-zero $\beta k$-flow $f$. Moreover, the flow $f$ satisfies
 \[
    c(f)\leq (\beta k-1)\cdot c(\vec{E})\leq (\beta k-1)\alpha \cdot c(\vec{E}_2^*)\leq (\beta k-1)\alpha \cdot c(\vec{E}_1^*)\leq (\beta k-1)\alpha \cdot c(f_1^*),
    \]
    where the first inequality follows from the fact that $f$ is a nowhere-zero $\beta k$-flow. The second inequality follows from the fact that $\vec{E}$ is an $(\alpha,\beta)$-approximate solution to $\operatorname{WCBO}(k)$. The third inequality follows from the fact that $\vec{E}_2^*$ is a min cost $k$-cut-balanced orientation. The fourth inequality follows from the fact that $f_1^*$ is nowhere-zero.
\end{proof}

\section{Hardness of $\operatorname{WNZF}(k)$ and $\operatorname{WCBO}(k)$}\label{sec:complexity-WNZF+WCBO}
We prove Theorem \ref{thm:hardness-WNZF-WCBO} in this section, i.e., we show that both $\operatorname{WNZF}(k)$ and $\operatorname{WCBO}(k)$ do not admit an algorithm with finite approximation factors for every finite $k\geq 3$.
By Lemma \ref{lemma:approx_translate}, NP-hardness of approximating $\operatorname{WCBO}(k)$ within any finite factor implies the same for $\operatorname{WNZF}(k)$, and vice versa.
Thus, it suffices to prove the NP-hardness of approximation within any finite factor for $\operatorname{WCBO}(k)$.
To achieve this, we study the problem of determining whether an arbitrary partial orientation $\vec{F}$ of a $2$-edge-connected graph $G=(V,E)$ can be completed into a $k$-cut-balanced orientation. This is the key step towards proving the hardness of $\operatorname{WCBO}(k)$. This question is known to be NP-hard for $k=3,4$, because if we take $\vec{F}=\emptyset$, this recovers the problems of deciding whether a $2$-edge-connected graph admits a nowhere-zero $3$-flow and a nowhere-zero $4$-flow, which are both NP-hard. This question is also hard for $k=5$ if Tutte's $5$-flow conjecture is false. Thus, the problem of deciding whether a partial orientation can be completed into a $k$-cut-balanced orientation is interesting only for $k\geq 6$, for which we know that the graph has a nowhere-zero $k$-flow, and possibly for $k=5$. However, we show in Theorem \ref{thm:complete-CBO-hard} that this problem is hard for every finite integer $k\geq 3$.

\emph{Satisfiability (SAT)} is a well-known NP-complete problem: the input is a collection of $n$ variables and $m$ clauses, where each clause is a disjunction of a subset of variables or their negation. The goal is to determine if there is an assignment of Boolean values to variables such that all clauses are satisfied. We call a SAT instance as a \emph{restricted SAT} if every variable appears in at most $3$ clauses. Restricted SAT is also NP-complete \cite{tovey1984simplified}.

\begin{theorem}\label{thm:complete-CBO-hard}
The following problem is NP-hard for every finite integer $k\ge 3$: given an undirected graph $G=(V, E)$ and a partial orientation $\vec{F}$ of a subset $F\subseteq E$ of edges, decide whether $\vec{F}$ can be completed into a $k$-cut-balanced orientation.
\end{theorem}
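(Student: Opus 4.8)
The plan is to reduce from restricted SAT to the problem of completing a partial orientation $\vec{F}$ into a $k$-cut-balanced orientation. The central idea is to build, for a given SAT instance $\phi$ with variables $x_1,\dots,x_n$ and clauses $C_1,\dots,C_m$, a $2$-edge-connected graph $G$ together with a partial orientation $\vec{F}$ of some of its edges, so that the completions of $\vec{F}$ to a $k$-cut-balanced orientation correspond exactly to satisfying assignments of $\phi$. The key structural handle we will exploit is the characterization from Lemma \ref{lemma:equivalence}: an orientation is $k$-cut-balanced if and only if it induces a nowhere-zero $k$-flow, equivalently (via Hoffman) iff $|\delta^+_{\vec{E}}(U)| \ge \frac{1}{k}|\delta_E(U)|$ for every nonempty proper $U \subsetneq V$. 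Since restricted SAT keeps each variable in at most $3$ clauses, the gadgets we attach to each variable and each clause have bounded degree, which is what will let us control all the cut conditions simultaneously.

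First I would design a \emph{variable gadget}: for each variable $x_i$, a small subgraph with a distinguished ``output bundle'' of parallel edges (or a short path of such bundles) that admits exactly two ``balanced'' completions, one encoding $x_i = \text{true}$ and one encoding $x_i = \text{false}$; the forced edges in $\vec{F}$ inside the gadget, together with the cut constraints, rule out any intermediate orientation. The size of the bundle (number of parallel edges) will be chosen as a function of $k$ so that the $\frac{1}{k}|\delta_E(U)|$ threshold forces the gadget's internal edges to split in one of only two ways. Then a \emph{clause gadget}: for each clause $C_j$, a subgraph connected to the three relevant variable gadgets via their output bundles, arranged so that the cut around the clause gadget can be $k$-cut-balanced \emph{if and only if} at least one incident literal is set to the ``satisfying'' orientation --- i.e., if all three incoming bundles point the ``wrong'' way, some cut $U$ (the clause gadget or a piece of it) has too few leaving arcs, violating $|\delta^+(U)| \ge \frac{1}{k}|\delta(U)|$. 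I would pad each gadget with enough additional pre-oriented edges (in $\vec{F}$) and parallel edges to make $G$ globally $2$-edge-connected and to make every cut \emph{other than} the intended ``decision'' cuts trivially satisfy the $k$-cut-balanced inequality with slack, so that the only binding constraints are the ones encoding the logic of $\phi$.

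Having set this up, the proof then splits into the two standard directions. For soundness: given a completion of $\vec{F}$ to a $k$-cut-balanced orientation, read off from each variable gadget which of its two allowed states it is in, obtaining an assignment, and argue from the clause-gadget cut condition that this assignment satisfies every clause. For completeness: given a satisfying assignment, orient each variable gadget in the corresponding state and each clause gadget using a literal that satisfies it, then verify $|\delta^+_{\vec{E}}(U)| \ge \frac{1}{k}|\delta_E(U)|$ for \emph{all} $U$ --- here I would invoke the equivalent flow formulation from Lemma \ref{lemma:equivalence} and Proposition \ref{prop:sum-flows}, building the certifying nowhere-zero $k$-flow by combining a small flow on each gadget with a global background flow (which exists since $k \ge 6$ guarantees a nowhere-zero $k$-flow on the $2$-edge-connected $G$; for $k = 3,4,5$ we do not need this since those cases already follow from known hardness of the feasibility version with $\vec{F} = \emptyset$, but we want a uniform construction, so the padding edges should themselves carry an obvious flow).

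The main obstacle I anticipate is the \emph{uniformity in $k$}: the gadgets must work for every finite $k \ge 3$ simultaneously, so the number of parallel edges in each bundle, and the padding, must be given as explicit functions of $k$, and one must check that the two-state forcing in the variable gadget and the ``at least one literal'' forcing in the clause gadget are both robust --- neither too loose (allowing spurious completions) nor too tight (forbidding the intended ones) --- for all $k$ at once. Verifying the cut inequality over \emph{all} subsets $U$, not just the designed ones, is the tedious but essential part; the bounded-degree property from restricted SAT and generous padding are the tools that make this tractable, and phrasing the completeness direction via an explicit nowhere-zero $k$-flow (rather than checking exponentially many cuts by hand) is how I would keep that step manageable.
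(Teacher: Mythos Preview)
Your proposal takes essentially the same approach as the paper: reduce from restricted SAT, encode each variable by a gadget with exactly two admissible orientations and each clause by a gadget whose $k$-cut-balance fails precisely when all its literals are wrong, and certify the completeness direction via an explicit nowhere-zero $k$-flow using Lemma~\ref{lemma:equivalence}. The paper's execution is considerably leaner than you anticipate---each variable gadget is a \emph{single} unoriented edge $(u_i,u_i')$, the $k$-dependence lives in the multiplicity of pre-oriented arcs to a single root $r$ (e.g., $k-a_i-2$ copies of $(u_i,r)$), the soundness cut is just $\{v_j\}\cup\{u_i,u_i':\text{literal of }x_i\in C_j\}$, and no padding or background flow is needed; also, the paper dispatches $k=3$ separately via feasibility hardness and runs the gadget reduction only for $k\ge 4$ (your remark that $k=5$ follows from feasibility hardness is only conditional on Tutte's conjecture being false).
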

\begin{proof}
    If $k=3$, take $\vec{F}=\emptyset$. The theorem follows from the NP-completeness of determining whether a given graph has a $3$-cut-balanced orientation (see Table \ref{table:main}). 
    Thus, we assume $k\geq 4$ from now on.
    
    We reduce from restricted SAT where every variable appears at most $3$ times. Consider a restricted SAT instance with variables $x_1,...,x_n$ and clauses $C_1,...,C_m$. We may assume that every variable $x_i$ appears at least once positively and at least once negatively: otherwise, $x_i$ always appears positive (resp. negative), in which case we can simply assign $x_i=1$ (resp. $x_i=0$) and delete all clauses containing $x_i$. 
    
    Construct a graph $G=(V,E)$ and a partial orientation $\vec{F}$ as follows. Fix a root $r$. Introduce $2n$ nodes $\{u_1,u_1',...,u_n,u_n'\}$ and $m$ nodes $\{v_1,...,v_m\}$ such that $(u_i,v_j)\in \vec{F}$ if and only if $x_i$ appears positively in $C_j$; $(u'_i,v_j)\in \vec{F}$ if and only if $x_i$ appears negatively in $C_j$. Suppose $x_i$ appears $a_i\in\{1,2\}$ times positively and $a'_i\in\{1,2\}$ times negatively. Add one arc $(r,u_i)$ and $k-a_i-2$ arcs $(u_i,r)$ to $\vec{F}$. Similarly, add one arc $(r,u'_i)$ and $k-a'_i-2$ arcs $(u'_i,r)$ to $\vec{F}$. Note that $k-a_i-2\geq 0$ since $k\geq 4$ and $a_i\leq 2$. For each clause $C_j$, add $|C_j|+1$ arcs $(v_j,r)$ to $\vec{F}$. Finally, let $E$ be the union of the underlying undirected graph of $\vec{F}$ and $\cup_{i=1}^n\{(u_i,u_i')\}$ (see Figure \ref{fig:complexity-CBO-small} left for details and Figure \ref{fig:complexity-CBO-big} left for an example).
\begin{figure}[htbp]
    \centering
    \includegraphics[width=0.9\linewidth]{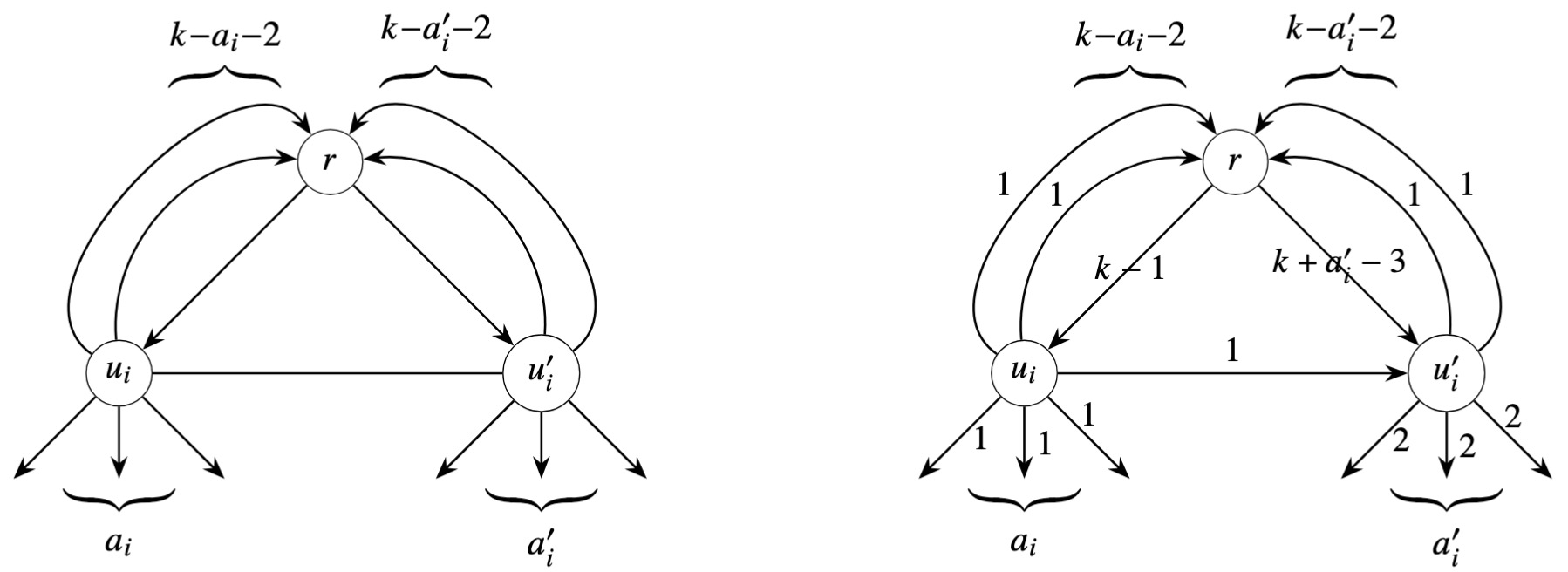}
    \caption{Left: part of graph $G=(V,E)$ and a partial orientation $\vec{F}$. Right: part of a nowhere-zero $k$-flow $(\vec{E},f)$ of $G$ such that $\vec{F}\subseteq \vec{E}$.}
    \label{fig:complexity-CBO-small}
\end{figure}

    We claim that $\vec{F}$ can be completed into a $k$-cut-balanced orientation if and only if the instance of SAT is satisfiable. We first prove the reverse direction. Given an assignment to the variables $x_1,...,x_n$ satisfying, complete $\vec{F}$ into an orientation $\vec{E}$ of all edges in the following way. Orient $(u_i,u'_i)\in \vec{E}$ if $x_i=0$;  orient $(u'_i,u_i)\in \vec{E}$ if $x_i=1$. We claim that $\vec{E}$ is a $k$-cut-balanced orientation. To prove this, we construct a nowhere-zero $k$-flow $f$ over $\vec{E}$. For an arbitrary $i\in [n]$, suppose $x_i=0$, which implies $(u_i,u'_i)\in\vec{E}$ (the case where $x_i=1$ is symmetric). Let $f(r,u_i)=k-1$ and $f(u_i,w)=1\ \forall (u_i,w)\in\delta_{\vec{E}}^+(u_i)$. Let $f(r,u'_i)=k+a'_i-3\in\{k-1,k-2\}$; $f(u'_i,w)=1\ \forall (u_i,w)\in\delta_{\vec{E}}^+(u_i)$ and $w=r$; and $f(u'_i,w)=2\ \forall (u_i,w)\in\delta_{\vec{E}}^+(u_i)$ and $w\neq r$ (see Figure \ref{fig:complexity-CBO-small} right). We note that $f$ satisfies flow conservation at both $u_i$ and $u'_i$. For each $j\in [m]$, since $C_j$ is satisfied, w.l.o.g. there is some $i\in[n]$ such that $x_i$ appears negatively in $C_j$ and $x_i=0$ (the case where there is some $i\in[n]$ such that $x_i$ appears positively in $C_j$ and $x_i=1$ is symmetric), which implies $f(u'_i,v_j)=2$. Therefore, $|C_j|+1\leq f(\delta_{\vec{E}}^-(v_j))\leq 2|C_j|$. Since $|\delta_{\vec{E}}^+(v_j)|=|C_j|+1$, we can arbitrarily assign flow values $1$ or $2$ to arcs in $\delta_{\vec{E}}^+(v_j)$ such that $f(\delta_{\vec{E}}^+(v_j))=f(\delta_{\vec{E}}^-(v_j))$. So far, we have assigned flow value for each arc in $\vec{E}$ and checked flow conservation for $u_i\ \forall i$ and $v_j\ \forall j$. Therefore, the flow conservation has to hold for $r$ as well. This shows that $(\vec{E},f)$ is a nowhere-zero $k$-flow. By Lemma \ref{lemma:equivalence}, $\vec{E}$ is a $k$-cut-balanced orientation.
\begin{figure}[htbp]
    \centering
    \includegraphics[width=\linewidth]{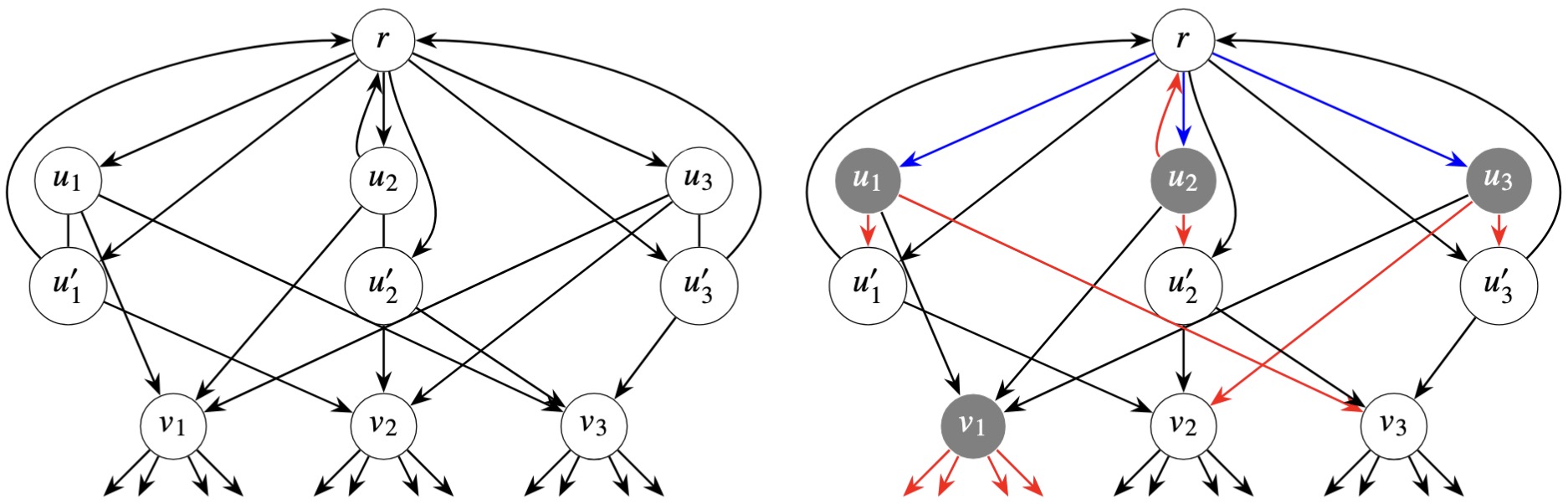}
    \caption{Left: the graph $G$ for restricted SAT instance $(x_1\vee x_2\vee x_3)\wedge (\bar{x}_1\vee \bar{x}_2\vee x_3)\wedge (x_1\vee \bar{x}_2\vee \bar{x}_3)$ and $k=4$. Right: an infeasible assignment $x_1=x_2=x_3=0$ yields a cut $X=\{u_1,u_2,u_3,v_1\}$ that violates the $k$-cut-balancedness condition. Here, arcs in $\delta^+(X)$ are colored red; arcs in $\delta^-(X)$ are colored blue.}
    \label{fig:complexity-CBO-big}
\end{figure}
    
    Next, we prove the forward direction. Let $\vec{E}$ be a $k$-cut-balanced orientation such that $\vec{F}\subseteq \vec{E}$. We assign $x_i=0$ if $(u_i,u'_i)\in \vec{E}$; $x_i=1$ if $(u'_i,u_i)\in \vec{E}$. We now show that all clauses are satisfied under this assignment. Suppose not, which means there is some clause $j\in [m]$ that is not satisfied. Consider the set $U:=\{u_i:x_i \text{ appears positively in } C_j\}$ and $U':=\{u'_i:x_i \text{ appears negatively in } C_j\}$. Since $C_j$ is unsatisfied, it follows that $(u_i,u'_i)\in \vec{E}\ \forall u_i\in U$ and $(u'_i,u_i)\in \vec{E}\ \forall u'_i\in U'$. Then, for every $u_i\in U$, $|\delta^+(u_i)|=(k-a_i-2)+a_i+1=k-1$, consisting of $k-a_i-2$ arcs to $r$, $a_i$ arcs to $v_j$'s and one arc to $u_i'$. Similarly,  for every $u'_i\in U'$, $|\delta^+(u'_i)|=k-1$. Let $X:=U\cup U'\cup \{v_j\}$. Thus, $|\delta^+(X)|=\sum_{i:u_i\in U}|\delta^+(u_i)|+\sum_{i:u'_i\in U'}|\delta^+(u'_i)|-|\delta^-(v_j)|+|\delta^+(v_j)|=(k-1)|C_j|-|C_j|+(|C_j|+1)=(k-1)|C_j|+1$. On the other hand, the only incoming arcs of $X$ are from the root $r$ to $U\cup U'$, and thus $|\delta^-(X)|=|U|+|U'|=|C_j|$. This violates the $k$-cut-balancedness condition $|\delta^+(X)|\leq \frac{k-1}{k}|\delta(X)|$, or equivalently $|\delta^+(X)|\leq (k-1)|\delta^-(X)|$, a contradiction.
\end{proof}

Theorem \ref{thm:hardness-WNZF-WCBO} follows as a corollary.
\begin{proof}[Proof of Theorem \ref{thm:hardness-WNZF-WCBO}]
    We first prove that it is NP-hard to approximate $\operatorname{WCBO}(k)$ within any finite factor for every finite $k
    \geq 3$. We reduce from the decision problem of whether $\vec{F}$ can be completed into a $k$-cut-balanced orientation. Let $\vec{G}=(V,E^+\cup E^-)$ be the bidirected graph of $G$. Set cost $c(e)=0$ and $c(e^{-1})=\infty$ for every arc $e\in \vec{F}$. Set cost $c(e^+)=c(e^-)=0$ for every $e\in E\setminus F$. A finite approximation algorithm of the minimum $k$-cut-balanced orientation returns a solution of cost $0$ if and only if $\vec{F}$ can be completed into a $k$-cut-balanced orientation.
    
    Next we prove that it is NP-hard to approximate  $\operatorname{WNZF}(k)$ within any finite factor for every finite $k
    \geq 3$. Letting $\beta=1$ in Lemma \ref{lemma:approx_translate}, we obtain that for every finite $k$, $\operatorname{WNZF}(k)$ has a finite factor approximation algorithm if and only if $\operatorname{WCBO}(k)$ has a finite factor approximation algorithm. Therefore, the hardness of approximating $\operatorname{WNZF}(k)$ within any finite factor follows from that of $\operatorname{WCBO}(k)$.
\end{proof}

\section{Bicriteria approximations for $\operatorname{WNZF}(k)$ and $\operatorname{WCBO}(k)$}\label{sec:algo-WNZF+WCBO}
In this section, we design bicriteria approximations for $\operatorname{WNZF}(k)$ and $\operatorname{WCBO}(k)$. Our algorithms are based on an LP-relaxation for both problems. 

We begin with a unified LP-relaxation for both problems. 
Let $G=(V, E)$ be the input graph, $\vec{G}=(V,E^+\cup E^-)$ be its bidirected graph, and $c: E^+ \cup E^-$ be the given cost. For each edge $e\in E$, we introduce indicator variables $y(e^+)$ and $y(e^-)$ to indicate the orientation to be used for the edge and non-negative integer variables $z(e^+)$ and $z(e^-)$ for the flow value on the two orientations of the edge. For the $y$ variables that indicate orientation, we impose the constraint that $y(e^+) + y(e^-)=1$ for each edge $e\in E$, since each edge is to be oriented in exactly one of the two directions. For the flow variables $z$, we impose the constraint that $z(\delta_{\vec{G}}^+(v))=z(\delta_{\vec{G}}^-(v))$ for each vertex $v\in V$ to conserve flow. We also impose the constraint that $y(e)\le z(e)\le (k-1)y(e)$ for every arc $e\in E^+\cup E^-$ since we want the flow variables $z$ to send flow only along the direction of the oriented arc and the capacity of the flow to be at most $k-1$. Together, we obtain the LP-relaxation given in \eqref{eq:k-cut-balanced+nwz-k-flow}. 
Thus, it follows from Lemma \ref{lemma:equivalence} that every integral feasible solution $y,z\in \Z^{E^+\cup E^-}$ to \eqref{eq:k-cut-balanced+nwz-k-flow} corresponds to a $k$-cut-balanced orientation and a nowhere-zero $k$-flow, respectively. 
For $\operatorname{WCBO}(k)$, we let $c_y=c$ and $c_z=0$. For $\operatorname{WNZF}(k)$, we let $c_y=0$ and $c_z=c$.

\begin{equation}\label{eq:k-cut-balanced+nwz-k-flow}
    \begin{aligned}
    \min~&c_y^\intercal y+c_z^\intercal z\\
    s.t.~&z(\delta_{\vec{G}}^+(v))=z(\delta_{\vec{G}}^-(v))\quad \forall v\in V\\
    &y(e^+)+y(e^-)=1\quad \forall e\in E\\
    &y\leq z\leq (k-1)y\\
        &y\geq 0.
    \end{aligned}
\end{equation}

In Sections \ref{sec:WNZF-LP} and \ref{sec:WNZF-approx}, we focus on $\operatorname{WNZF}(k)$: we consider the projection of \eqref{eq:k-cut-balanced+nwz-k-flow} on the $z$ variables and prove certain extreme point properties of the projected LP in Section \ref{sec:WNZF-LP}. We use these properties to design a bicriteria approximation for $\operatorname{WNZF}(k)$ in Section \ref{sec:WNZF-approx} thereby proving Theorem \ref{thm:approx-nwz-kflow}.
Lemma \ref{lemma:approx_translate} and Theorem \ref{thm:approx-nwz-kflow} imply a $(6(k-1),6)$-approximation algorithm for $\operatorname{WCBO}(k)$. Next, we improve on this bicriteria approximation to achieve a $(k, 6)$-approximation for $\operatorname{WCBO}(k)$.  
In Sections \ref{sec:WCBO-LP} and \ref{sec:WCBO-approx}, we focus on $\operatorname{WCBO}(k)$: we consider the projection of \eqref{eq:k-cut-balanced+nwz-k-flow} on the $y$ variables and prove certain extreme point properties of the projected LP in Section \ref{sec:WCBO-LP}. We use these properties to design a bicriteria approximation for $\operatorname{WCBO}(k)$ in Section \ref{sec:WCBO-approx}, thereby proving Theorem \ref{thm:approx-cut-balanced}.

\subsection{LP relaxation for $\operatorname{WNZF}(k)$ and its extreme point structure}\label{sec:WNZF-LP}
We can project the feasible region of \eqref{eq:k-cut-balanced+nwz-k-flow} on the $z$ variables, which yields an LP relaxation for $\operatorname{WNZF}(k)$.
\begin{equation}\label{eq:nwz-k-flow}
    \begin{aligned}
    \min~&c^\intercal z\\
    s.t.~&z(\delta_{\vec{G}}^+(v))=z(\delta_{\vec{G}}^-(v))\quad \forall v\in V\\
    &1\leq z(e^+)+z(e^-)\leq k-1\quad \forall e\in E\\
        &z\geq 0.
    \end{aligned}\tag{$\mathcal{P}_k$}
\end{equation}

\begin{lemma}\label{lemma:proj-z}
    The polytope in \eqref{eq:nwz-k-flow} is the projection of the polytope in \eqref{eq:k-cut-balanced+nwz-k-flow} to the $z$ variables.
\end{lemma}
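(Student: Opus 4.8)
## Proof Proposal for Lemma \ref{lemma:proj-z}

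The plan is to show the two polytopes (call them $Q$ for the polytope of \eqref{eq:k-cut-balanced+nwz-k-flow} and $P_k$ for \eqref{eq:nwz-k-flow}) have the same projection onto the $z$-coordinates by a direct two-way containment argument. The ``easy'' direction is showing that the projection of $Q$ is contained in $P_k$: if $(y,z)$ is feasible for \eqref{eq:k-cut-balanced+nwz-k-flow}, then the flow-conservation constraint $z(\delta^+_{\vec G}(v)) = z(\delta^-_{\vec G}(v))$ and $z \ge y \ge 0$ carry over verbatim, so I only need to check $1 \le z(e^+) + z(e^-) \le k-1$ for every edge $e$. This follows by summing the two inequalities $y(e^+) \le z(e^+) \le (k-1) y(e^+)$ and $y(e^-) \le z(e^-) \le (k-1) y(e^-)$ and using $y(e^+) + y(e^-) = 1$.

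The more substantive direction is the reverse inclusion: given $z$ feasible for $P_k$, I must exhibit a $y$ such that $(y,z)$ is feasible for $Q$. The natural construction is to set, for each edge $e \in E$ with $s := z(e^+) + z(e^-) \in [1, k-1]$,
\[
y(e^+) := \frac{z(e^+)}{s} \cdot w + \text{(correction)},
\]
but a cleaner choice is to simply pick $y(e^+), y(e^-) \ge 0$ with $y(e^+) + y(e^-) = 1$ such that the sandwich constraints $y(e) \le z(e) \le (k-1) y(e)$ hold coordinatewise. I would argue that such a $y$ exists for each edge independently: we need $y(e^+) \le z(e^+)$, $y(e^-) \le z(e^-)$, $z(e^+) \le (k-1) y(e^+)$, $z(e^-) \le (k-1) y(e^-)$, i.e. $y(e^+) \in [\, z(e^+)/(k-1),\, z(e^+)\,]$ and $y(e^-) \in [\, z(e^-)/(k-1),\, z(e^-)\,]$, together with $y(e^+) + y(e^-) = 1$. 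The interval for $y(e^+)$ forces $y(e^-) = 1 - y(e^+) \in [\,1 - z(e^+),\, 1 - z(e^+)/(k-1)\,]$, so I need this interval to intersect $[\, z(e^-)/(k-1),\, z(e^-)\,]$. Checking the two endpoint conditions — $z(e^-)/(k-1) \le 1 - z(e^+)/(k-1)$ and $1 - z(e^+) \le z(e^-)$ — both reduce exactly to $z(e^+) + z(e^-) \in [1, k-1]$, which is precisely the edge constraint in $P_k$. Since $z(e^+), z(e^-) \ge 0$ also forces $y(e^+), y(e^-) \le 1$, hence $\ge 0$ automatically on the relevant interval, the required $y(e^+)$ exists. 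Because flow conservation and nonnegativity of $z$ are inherited directly and the $y$ choice is made edge-by-edge with no interaction, the resulting $(y, z)$ lies in $Q$.

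I expect the main (minor) obstacle to be handling the boundary/degenerate cases cleanly — in particular when $z(e^+) = 0$ or $z(e^-) = 0$ (so the edge is effectively oriented), and checking that the interval intersection indeed holds with the correct non-strict inequalities rather than being empty. These are routine verifications: when $z(e^-) = 0$ we are forced to take $y(e^-) = 0$ and $y(e^+) = 1$, and then $z(e^+) \in [1, k-1]$ gives exactly $y(e^+) \le z(e^+) \le (k-1)y(e^+)$. Assembling these per-edge feasible choices of $y$ yields the global feasibility, completing the projection equality.
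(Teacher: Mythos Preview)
Your proposal is correct and follows essentially the same edge-by-edge construction as the paper; the only difference is that the paper picks the explicit point $y(e^+) := z(e^+)/(z(e^+)+z(e^-))$, which lies in your feasible interval $[\,z(e^+)/(k-1),\, z(e^+)\,]$ precisely because $1 \le z(e^+)+z(e^-) \le k-1$.
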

\begin{proof}
    We note that for every $(y,z)$ feasible for \eqref{eq:k-cut-balanced+nwz-k-flow}, $z$ is feasible for \eqref{eq:nwz-k-flow}. Conversely, let $z$ be feasible for \eqref{eq:nwz-k-flow}. We define $y(e^+):=\frac{z(e^+)}{z(e^+)+z(e^-)}$ and $y(e^-):=\frac{z(e^-)}{z(e^+)+z(e^-)}\ \forall e\in E$. It holds that
    \[
    \frac{1}{k-1}z(e^+)\leq \frac{z(e^+)}{z(e^+)+z(e^-)}\leq z(e^+).
    \]
    Thus, $\frac{1}{k-1}z(e^+)\leq y(e^+)\leq z(e^+)$. The same inequality holds for $e^-$. Moreover, $y(e^+)+y(e^-)=1$. Thus, $(y,z)$ is feasible for \eqref{eq:k-cut-balanced+nwz-k-flow}.
\end{proof}

\begin{remark}
If we drop the upper bound constraint $z(e^+)+z(e^-)\leq k-1$ for every $e\in E$ from \eqref{eq:nwz-k-flow}, then the resulting LP is an LP-relaxation of the asymmetric postman problem and it has been shown to be half-integral in several earlier works \cite{kappauf1979mixed, win1989windy, ralphs1993mixed, zaragoza2003postman}. 
\end{remark}

We show that \eqref{eq:nwz-k-flow} is half-integral. Our proof is inspired by extreme point results for postman problems on mixed graphs \cite{zaragoza2003postman}. In fact, we show additional stronger properties as stated in the following theorem.

\begin{theorem}\label{thm:APP_half_integral}
Let $z^*$ be an extreme point optimal solution to \eqref{eq:nwz-k-flow}. Then,
\begin{enumerate}[label={(\arabic*)}]
    \item for every $e\in E$, if $e^+,e^-\in \supp(z^*)$, then $z^*(e^+)+z^*(e^-)=1$, 
    \item for every $e\in E$, $z^*(e^+)\in \Z$ if and only if $z^*(e^-)\in \Z$, 
    \item for every $e\in E^+\cup E^-$, if $z^*(e)$ is non-integral, then $z^*(e)=\frac{1}{2}$, and 
    \item the integral arcs of $z^*$ form a $k$-flow.
\end{enumerate}
\end{theorem}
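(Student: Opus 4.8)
The plan is to read the structure of $z^*$ off the constraints that are binding at a vertex, after a change of variables that makes the relevant linear system totally unimodular; this mirrors the half-integrality proofs for postman LPs on mixed graphs, now with the extra upper bounds $z(e^+)+z(e^-)\le k-1$. First I would use the standard characterization of vertices: $z^*$ is a vertex of \eqref{eq:nwz-k-flow} iff the only $w\in\R^{E^+\cup E^-}$ with (a) $w(\delta^+_{\vec G}(v))=w(\delta^-_{\vec G}(v))$ for all $v$, (b) $w(a)=0$ whenever $z^*(a)=0$, and (c) $w(e^+)+w(e^-)=0$ whenever $z^*(e^+)+z^*(e^-)\in\{1,k-1\}$, is $w=0$. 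Since $w=\chi_{e^+}+\chi_{e^-}$ is a circulation (the closed walk $u\to v\to u$ along $e$) that changes no arc-pair sum except that of $e$, a first consequence is: if $e^+,e^-\in\supp(z^*)$ then $z^*(e^+)+z^*(e^-)\in\{1,k-1\}$, and in particular every edge with $z^*(e^+)+z^*(e^-)\notin\{1,k-1\}$ has exactly one of its two arcs in $\supp(z^*)$.

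For property (1), I would use optimality. Suppose $e^+,e^-\in\supp(z^*)$ with $z^*(e^+)+z^*(e^-)>1$. Then $z^*-\varepsilon(\chi_{e^+}+\chi_{e^-})$ is feasible for small $\varepsilon>0$ and, as $c\ge 0$, costs no more than $z^*$. If $z^*(e^+)+z^*(e^-)<k-1$, then $z^*+\varepsilon(\chi_{e^+}+\chi_{e^-})$ is also feasible, exhibiting $z^*$ as a proper convex combination and contradicting that it is a vertex. If $z^*(e^+)+z^*(e^-)=k-1$, the same decrease step strictly lowers the cost whenever $c(e^+)+c(e^-)>0$, contradicting optimality; hence $z^*(e^+)+z^*(e^-)=1$ (the degenerate subcase $c(e^+)=c(e^-)=0$ requires a separate argument among optimal solutions). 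Consequently: edges with both arcs in $\supp(z^*)$ satisfy $z^*(e^+)+z^*(e^-)=1$; edges with $z^*(e^+)+z^*(e^-)=k-1$ have a single positive arc, valued $k-1$; and all other edges also have a single positive arc.

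For half-integrality and properties (2),(3), delete every arc $a$ with $z^*(a)=0$. On each surviving tight edge with only one arc left, that arc is fixed to the integer $1$ or $k-1$, so eliminate its variable; on each surviving tight edge $e=uv$ with both arcs left (so $z^*(e^+)+z^*(e^-)=1$ by the previous paragraph), substitute $z^*(e^-)=1-z^*(e^+)$ and keep $z^*(e^+)$. The only remaining constraints are flow conservation, and because the contribution of a substituted edge to the conservation equation at $u$ is $z^*(e^+)-z^*(e^-)=2z^*(e^+)-1$, the system in the new variables takes the form $A_1z_{\mathrm{free}}+2A_2z_{\mathrm{tight}}=b$ with $[A_1\mid A_2]$ a directed node--arc incidence matrix (hence totally unimodular) and $b\in\Z^V$. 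Since $z^*$ is a vertex with all surviving variables strictly positive, these equations pin them down; taking a nonsingular square submatrix and writing it as $\tilde BD$ with $\tilde B$ totally unimodular and $D$ diagonal with entries in $\{1,2\}$, Cramer's rule gives the surviving part of $z^*$ as $D^{-1}\tilde B^{-1}b$ with $\tilde B^{-1}b\in\Z$. Hence each free-edge variable is an integer and each substituted variable $z^*(e^+)$ is a half-integer in $(0,1)$, i.e.\ $z^*(e^+)=z^*(e^-)=\tfrac12$. Thus the non-integral arcs are exactly the oppositely-directed pairs $\{e^+,e^-\}$ of such edges (property (2)), all of value $\tfrac12$ (property (3)).

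Finally, property (4): by property (1) no edge has both arcs carrying positive integer flow, so $\vec F:=\{a:z^*(a)\in\Z_{>0}\}$ is a partial orientation, and $z^*(a)\in\{1,\dots,k-1\}$ on $\vec F$ since $z^*(e^\pm)\le z^*(e^+)+z^*(e^-)\le k-1$. By (2)--(3) the non-integral arcs occur in oppositely-directed pairs of value $\tfrac12$, so at each $v$ they contribute equally to $\delta^+_{\vec G}(v)$ and $\delta^-_{\vec G}(v)$; subtracting this from $z^*(\delta^+_{\vec G}(v))=z^*(\delta^-_{\vec G}(v))$ shows $\vec F$ conserves flow, so $(\vec F,z^*|_{\vec F})$ is a $k$-flow. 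I expect the main obstacle to be the third paragraph: verifying the exact form $A_1z_{\mathrm{free}}+2A_2z_{\mathrm{tight}}=b$ and its totally unimodular structure, checking that the surviving variables are genuinely determined (the rank and connected-component bookkeeping), and the $\tilde B^{-1}b\in\Z$ step; a secondary delicate point is the all-zero-cost case in the proof of property (1).
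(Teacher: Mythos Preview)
Your approach is sound and takes a genuinely different route from the paper for (2)--(4). For (1), both you and the paper decrease $z^*(e^+),z^*(e^-)$ simultaneously and invoke optimality together with $c\ge 0$; the zero-cost subcase you flag is a real gap that the paper's one-line argument shares (e.g.\ on two vertices joined by parallel edges $e,f$ with $k=4$ and costs $c(e^\pm)=0$, $c(f^\pm)=1$, the point $z^*=(z^*(e^+),z^*(e^-),z^*(f^+),z^*(f^-))=(2,1,0,1)$ is an extreme optimal solution with $z^*(e^+)+z^*(e^-)=k-1$). This is harmless for the downstream algorithm, since one may always pass to an extreme optimum that additionally minimizes $\sum_a z(a)$, which forces (1).

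For (2)--(4), the paper argues purely combinatorially, with no TU machinery. It first shows that the set $F$ of fractional edges is acyclic: along any undirected cycle $C\subseteq F$ it classifies the arcs into three types (both orientations fractional; only the clockwise one fractional; only the counterclockwise one) and constructs explicit feasible perturbations $z',z''$ with $z^*=\tfrac12(z'+z'')$, contradicting extremality. It then peels leaves of the forest $F$: at a leaf $v$ incident to a unique $e\in F$, conservation at $v$ forces $z^*(e^+)-z^*(e^-)\in\Z$, and (1) gives $z^*(e^+)+z^*(e^-)=1$, so both equal $\tfrac12$; removing $e$ preserves the induction hypothesis. Your substitution/TU/Cramer route is the standard algebraic paradigm for half-integrality (as in mixed-postman and $T$-join polyhedra) and is a perfectly valid alternative; the rank bookkeeping you worry about goes through because at a vertex each eliminated zero, fixed, or substituted variable removes exactly one tight constraint, so the residual conservation system $[A_1\mid 2A_2]x=b$ has full column rank. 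The trade-off: the paper's argument is more elementary and delivers the bonus structural fact that the fractional edges form a forest; yours is more systematic once the node--arc incidence structure is identified, and plugs directly into a known template.
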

\begin{proof}
        We first prove (1). Suppose $e^+,e^-\in \supp(z^*)$ and $z^*(e^+)+z^*(e^-)>1$. Decreasing both $z^*(e^+)$ and $z^*(e^-)$ by some sufficiently small $\epsilon>0$ gives a feasible solution whose cost is at most $c^\intercal z^*$, 
        contradicting the fact that $z^*$ is an extreme point optimal solution.
        
        We show an intermediate claim that will help prove the remaining three statements of the theorem. Let $F:=\{e\in E: z^*(e^+)\notin \Z \text{ or } z^*(e^-)\notin \Z\}$, which we call the \emph{fractional edges} w.r.t. $z^*$. We claim that $F$ is acyclic. Suppose not. Let $C\subseteq F$ be an undirected cycle. Let $\vec{C}$ be the corresponding directed cycle, where all edges are oriented clockwise. Let $\cev{C}:=(\vec{C})^{-1}$ be corresponding directed cycle oriented counterclockwise. It follows from (1) that for every $e\in F$, if $z(e^+)\notin \Z$, then either $z^*(e^-)\notin \Z$ or $z^*(e^-)=0$. Therefore, there are three types of arcs $e\in \vec{C}\cup \cev{C}$ with $z^*(e)>0$: $C^0:=\{e\in \vec{C}\cup \cev{C}: z^*(e),z^*(e^{-1})\notin \Z\}$, $C^+:=\{e\in \vec{C}: z^*(e)\notin \Z, z^*(e^{-1})=0\}$, and $C^-:=\{e\in \cev{C}: z^*(e)\notin \Z, z^*(e^{-1})=0\}$. For a sufficiently small $\epsilon>0$, let
        \[
        \begin{aligned}
            z'(e)=\begin{cases}
            z^*(e)+\epsilon, & e\in C^+\\
            z^*(e)-\epsilon, & e\in C^-\\
            z^*(e)+\epsilon/2, & e\in C^0\cap \vec{C}\\
            z^*(e)-\epsilon/2, & e\in C^0\cap \cev{C}\\
            z^*(e), & o/w.
        \end{cases}
        \end{aligned}
        \quad\text{and}\quad
        \begin{aligned}
            z''(e)=\begin{cases}
            z^*(e)-\epsilon, & e\in C^+\\
            z^*(e)+\epsilon, & e\in C^-\\
            z^*(e)-\epsilon/2, & e\in C^0\cap \vec{C}\\
            z^*(e)+\epsilon/2, & e\in C^0\cap \cev{C}\\
            z^*(e), & o/w.
        \end{cases}
        \end{aligned}
        \]
We claim that both $z'$ and $z''$ are feasible to \eqref{eq:nwz-k-flow}: we prove this for $z'$ and the statement for $z''$ follows by symmetric arguments. Flow conservation holds because we obtained $z'$ from $z^*$ by pushing $\epsilon$ units of flow along $\vec{C}$ (it is equivalent to view pushing $-\epsilon$ flow along $e\in \cev{C}$ as pushing $\epsilon$ flow along $e\in \vec{C}$). Moreover, for $e\in C^+\cup C^-$, since $z^*(e)\notin \Z$ and $z^*(e^{-1})=0$, we have that $1<z^*(e)+z^*(e^{-1})<k-1$. Thus, for sufficiently small $\epsilon>0$, $1\leq z'(e)+z'(e^{-1})\leq k-1$. Finally, $z'(e)\geq 0$ follows from the fact that for every $e\in C^+\cup C^-\cup C^0$, $z^*(e)>0$. This proves that both $z'$ and, by symmetry, $z''$ are feasible to \eqref{eq:nwz-k-flow}. However, $z^*=(z'+z'')/2$, which is a contradiction to the fact that $z^*$ is an extreme point. 

We next prove (2) and (3) together by induction. Since $F$ is acyclic, it forms a forest. We remove the leaves of $F$ one by one and delete non-integral edges along the way, maintaining the property that the flow $z^*$ is integral on $E\setminus F$. Pick a leaf node $v$ of $F$ incident to a unique edge $e=(u,v)\in F$. If $z^*(e^+)\in \Z$, we immediately obtain that $z^*(e^-)\in \Z$ due to the flow conservation at $v$ and the fact that every arc $f$ incident to $v$ other than $e^-$ has integral flow $z^*(f)$. Therefore, we may assume that both $z^*(e^+), z^*(e^-)\notin \Z$. In this case, $z^*(e^+)-z^*(e^-)\in \Z$ by flow conservation at $v$. It follows from (1) that $z^*(e^+)+z^*(e^-)=1$. Therefore, we conclude that $z^*(e^+)=z^*(e^-)=1/2$. We delete $e^+$ and $e^-$ together. Since $z^*(e^+)=z^*(e^-)$, the property that $z^*$ is a circulation is preserved. This proves that (2) $z^*(e^+)\in \Z$ if and only if $z^*(e^-)\in \Z$; (3) if $z^*(e)$ is non-integral, then $z^*(e)=\frac{1}{2}$. 

We finally prove (4). It follows from (3) that the integral arcs of $z^*$ form a circulation. Moreover, $z^*(e)\leq k-1\ \forall e\in E^+\cup E^-$. Furthermore, if both $z^*(e^+),z^*(e^-)\in \Z$, it follows from (1) that exactly one of them is in $\supp(z^*)$. Thus, $\supp(z^*)$ is a partial orientation which induces a $k$-flow.
\end{proof}

\subsection{Bicriteria approximation for $\operatorname{WNZF}(k)$}\label{sec:WNZF-approx}
In this section, we prove Theorem \ref{thm:approx-nwz-kflow} by giving a bicriteria approximation algorithm for $\operatorname{WNZF}(k)$, for every finite integer $k\geq 6$ and for $k=\infty$. We recall that an algorithm is an $(\alpha,\beta)$-approximation for $\operatorname{WNZF}(k)$ if we return a nowhere-zero $\beta k$-flow with cost at most $\alpha c(\opt)$ where $c(\opt)$ is the minimum cost of a nowhere-zero $k$-flow. 

\begin{proof}[Proof of Theorem \ref{thm:approx-nwz-kflow}]
Our algorithm proceeds as follows: Solve the LP relaxation \eqref{eq:nwz-k-flow} and let $z^*$ be an extreme point optimal solution. According to Theorem \ref{thm:APP_half_integral} (4), the integral arcs of $z^*$ form a $k$-flow, denoted as $f$. Let $g$ be an arbitrary nowhere-zero $6$-flow of $E$, which can be computed in polynomial time \cite{younger1983integer}. Then, $-g$ is also a nowhere-zero $6$-flow. Our algorithm returns $6f+g$ or $6f-g$, whichever has a smaller cost. Now, we bound its approximation guarantee. 

    According to Proposition \ref{prop:sum-flows}, both $6f+g$ and $6f-g$ are nowhere-zero $6k$-flows. We prove that $\min\{c(6f+g),c(6f-g)\}\leq 6 c(z^*)$. Let $E_1:=\supp(f)$ be the edges oriented by $f$, i.e., $\{e\in E^+\cup E^-: f(e)\neq 0\}$. Let $\vec{E}_1:=\supp^+(f)$ be the orientation associated with $f$, i.e., $\{e\in E^+\cup E^-: f(e)>0\}$. Let $E_2=E\setminus E_1$ and $\vec{E}_2$ be the orientation associated with $g$ restricted to $E_2$. Then, $\cev{E}_2=(\vec{E}_2)^{-1}$ is the orientation associated with $-g$ restricted to $E_2$. Let $\vec{E}:=\vec{E}_1\cup \vec{E}_2$, which will be the orientation associated with $6f+g$ according to Proposition \ref{prop:sum-flows}. Similarly, let $\vec{E}':=\vec{E}_1\cup \cev{E}_2$, which will be the orientation associated with $6f-g$. Then,
    
\[
\begin{aligned}
    &\min\{c(6f+g),c(6f-g)\}\\
    &\quad \quad \leq \frac{1}{2}\Big(\sum_{e\in \vec{E}} c(e)(6f(e)+g(e))+\sum_{e\in \vec{E}'} c(e)(6f(e)-g(e))\Big)\\
    &\quad \quad =\frac{1}{2}\sum_{e\in \vec{E}_1} c(e)\Big(\big(6f(e)+g(e)\big)+\big(6f(e)-g(e)\big)\Big)+\frac{1}{2}\sum_{e\in \vec{E}_2} c(e)g(e)+\frac{1}{2}\sum_{e\in \cev{E}_2} c(e)(-g)(e)\\
    &\quad \quad =6\sum_{e\in \vec{E}_1} c(e)f(e)+\frac{1}{2}\sum_{e\in \vec{E}_2} \Big(c(e)g(e)+c(e^{-1})g(e)\Big)\\
    &\quad \quad \leq 6\sum_{e\in \vec{E}_1} c(e)f(e)+\frac{1}{2}\sum_{e\in \vec{E}_2} 5\Big(c(e)+c(e^{-1})\Big)\\
    &\quad \quad =6\sum_{e\in \vec{E}_1} c(e)z^*(e)+\sum_{e\in \vec{E}_2} 5\Big(c(e)z^*(e)+c(e^{-1})z^*(e^{-1})\Big)\\
    &\quad \quad \leq 6c^\top z^*,
\end{aligned}
\]
where the second equality follows from the definition of $-g$, in which $(-g)(e)=-g(e)=g(e^{-1})$. The second inequality follows from the fact that $g$ is a nowhere-zero $6$-flow and thus $g(e)\leq 5\ \forall e\in \vec{E}_2$. The third equality follows from the fact that $f(e)=z^*(e)\ \forall e\in \vec{E}_1$ and that $z^*(e)=z^*(e^{-1})=\frac{1}{2}\ \forall e\in E_2$ according to Theorem \ref{thm:APP_half_integral} (3). 
    
\end{proof}


\subsection{LP relaxation for $\operatorname{WCBO}(k)$ and its extreme point structure}\label{sec:WCBO-LP}
We can project the feasible region of \eqref{eq:k-cut-balanced+nwz-k-flow} on the $y$ variables, which yields an LP relaxation for $\operatorname{WCBO}(k)$.
\begin{equation}\label{eq:k-cut_balanced}
\begin{aligned}
    \min~&c^\intercal y\\
    s.t.~&y(\delta_{\vec{G}}^+(U))\leq \frac{k-1}{k}|\delta_E(U)|\quad \forall U\subseteq V\\
    &y(e^+)+y(e^-)=1\quad \forall e\in E\\
        &y\geq 0.
\end{aligned}\tag{$\mathcal{Q}_k$}
\end{equation}

\begin{lemma}\label{lemma:proj-y}
    The polytope in \eqref{eq:k-cut_balanced} is the projection of the polytope in \eqref{eq:k-cut-balanced+nwz-k-flow} on the $y$ variables.
\end{lemma}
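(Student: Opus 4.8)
The plan is to mirror the proof of Lemma~\ref{lemma:proj-z}, establishing the two inclusions separately. For the easy inclusion, I would take any $(y,z)$ feasible for \eqref{eq:k-cut-balanced+nwz-k-flow} and show that $y$ satisfies \eqref{eq:k-cut_balanced}. The constraints $y(e^+)+y(e^-)=1$ and $y\ge 0$ carry over verbatim, so the only thing to check is the cut inequality. Summing the flow-conservation equations $z(\delta_{\vec{G}}^+(v))=z(\delta_{\vec{G}}^-(v))$ over $v\in U$ gives $z(\delta_{\vec{G}}^+(U))=z(\delta_{\vec{G}}^-(U))$, and then the sandwich $y\le z\le (k-1)y$ yields $y(\delta_{\vec{G}}^+(U))\le z(\delta_{\vec{G}}^+(U))=z(\delta_{\vec{G}}^-(U))\le (k-1)y(\delta_{\vec{G}}^-(U))$. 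Since each edge $e\in\delta_E(U)$ contributes one of $e^+,e^-$ to $\delta_{\vec{G}}^+(U)$ and the other to $\delta_{\vec{G}}^-(U)$, we have $y(\delta_{\vec{G}}^+(U))+y(\delta_{\vec{G}}^-(U))=|\delta_E(U)|$; combining this with the previous display gives $k\,y(\delta_{\vec{G}}^+(U))\le (k-1)|\delta_E(U)|$, which is exactly the constraint of \eqref{eq:k-cut_balanced}.

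For the reverse inclusion I would fix $y$ feasible for \eqref{eq:k-cut_balanced} and construct a $z$ with $(y,z)$ feasible for \eqref{eq:k-cut-balanced+nwz-k-flow}; this amounts to finding a circulation $z$ on $\vec{G}$ obeying the box constraints $y(e)\le z(e)\le (k-1)y(e)$ for every arc $e\in E^+\cup E^-$. I would invoke Hoffman's circulation theorem~\cite{hoffman2003some}: since the per-arc bounds are feasible (as $k\ge 2$ and $y\ge 0$ give $y(e)\le (k-1)y(e)$), such a circulation exists iff $y(\delta_{\vec{G}}^-(U))\le (k-1)\,y(\delta_{\vec{G}}^+(U))$ for every $U\subseteq V$ (the cases $U=\emptyset,V$ being trivial). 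The crux is to recognize that this is precisely what \eqref{eq:k-cut_balanced} delivers, but applied at the complement: the constraint of \eqref{eq:k-cut_balanced} for $V\setminus U$ reads $y(\delta_{\vec{G}}^+(V\setminus U))\le\frac{k-1}{k}|\delta_E(V\setminus U)|$, and since $\delta_{\vec{G}}^+(V\setminus U)=\delta_{\vec{G}}^-(U)$ and $\delta_E(V\setminus U)=\delta_E(U)$, this says $y(\delta_{\vec{G}}^-(U))\le\frac{k-1}{k}|\delta_E(U)|$. Using $y(\delta_{\vec{G}}^+(U))=|\delta_E(U)|-y(\delta_{\vec{G}}^-(U))\ge\frac{1}{k}|\delta_E(U)|$ then gives $y(\delta_{\vec{G}}^-(U))\le\frac{k-1}{k}|\delta_E(U)|\le (k-1)\,y(\delta_{\vec{G}}^+(U))$, the Hoffman condition. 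Given such a $z$, all constraints of \eqref{eq:k-cut-balanced+nwz-k-flow} hold ($z$ is a circulation, $y\le z\le (k-1)y$ by construction, and $y(e^+)+y(e^-)=1$, $y\ge 0$ are inherited), so $y$ lies in the projection.

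I expect the only real subtlety to be in this second inclusion: getting the orientation of Hoffman's theorem right and observing that it must be fed the \eqref{eq:k-cut_balanced} inequality at $V\setminus U$ rather than at $U$ in order to obtain the lower-bound-flavored inequality the theorem requires. Everything else is routine aggregation of conservation/box constraints over cuts, exactly paralleling the proof of Lemma~\ref{lemma:proj-z} and the forward direction of Lemma~\ref{lemma:equivalence}.
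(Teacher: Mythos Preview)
Your argument is correct. The forward inclusion matches the paper verbatim. For the reverse inclusion, however, you take a genuinely different route from the paper.

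The paper, given a feasible $y$, assumes its coordinates are rational, scales by a common denominator $M$ to obtain an integral vector, builds a multigraph with $M$ parallel copies of each edge oriented according to $My$, observes this is a $k$-cut-balanced orientation, invokes Lemma~\ref{lemma:equivalence} to get an integer nowhere-zero $k$-flow on the multigraph, and then averages over the copies to produce the desired fractional $z$. Your approach instead applies Hoffman's theorem directly with fractional bounds $y(e)\le z(e)\le (k-1)y(e)$ and verifies the Hoffman condition from the cut inequality of \eqref{eq:k-cut_balanced} at the complement. This is more direct: it handles irrational $y$ without any appeal to density or extreme-point rationality, and it does not need the detour through Lemma~\ref{lemma:equivalence} and a blown-up multigraph. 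In effect, you are reproving the forward direction of Lemma~\ref{lemma:equivalence} in a fractional setting rather than reducing to the integral case. The paper's approach, on the other hand, makes the conceptual link between \eqref{eq:k-cut_balanced} and actual $k$-cut-balanced orientations more explicit.
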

\begin{proof}
    Let $(y,z)$ be a feasible solution to \eqref{eq:k-cut-balanced+nwz-k-flow}. Then, for every $U\subseteq V$,
    \[
    \begin{aligned}
        y(\delta_{\vec{G}}^+(U))\leq z(\delta_{\vec{G}}^+(U))=&z(\delta_{\vec{G}}^-(U))\leq (k-1)y(\delta_{\vec{G}}^-(U)),\\
        y(\delta_{\vec{G}}^+(U))+y(\delta_{\vec{G}}^-(U))=&\sum_{e\in \delta_E(U)}\big(y(e^+)+y(e^-)\big)=|\delta_E(U)|.
    \end{aligned}
    \]
    Therefore, \[|\delta_E(U)|=y(\delta_{\vec{G}}^+(U))+y(\delta_{\vec{G}}^-(U))\geq y(\delta_{\vec{G}}^+(U))+\frac{1}{k-1}y(\delta_{\vec{G}}^+(U))=\frac{k}{k-1}y(\delta_{\vec{G}}^+(U)).\]
    Thus, $y$ is feasible for $\eqref{eq:k-cut_balanced}$.

    Let $y$ be a feasible solution to $\eqref{eq:k-cut_balanced}$. Suppose the least common multiple of the coordinates of $y$ is $M$. Construct a graph $G'=(V,E')$ with $M$ copies of each edge $e\in E$ and an orientation $\vec{E}'$ such that $My(e^+)$ arcs are oriented the same as $e^+$ and the remaining $M-My(e^+)=My(e^-)$ arcs are oriented the same as $e^-$, denoted by $\vec{E}'(e^+)$ and $\vec{E}'(e^-)$, respectively. Since every arc is multiplied by a same factor from the fractional $y$, the new graph satisfies
    \[
    |\delta_{\vec{E}'}^+(U)|\leq \frac{k-1}{k}|\delta_{E'}(U)|\quad \forall U\subseteq V,
    \]
    which means $\vec{E}'$ is a $k$-cut-balanced orientation of $G'$. By Lemma \ref{lemma:equivalence}, $\vec{E}'$ induces a nowhere-zero $k$-flow $z':\vec{E}'\rightarrow \{1,2,...,k-1\}$. Let $z(e^+):=\frac{1}{M}\sum_{e'\in \vec{E}'(e^+)} z'(e')$ and $z(e^-):=\frac{1}{M}\sum_{e'\in \vec{E}'(e^-)} z'(e')\ \forall e\in E$. It follows from $z'$ being a flow that $z$ is a flow. Moreover, for every $e\in E^+\cup E^-$, $y(e)=\frac{1}{M} |\vec{E}'(e)|\leq z(e)=\frac{1}{M}\sum_{e'\in \vec{E}'(e)} z'(e')\leq \frac{1}{M}(k-1) |\vec{E}'(e)|=(k-1)y(e)$. Thus, $(y,z)$ is feasible for \eqref{eq:k-cut-balanced+nwz-k-flow}.
\end{proof}
To prove extreme point properties of \eqref{eq:k-cut_balanced}, we need the following seminal theorem of Edmonds and Giles \cite{edmonds1977min}.
\begin{theorem}[\cite{edmonds1977min}]\label{thm:submodular_flow}
    Let $D=(V,A)$ be a digraph and $f:2^V\rightarrow \R$ be a submodular function. Consider the \emph{submodular flow} polyhedron defined as follows:
    \[
P(f):=\big\{y: A\rightarrow \R \mid y(\delta^+(U))-y(\delta^-(U))\leq f(U)\quad \forall U\subseteq V\big\}.
\]
If $f$ is integral, then $P(f)$ is  \emph{box-integral}, i.e., $P(f)\cap \{x: l\leq x\leq u\}$ is integral for every $l,u\in \Z$. 
\end{theorem}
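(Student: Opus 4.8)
The plan is to prove the stronger statement that the system $\{\,y(\delta^+(U))-y(\delta^-(U))\le f(U)\ (U\subseteq V),\ l\le y\le u\,\}$ is \emph{box-totally-dual-integral}; box-integrality of $P(f)$ then follows from the standard fact that a box-TDI system with integral right-hand side defines an integral polyhedron. The engine of the argument is an uncrossing step powered by submodularity of $f$, combined with the total unimodularity of the incidence matrix of a cross-free set family with the arc set of $D$.

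First I would record the \emph{modular identity}: for an arc $a=(x,y)$, the $a$-th coordinate of $\chi^{\delta^+(U)}-\chi^{\delta^-(U)}\in\R^A$ equals $\mathbf 1_U(x)-\mathbf 1_U(y)$, so since $\mathbf 1_U+\mathbf 1_W=\mathbf 1_{U\cap W}+\mathbf 1_{U\cup W}$ pointwise we get $(\chi^{\delta^+(U)}-\chi^{\delta^-(U)})+(\chi^{\delta^+(W)}-\chi^{\delta^-(W)})=(\chi^{\delta^+(U\cap W)}-\chi^{\delta^-(U\cap W)})+(\chi^{\delta^+(U\cup W)}-\chi^{\delta^-(U\cup W)})$ for all $U,W\subseteq V$. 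Now fix $w\in\Z^A$, consider $\max\{w^\top y: y\in P(f),\ l\le y\le u\}$, and take its LP dual, whose variables are $x\ge 0$ on the set-constraints and two nonnegative vectors on the box constraints. Among all optimal dual solutions, pick one minimizing the linear functional $\sum_U x_U\,|U|\,(|V|-|U|)$; this minimum is attained since the functional is bounded below by $0$ on the (nonempty) dual-optimal polyhedron. If two sets $U,W$ in the support of this $x$ cross, move weight $\varepsilon=\min\{x_U,x_W\}$ off $U,W$ and onto $U\cap W,U\cup W$: the modular identity shows the shifted vector still satisfies every dual equality, submodularity $f(U)+f(W)\ge f(U\cap W)+f(U\cup W)$ shows the dual objective does not increase, and strict concavity of $t\mapsto t(|V|-t)$ shows the chosen functional strictly decreases — contradicting minimality (the degenerate cases $U\cap W=\emptyset$ or $U\cup W=V$ are handled identically). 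Hence the support $\mathcal L=\{U:x_U>0\}$ is cross-free.

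It then remains to carry out the rounding step on the support. The only dual rows carrying nonzero weight are $\{\chi^{\delta^+(U)}-\chi^{\delta^-(U)}:U\in\mathcal L\}$ together with the $\pm$ unit vectors coming from the box constraints. Because $\mathcal L$ is cross-free it is representable by a directed tree, and $\big(\mathbf 1_U(x)-\mathbf 1_U(y)\big)_{U\in\mathcal L,\,(x,y)\in A}$ is exactly a \emph{network matrix}; network matrices are totally unimodular, and bordering a totally unimodular matrix with unit rows preserves total unimodularity. Since the right-hand sides $(f(U))_{U\in\mathcal L}$, $l$, and $u$ are integral, the dual LP restricted to this support — equivalently the original dual LP — has an integral optimum. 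As $w$ was an arbitrary integral vector, the system is box-TDI, and in particular $P(f)\cap\{l\le x\le u\}$ is integral for all integral $l,u$.

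The main obstacle, and the step I would write most carefully, is the uncrossing: one must verify that the weight shift preserves dual feasibility via the modular identity, that submodularity (indeed only intersecting-submodularity is used) controls the dual objective, and that the potential $\sum_U x_U|U|(|V|-|U|)$ genuinely strictly decreases so that a dual-optimal solution with cross-free support is guaranteed. The single classical input I would cite rather than reprove is that the incidence matrix of a cross-free family with the arcs is a network matrix, hence totally unimodular. A duality-free variant of the whole argument is also available and perhaps shorter: take a vertex $y^*$ of $P(f)\cap\{l\le y\le u\}$, uncross the family of tight set-constraints of $y^*$ (using submodularity plus the modular identity to retain a spanning cross-free subfamily), and observe that $y^*$ is then the unique solution of a nonsingular subsystem whose matrix is a network matrix bordered by unit rows; an integral right-hand side forces $y^*$ to be integral.
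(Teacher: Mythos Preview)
The paper does not prove this theorem; it is quoted as a classical result of Edmonds and Giles \cite{edmonds1977min} and used as a black box in the proof of Lemma~\ref{lemma:1/k-integral}. There is therefore no paper proof to compare against.

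That said, your sketch is the standard uncrossing proof of the Edmonds--Giles theorem and is correct as outlined: the modular identity for the row vectors $\chi^{\delta^+(U)}-\chi^{\delta^-(U)}$ together with submodularity of $f$ drives the uncrossing of a dual optimum, the strictly concave potential $\sum_U x_U\,|U|(|V|-|U|)$ guarantees termination with a cross-free support, and the incidence matrix of a cross-free family against the arc set is a network matrix, hence totally unimodular. The duality-free variant you mention at the end (uncross the tight constraints at a vertex and invoke TU directly) is equally standard and arguably cleaner for this particular statement, since you only need integrality rather than full box-TDI. Either route would serve as a self-contained justification if one were desired, but the paper---like most of the literature that invokes submodular flows---is content to cite the result.
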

The following theorem proves that after projecting \eqref{eq:k-cut_balanced} onto $\{y(e^+): e^+\in E^+\}$, we obtain a submodular flow where $f$ is $1/k$-integral.
\begin{lemma}\label{lemma:1/k-integral}
    The extreme points of \eqref{eq:k-cut_balanced} are $1/k$-integral.
\end{lemma}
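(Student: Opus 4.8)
The plan is to eliminate the equality constraints $y(e^+)+y(e^-)=1$ and recognize what remains as a box-constrained submodular flow polyhedron, so that Theorem~\ref{thm:submodular_flow} (Edmonds--Giles) applies after scaling by $k$.

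First I would fix an arbitrary reference orientation of $G$; concretely, regard $D_0=(V,A_0)$ with $A_0=E^+$ as a digraph with one arc $e^+$ per edge $e$. Given a feasible $y$ for \eqref{eq:k-cut_balanced}, set $x:=y|_{E^+}\in\R^{E^+}$ and substitute $y(e^-)=1-y(e^+)$ everywhere. Writing $\delta^{\pm}_{A_0}(U)$ for the out-/in-arcs of $D_0$ at $U$, one checks that $\delta^+_{\vec{G}}(U)$ consists exactly of the arcs $e^+\in\delta^+_{A_0}(U)$ together with the arcs $e^-$ for which $e^+\in\delta^-_{A_0}(U)$; hence $y(\delta^+_{\vec{G}}(U))=x(\delta^+_{A_0}(U))-x(\delta^-_{A_0}(U))+|\delta^-_{A_0}(U)|$. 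Using $|\delta_E(U)|=|\delta^+_{A_0}(U)|+|\delta^-_{A_0}(U)|$, the cut constraint $y(\delta^+_{\vec{G}}(U))\le\frac{k-1}{k}|\delta_E(U)|$ becomes
\[x(\delta^+_{A_0}(U))-x(\delta^-_{A_0}(U))\le g(U):=\frac{k-1}{k}|\delta^+_{A_0}(U)|-\frac{1}{k}|\delta^-_{A_0}(U)|,\]
while $y\ge 0$ together with $y(e^-)=1-y(e^+)$ becomes the box $0\le x\le 1$. The map $y\mapsto x=y|_{E^+}$ is an affine isomorphism from the feasible region of \eqref{eq:k-cut_balanced} onto $P(g)\cap\{0\le x\le 1\}$, with inverse $x\mapsto(x,\mathbf{1}-x)$; it carries extreme points to extreme points and preserves $1/k$-integrality. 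So it suffices to show that the extreme points of $P(g)\cap\{0\le x\le 1\}$ are $1/k$-integral.

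The crux is that $g$ is submodular. This is not obvious since $g$ is a difference of two submodular cut functions, and the fix is to split off the modular part:
\[g(U)=\frac{1}{k}\big(|\delta^+_{A_0}(U)|-|\delta^-_{A_0}(U)|\big)+\frac{k-2}{k}\,|\delta^+_{A_0}(U)|.\]
The first summand equals $\frac{1}{k}\sum_{v\in U}\big(d^+_{D_0}(v)-d^-_{D_0}(v)\big)$, which is modular; the second is a nonnegative multiple (here we use $k\ge 2$) of the out-cut function of the digraph $D_0$, which is submodular. Hence $g$ is submodular, $g(\emptyset)=g(V)=0$, and $kg(U)=(k-1)|\delta^+_{A_0}(U)|-|\delta^-_{A_0}(U)|$ is integer-valued.

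Finally I would scale: $x\in P(g)\cap\{0\le x\le 1\}$ if and only if $kx\in P(kg)\cap\{0\le\,\cdot\,\le k\}$, and $kg$ is an integral submodular function on $2^V$. By Theorem~\ref{thm:submodular_flow}, $P(kg)$ is box-integral, so $P(kg)\cap\{0\le\,\cdot\,\le k\}$ is an integral polytope and all its extreme points are integer vectors; dividing by $k$ shows the extreme points of $P(g)\cap\{0\le x\le 1\}$ are $1/k$-integral, and pulling back through the affine isomorphism above proves the lemma. The only step requiring genuine care is the submodularity of $g$ via the modular/submodular decomposition; everything else is bookkeeping with cut functions plus the invocation of Edmonds--Giles.
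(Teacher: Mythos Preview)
Your proof is correct and follows essentially the same route as the paper: project onto the $E^+$ coordinates via $y(e^-)=1-y(e^+)$, rewrite the cut constraints as a submodular-flow inequality with right-hand side $\frac{k-1}{k}|\delta^+_{E^+}(U)|-\frac{1}{k}|\delta^-_{E^+}(U)|$, verify submodularity by the identical modular/submodular split, scale by $k$, and invoke Edmonds--Giles. The only cosmetic difference is that you phrase the final lifting as an affine isomorphism whereas the paper writes out the convex combination explicitly.
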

\begin{proof}
    For $y\in \R^{E^+\cup E^-}$, denote by $y|_{E^+}$ the restriction of $y$ to $E^+$. We claim that the projection of \eqref{eq:k-cut_balanced} onto $y|_{E^+}$ is
    \begin{equation}\label{eq:CBO_subflow}
\begin{aligned}
    P:=\Big\{y\in[0,1]^{E^+}:y(\delta_{E^+}^+(U))-y(\delta_{E^+}^-(U))\leq \frac{1}{k}\Big((k-1)|\delta_{E^+}^+(U)|-|\delta_{E^+}^-(U)|\Big)\quad \forall U\subseteq V\Big\}.
\end{aligned}
\end{equation}
To see this, let $y$ be a feasible solution for \eqref{eq:k-cut_balanced},  
Then, for $U\subseteq V$, we have that
    \[
    \begin{aligned}
        y(\delta^+_{\vec{G}}(U))=&y(\delta^+_{E^+}(U))+y(\delta^+_{E^-}(U))=y(\delta^+_{E^+}(U))+\sum_{e^-\in \delta^+_{E^-}(U)} y(e^-)\\
        =&y(\delta^+_{E^+}(U))+\sum_{e^+\in \delta^-_{E^+}(U)} (1-y(e^+))
        =y(\delta^+_{E^+}(U))+|\delta^-_{E^+}(U)|-y(\delta^-_{E^+}(U)),
    \end{aligned}
    \]
    where the third equality follows from $y(e^+)+y(e^-)=1\ \forall e\in E$. Thus, it follows from $y(\delta_{\vec{G}}^+(U))\leq \frac{k-1}{k}|\delta_E(U)|$ that
    \[
    \begin{aligned}
        &y(\delta^+_{E^+}(U))-y(\delta^-_{E^+}(U))\leq \frac{k-1}{k}|\delta_E(U)|-|\delta^-_{E^+}(U)|\\
        =&\frac{k-1}{k}\Big(|\delta_{E^+}^+(U)|+|\delta_{E^+}^-(U)|\Big)-|\delta^-_{E^+}(U)|
        =\frac{1}{k}\Big((k-1)|\delta_{E^+}^+(U)|-|\delta_{E^+}^-(U)|\Big).
    \end{aligned}
    \]
This implies that $y|_{E^+}\in P$, where $P$ is defined as \eqref{eq:CBO_subflow}. On the other hand, for every $y\in P$, define $\tilde{y}\in \R^{E^+\cup E^-}$ as $\tilde{y}(e^+):=y(e^+)$ and $\tilde{y}(e^-):=1-y(e^+)\ \forall e\in E$. The same argument shows that $\tilde{y}$ is feasible to \eqref{eq:k-cut_balanced}. Therefore, $P$ is the projection of \eqref{eq:k-cut_balanced} onto $y|_{E^+}$.
    
Let $f:2^V\rightarrow \R$ be defined as for $U\subseteq V$,
\[f(U):=\frac{1}{k}\Big((k-1)|\delta_{E^+}^+(U)|-|\delta_{E^+}^-(U)|\Big)=\frac{1}{k}\Big(\big(|\delta_{E^+}^+(U)|-|\delta_{E^+}^-(U)|\big)+(k-2)|\delta_{E^+}^+(U)|\Big).
\]
Since $|\delta_{E^+}^+(U)|-|\delta_{E^+}^-(U)|=\sum_{v\in U}(|\delta_{E^+}^+(v)|-|\delta_{E^+}^-(v)|)$ is modular, $|\delta_{E^+}^+(U)|$ is submodular, and $k\geq 2$, we conclude that $f$ is a $1/k$-integral submodular function. By Theorem \ref{thm:submodular_flow}, since $k\cdot f$ is integral, $P$ is $1/k$-integral. Thus, for every $y\in P$, $y=\sum_{i=1}^{t}\lambda_i y_i$ for some $\sum_{i=1}^t\lambda_i=1,\ \lambda\geq 0$ and $1/k$-integral $y_1,...,y_t\in P$. This implies that for every $\tilde{y}=(y,1-y)\in\R^{E^+}\times \R^{E^-}$ feasible for \eqref{eq:k-cut_balanced},
\[\tilde{y}=(y,1-y)=\Big(\sum_{i=1}^{t}\lambda_i y_i,1-\sum_{i=1}^{t}\lambda_i y_i\Big)=\sum_{i=1}^{t}\lambda_i (y_i, 1-y_i)=\sum_{i=1}^{t}\lambda_i\tilde{y}_i,
\]
where $\tilde{y}_i$ is $1/k$-integral and feasible for \eqref{eq:k-cut_balanced}. This implies that the extreme points of \eqref{eq:k-cut_balanced} are $1/k$-integral.
\end{proof}

\subsection{Bicriteria approximation algorithms for $\operatorname{WCBO}(k)$}\label{sec:WCBO-approx}
Before going into the algorithm, we need one more definition. Given a graph $G=(V,E)$, a \emph{partial $k$-cut-balanced orientation} is an orientation $\vec{F}$ of a subset $F\subseteq E$ of edges such that $|\delta_{\vec{F}}^+(U)|\leq \frac{k-1}{k}|\delta_{F}(U)|\ \forall U\subseteq V$. We emphasize that the number of outgoing arcs is upper bounded by a fraction of the number of arcs in $F$ instead of $E$.

In the following lemma, we give a characterization for the existence of a partial $k$-cut-balanced orientation that extends a partial orientation. The same characterization for $k=2$ was given by Ford and Fulkerson \cite{ford2015flows} (see also \cite{kappauf1979mixed}).
\begin{lemma}\label{lemma:complete_k-balanced}
     Given a graph $G=(V,E)$, suppose a subset $E_1\subseteq E$ of edges  has been oriented as $\vec{E}_1$. Then, $\vec{E}_1$ can be extended to a partial $k$-cut-balanced orientation if and only if 
\begin{equation}\label{eq:partial_k_cut_balanced}
         |\delta_{\vec{E_1}}^+(U)|\leq \frac{k-1}{k}|\delta_E(U)|\quad \forall U\subseteq V.
     \end{equation}
     Moreover, if $\vec{E}_1$ satisfies \eqref{eq:partial_k_cut_balanced}, then such a partial $k$-cut-balanced orientation can be found in polynomial time.
\end{lemma}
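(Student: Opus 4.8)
\textbf{Proof plan for Lemma \ref{lemma:complete_k-balanced}.}

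The plan is to reduce the extension problem to a feasibility question about orientations of the \emph{unoriented} part $E_2 := E \setminus E_1$, and then to recognize that this feasibility question is a submodular flow feasibility problem, so that both the characterization and the polynomial-time algorithm follow from Edmonds--Giles theory (Theorem \ref{thm:submodular_flow}). The necessity of \eqref{eq:partial_k_cut_balanced} is immediate: if $\vec{E}$ is any partial $k$-cut-balanced orientation extending $\vec{E}_1$ on edge set $E' \supseteq E_1$, then for every $U$ we have $|\delta_{\vec{E}_1}^+(U)| \le |\delta_{\vec{E}}^+(U)| \le \frac{k-1}{k}|\delta_{E'}(U)| \le \frac{k-1}{k}|\delta_E(U)|$, using monotonicity of $\delta^+$ under adding arcs and of $\delta$ under adding edges. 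The substance is the converse together with the algorithmic claim.

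For sufficiency, I would introduce a $\{0,1\}$-variable $x(e)$ for each $e \in E_2$ indicating whether $e$ is oriented as $e^+$ (say), and observe that orienting the edges of $E_2$ according to any fractional $x \in [0,1]^{E_2}$ contributes, across any cut $U$, an amount $x(\delta_{E_2^+}^+(U)) + (1 - x)(\delta_{E_2^+}^-(U))$ to $|\delta^+(U)|$ of the combined orientation on $E_1 \cup E_2$ — this is the same bookkeeping identity used in the proof of Lemma \ref{lemma:1/k-integral}. The constraint that the combined orientation be partial $k$-cut-balanced on $E_1 \cup E_2$ (note $\delta_{E_1 \cup E_2}(U) = \delta_E(U)$) rearranges, exactly as in that proof, into
\[
x(\delta_{E_2^+}^+(U)) - x(\delta_{E_2^+}^-(U)) \le \frac{k-1}{k}|\delta_E(U)| - |\delta_{\vec{E}_1}^+(U)| - |\delta_{E_2^+}^-(U)| \qquad \forall U \subseteq V.
\]
The right-hand side is a submodular function of $U$ (it is $\frac{k-1}{k}|\delta_E(U)|$ minus the modular terms $|\delta_{\vec{E}_1}^+(U)|$ and $|\delta_{E_2^+}^-(U)|$; recall $|\delta^+_B(U)|$ is submodular and $|\delta^+_B(U)| - |\delta^-_B(U)|$ is modular, so each individual piece decomposes into a submodular plus a modular part), so the feasible region is exactly $P(f) \cap [0,1]^{E_2}$ for an appropriate submodular $f$. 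By Theorem \ref{thm:submodular_flow}, $P(f) \cap [0,1]^{E_2}$ is box-integral; hence it is nonempty if and only if it contains an integral point, and it is nonempty as a polytope precisely when the ``empty orientation of $E_2$'' point $x \equiv$ (anything in $[0,1]^{E_2}$, or more directly the all-zero point shifted appropriately) is dominated — but cleanly, nonemptiness over the reals is equivalent to the fractional relaxation being feasible, and by box-integrality this is equivalent to the existence of an integral orientation of $E_2$. It remains to check that the fractional polytope $P(f) \cap [0,1]^{E_2}$ is nonempty exactly under hypothesis \eqref{eq:partial_k_cut_balanced}: taking $x \equiv \tfrac{1}{2}\mathbf{1}$ (or, more simply, observing that orienting all of $E_2$ arbitrarily and averaging the two global orientations $\vec{E}_1 \cup \vec{E}_2$ and $\vec{E}_1 \cup \cev{E}_2$ gives a fractional point whose outdegree across $U$ is $|\delta_{\vec{E}_1}^+(U)| + \tfrac{1}{2}|\delta_{E_2}(U)| \le |\delta_{\vec{E}_1}^+(U)| + \tfrac{1}{2}|\delta_E(U)|$), one sees this averaged point lies in $P(f)$ whenever $|\delta_{\vec{E}_1}^+(U)| \le \frac{k-1}{k}|\delta_E(U)|$ for all $U$, since then $|\delta_{\vec{E}_1}^+(U)| + \tfrac12|\delta_{E_2}(U)| \le \tfrac{k-1}{k}|\delta_E(U)|$ needs the slightly stronger-looking bound; here I would be careful and instead just exhibit directly that condition \eqref{eq:partial_k_cut_balanced} forces $P(f)$ nonempty by a short cut-by-cut estimate on the midpoint.

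For the polynomial-time claim, once we know the polytope is a box-intersected submodular flow polyhedron, an integral point (equivalently, the desired orientation of $E_2$) can be found by the combinatorial submodular flow feasibility algorithm, or — since $f$ here is a simple cut function — by a direct min-cost-flow / max-flow computation in an auxiliary digraph, exactly in the spirit of the $k=2$ construction of Ford--Fulkerson \cite{ford2015flows}; I would phrase it via Theorem \ref{thm:submodular_flow} to keep the proof self-contained.

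\textbf{Anticipated main obstacle.} The delicate point is the equivalence ``\eqref{eq:partial_k_cut_balanced} holds $\iff$ the submodular-flow polytope is nonempty,'' i.e. verifying that the averaged/midpoint fractional orientation actually satisfies every cut constraint under hypothesis \eqref{eq:partial_k_cut_balanced} and not something stronger. One must track the contribution of the already-oriented arcs $E_1$ separately from the free arcs $E_2$ across each $U$, and confirm that the slack $\frac{k-1}{k}|\delta_E(U)| - |\delta_{\vec{E}_1}^+(U)|$ is enough to absorb the worst-case fractional contribution $\tfrac12|\delta_{E_2}(U)|$ of the free edges — which works because $\tfrac12 \le \tfrac{k-1}{k}$ for $k \ge 2$ and $|\delta_{E_2}(U)| \le |\delta_E(U)|$. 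Getting this inequality chain exactly right (rather than off by the $|\delta_{E_2^+}^-(U)|$ term that appears in the submodular-flow reformulation) is the one spot demanding care; everything else is routine bookkeeping inherited from the proof of Lemma \ref{lemma:1/k-integral}.
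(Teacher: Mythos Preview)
Your approach has a genuine gap. You set up a $\{0,1\}$ variable $x(e)$ for each $e\in E_2$ and seek an integral point of the submodular-flow polytope $P(f)\cap[0,1]^{E_2}$; an integral $x$ would orient \emph{every} edge of $E_2$, producing a \emph{complete} $k$-cut-balanced orientation extending $\vec{E}_1$. But the lemma only promises a \emph{partial} orientation on some $F$ with $E_1\subseteq F\subseteq E$, and in general no complete extension exists. Concretely, take $k=6$, $E_1=\emptyset$, and $E$ a single edge $e$; condition \eqref{eq:partial_k_cut_balanced} holds trivially, your polytope is the interval $[1/6,5/6]$ (nonempty, containing the midpoint $1/2$), yet it contains no point of $\{0,1\}$. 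The correct partial extension here is $\vec F=\emptyset$.

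The reason box-integrality fails is that your right-hand side $f(U)=\tfrac{k-1}{k}|\delta_E(U)|-|\delta_{\vec E_1}^+(U)|-|\delta_{E_2^+}^-(U)|$ is only $1/k$-integral, not integral; Theorem~\ref{thm:submodular_flow} then yields $1/k$-integral extreme points, exactly as in Lemma~\ref{lemma:1/k-integral}, not $\{0,1\}$ points. In fact your plan \emph{cannot} be repaired to give a complete extension: Theorem~\ref{thm:complete-CBO-hard} shows that deciding whether $\vec E_1$ extends to a complete $k$-cut-balanced orientation is NP-hard, so any polynomial-time scheme that always outputs a full orientation of $E$ would contradict it. The paper instead applies Hoffman's circulation theorem on $\vec E_1\cup E_2^+\cup E_2^-$ with bounds $[1,k-1]$ on $\vec E_1$ and $[0,k-1]$ on the bidirected $E_2$; the Hoffman condition is \emph{exactly} \eqref{eq:partial_k_cut_balanced}, and after canceling opposite flows on $E_2$ the support of the resulting circulation is a $k$-flow whose (possibly proper) support $\vec F\supseteq\vec E_1$ is the desired partial $k$-cut-balanced orientation via Lemma~\ref{lemma:equivalence}. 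The ``anticipated obstacle'' you flagged (feasibility of the midpoint) is not the issue---the midpoint is indeed feasible; the missing idea is allowing edges of $E_2$ to remain unoriented.
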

\begin{proof}
    ``Only if" follows from the definition of partial $k$-cut-balanced orientations and the fact that $E_1\subseteq E$. We prove the ``if" direction. Bidirect the edges in $E_2:=E\setminus E_1$, denoted as $E_2^+\cup E_2^-$. In order to complete $\vec{E}_1$ into a directed Eulerian subgraph of $(V,\vec{E}_1\cup E_2^+\cup E_2^-)$, we need a flow induced on $\vec{E}_1\cup E_2^+\cup E_2^-$, which has capacity lower bound $1$ and upper bound $k-1$ for $e\in \vec{E}_1$, together with capacity lower bound $0$ and upper bound $k-1$ for $e\in  E_2^+\cup E_2^-$. By Hoffman's circulation theorem \cite{hoffman2003some}, there exists such a flow if and only if 

\[
    |\delta_{\vec{E}_1}^+(U)|\leq (k-1)|\delta_{\vec{E}_1}^-(U)|+(k-1)\Big(|\delta_{E}(U)|-|\delta_{\vec{E}_1}^-(U)|-|\delta_{\vec{E}_1}^+(U)|\Big)\quad \forall U\subseteq V,
\]
i.e.,
\[
    |\delta_{\vec{E}_1}^+(U)|\leq \frac{k-1}{k} |\delta_{E}(U)|\quad \forall U\subseteq V.
\]
Thus, by the given condition and Hoffman's circulation theorem, there exists a circulation $f$ satisfying $0\leq f(e)\leq k-1\ \forall e\in E^+\cup E^-$, and $\vec{E}_1\subseteq\supp(f)$.
Finally, if there is an edge $e\in E_2$ such that both $e^+$ and $e^-$ have positive flow, then we decrease them by the same amount so that at least one of them has flow value zero. Thus, we may assume that for every edge $e\in E$, at most one of $f(e^+),f(e^-)$ is nonzero. Therefore, $f$ is a $k$-flow. Let $F:=\supp(f)$ be the edges oriented by $f$ and $\vec{F}:=\supp^+(f)$ be the partial orientation associated with $f$. Applying Lemma \ref{lemma:equivalence} to the graph $(V,F)$, we conclude that $\vec{F}$ is a partial $k$-cut-balanced orientation. Moreover, we can use a polynomial-time circulation algorithm to construct $f$ (see e.g. \cite{ford2015flows,chen2022maximum}), and thus $\vec{F}$ can be constructed in polynomial time.
\end{proof}

\begin{remark}
The function $f:2^V\rightarrow \R$ defined as, for $U\subseteq V$,
    \[
         f(U):=\frac{k-1}{k}|\delta_E(U)|-|\delta_{\vec{E_1}}^+(U)|=\frac{1}{k}\Big(|\delta_{\vec{E}_1}^-(U)|-|\delta_{\vec{E}_1}^+(U)|\Big)+\frac{k-2}{k}|\delta_{\vec{E}_1}^-(U)|+\frac{k-1}{k}|\delta_{E\setminus E_1}(U)|
     \]
     is submodular, since $|\delta_{\vec{E}_1}^-(U)|-|\delta_{\vec{E}_1}^+(U)|$ is modular, $|\delta_{\vec{E}_1}^-(U)|$ and $|\delta_{E\setminus E_1}(U)|$ are submodular, and $k\geq 2$. By Lemma \ref{lemma:complete_k-balanced}, checking whether $\vec{E}_1$ can be extended to a partial $k$-cut-balanced orientation is equivalent to checking whether $f(U)\geq 0\ \forall U\subseteq V$, which can be done in polynomial time using submodular minimization~(see e.g. \cite{schrijver2000combinatorial,grotschel1981ellipsoid,iwata2001combinatorial}). In contrast, according to Theorem \ref{thm:complete-CBO-hard}, checking whether $\vec{E}_1$ can be extended to a (complete) $k$-cut-balanced orientation is NP-hard.
\end{remark}

Now we are ready to prove Theorem \ref{thm:approx-cut-balanced}. 
\begin{proof}[Proof of Theorem \ref{thm:approx-cut-balanced}]
Our algorithm proceeds as follows:  Solve the LP relaxation \eqref{eq:k-cut_balanced} and let $y^*$ be an extreme point optimal solution. Let $\vec{E}_1:=\{e\in E^+\cup E^-: y^*(e)=1\}$ and $E_1$ be its undirected counterpart. It follows that 
\[|\delta_{\vec{E_1}}^+(U)|\leq y^*(\delta_{\vec{E}}^+(U))\leq \frac{k-1}{k}|\delta_E(U)|\quad \forall U\subseteq V.\]
By Lemma \ref{lemma:complete_k-balanced}, $\vec{E}_1$ can be extended to a partial $k$-cut-balanced orientation $\vec{F}$ in polynomial time. Let $\vec{E}_2:=\vec{F}\setminus \vec{E}_1$ and $E_2$ be its undirected counterpart. Since $\vec{F}$ is $k$-cut-balanced, it induces a partial $k$-flow $f$ by Lemma \ref{lemma:equivalence}. Let $g$ be an arbitrary nowhere-zero $6$-flow of $(V,E)$, which can be constructed in polynomial time \cite{younger1983integer}. Our algorithm returns the orientation $\vec{E}$ associated with $6f+g$. Now, we bound its performance guarantee as stated in the theorem. 

    According to Proposition \ref{prop:sum-flows}, $6f+g$ is a nowhere-zero $6k$-flow. Therefore, $\vec{E}$ is a $6k$-cut-balanced orientation.
    Next, we prove that $c(\vec{E})\leq k c(y^*)$. Let $E_3=E\setminus (E_1\cup E_2)$ and $\vec{E}_3$ be the orientation associated with $g$ restricted to $E_3$. According to Proposition \ref{prop:sum-flows}, $\vec{E}=\vec{E}_1\cup \vec{E}_2\cup\vec{E}_3$. Therefore, 
    \[
c(\vec{E})=\sum_{e\in \vec{E}} c(e)=\sum_{e\in \vec{E}_1} c(e)+\sum_{e\in \vec{E}_2\cup \vec{E}_3} c(e)\leq \sum_{e\in \vec{E}_1} c(e)y^*(e)+\sum_{e\in \vec{E}_2\cup \vec{E}_3} c(e)\cdot ky^*(e)\leq kc^\top y^*,
\]
where the first inequality follows from the fact that $y^*(e)=1\ \forall e\in E_1$, together with the fact that for every $e\in \vec{E}_2\cup \vec{E}_3$, $y^*(e)\in(0,1)$ and $y^*(e)$ is $1/k$-integral by Lemma \ref{lemma:1/k-integral}, which implies $y^*(e)\geq 1/k$. 
\end{proof}

\section{Min cost Symmetric Nowhere-Zero Flows}\label{sec:SWNZF}
In this section, we study $\operatorname{SWNZF}(k)$, the weighted nowhere-zero $k$-flow with symmetric costs. We prove the NP-hardness of $\operatorname{SWNZF}(k)$ in Section \ref{sec:SWNZF-hardness} and give a $3$-approximation algorithm for every $k\geq 6$ and $k=\infty$ in Section \ref{sec:SWNZF-approx}.

\subsection{Hardness of $\operatorname{SWNZF}(k)$}\label{sec:SWNZF-hardness}
We prove Theorem \ref{thm:complexity-SWNZF-k} in this subsection. We first prove the NP-hardness of $\operatorname{SWNZF}(\infty)$. We reduce from the NP-complete problem \emph{not-all-equal $3$-SAT (NAE3SAT)} \cite{schaefer1978complexity}: the input is a collection of $n$ variables and $m$ clauses, where each clause is a disjunction of $3$ variables or their negation. The goal is to determine if there is an assignment of Boolean values to variables such that the Boolean values assigned to the three variables in each clause are not all equal to each other (in other words, at least one is positive, and at least one is negative). Our construction is inspired by \cite{martinez2006complexity}.

\begin{theorem}\label{thm:complexity-SWNZF}
    $\operatorname{SWNZF}(\infty)$ for unit costs is NP-hard.
\end{theorem}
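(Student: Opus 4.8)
The plan is to reduce from NAE3SAT, following the general philosophy of the earlier hardness reductions in the paper: build a graph out of variable gadgets and clause gadgets, with unit costs everywhere, such that a low-cost nowhere-zero flow forces a consistent "not-all-equal" assignment. Since $\operatorname{SWNZF}(\infty)$ has unit symmetric costs, the cost of a flow $f$ is $\sum_{e\in E}|f(e^+)|=\sum_{e}|f(e)|$, so the objective simply counts total absolute flow summed over edges; minimizing this pushes the flow to use flow value $1$ on as many edges as possible. The key leverage is that certain subgraphs (e.g., odd structures, or structures where a parity/conservation argument forces a large flow value on some edge) will be "cheap" exactly when a clause is satisfied in the not-all-equal sense, and "expensive" otherwise.

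First I would design a \textbf{variable gadget} for each variable $x_i$: a small $2$-edge-connected graph with two distinguished "output" edges (or bundles of parallel edges) such that in any nowhere-zero flow the gadget forces a binary choice — intuitively, the flow either "points out" of the positive terminal or out of the negative terminal — and both choices cost the same baseline amount. Parallel edges are a natural device here since on $t$ parallel edges a flow of net value $v$ can be routed at cost $\ge |v|$ with equality achievable when $|v|\le t$; this lets me encode a bit via the net direction of flow across an edge bundle while keeping costs symmetric and controllable. Second I would design a \textbf{clause gadget} on the three literals of each clause: a vertex (or small subgraph) where flow conservation, combined with the incoming flow contributed by the three literal bundles, is satisfiable at the cheap baseline cost \emph{if and only if} the three literal-directions are not all equal — the all-equal case forces an imbalance at the clause vertex that can only be repaired by routing extra flow (value $\ge 2$ on some edge, or a long detour), incurring strictly higher cost. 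Finally I would wire everything together with a return path / root structure to make the whole graph $2$-edge-connected and to absorb net flow, then set a threshold $B$ (the sum of all baseline costs) and argue: the NAE3SAT instance is satisfiable $\iff$ the constructed graph has a nowhere-zero flow of cost $\le B$.

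The correctness proof then splits into the two standard directions. For the forward direction, given a not-all-equal satisfying assignment, I orient each variable gadget according to its bit and exhibit an explicit nowhere-zero flow of cost exactly $B$, checking conservation gadget-by-gadget (this is the "routine calculation" part — mostly bookkeeping of flow values $1$ and $2$ on the designated edges, as in the proof of Theorem \ref{thm:complete-CBO-hard}). For the reverse direction, given a nowhere-zero flow of cost $\le B$, I argue that the cost accounting is tight, so every gadget must be in its cheap state; tightness on the variable gadgets forces each to encode a genuine bit, and tightness on the clause gadgets forces each clause to be not-all-equal under the induced assignment.

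The main obstacle I expect is designing the clause gadget so that the all-equal configuration is genuinely and \emph{provably} more expensive — not merely by a constant that could be hidden, but in a way that is robust to the flow being allowed to use arbitrarily large values elsewhere ($k=\infty$). The subtlety is that with no cap on flow values, the adversary has a lot of freedom to reroute, so I must ensure the clause vertex's "imbalance" in the bad case cannot be cancelled cheaply through any other part of the graph; the cleanest way is a parity invariant (making the net flow across each literal bundle odd, so that three equal parities force a nonzero residual at the clause vertex that \emph{some} edge must carry with value $\ge 2$), together with making every edge participate in the baseline count so that any extra flow strictly increases the total. Getting this gadget right — and verifying $2$-edge-connectivity of the whole construction so a nowhere-zero flow exists at all — is where the real work lies.
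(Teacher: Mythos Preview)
Your high-level plan---reduce from NAE3SAT, build variable and clause gadgets with unit costs, and argue via a threshold $B$ that tightness forces a not-all-equal assignment---is exactly the paper's approach. You have also correctly located the mechanism at the clause side: a clause vertex of degree $4$ (three literal edges plus one ``return'' edge to a root $v_0$), so that if all three literal edges point the same way the vertex cannot be balanced with all unit flow values. What you are missing, and what you yourself flag as ``where the real work lies,'' is the variable gadget and the lower-bound argument that makes the threshold tight.

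The paper's variable gadget is not a bundle of parallel edges but an \emph{even cycle} $R_i$ of length $2d_i$ (where $d_i$ is the maximum of the number of positive and negative occurrences of $x_i$); odd-indexed cycle vertices get an edge to the clauses where $x_i$ appears positively, even-indexed ones to the clauses where $x_i$ appears negatively, and any leftover cycle vertex gets an edge to the root $v_0$. Every cycle vertex then has degree exactly $3$, and this is where the lower bound lives: at a degree-$3$ vertex, flow conservation forces at least one incident edge to carry value $\ge 2$, so at least $\sum_i d_i$ edges have value $\ge 2$ and hence every nowhere-zero flow costs at least $B:=|E|+\sum_i d_i$. Crucially, equality forces the value-$2$ edges to be precisely the alternating (odd or even) edges of each $R_i$, and \emph{all literal edges} to carry value exactly $1$; this is what pins down a genuine bit per variable and simultaneously guarantees that at each clause vertex the four incident edges all have value $1$, so conservation gives a $2$--$2$ split and hence NAE. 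So the parity penalty you wanted to place at the clause is actually enforced for free once the lower bound from the degree-$3$ cycle vertices is tight; your worry about the adversary rerouting large flow (since $k=\infty$) is handled because \emph{any} deviation from the alternating-$1$--$2$ pattern on the cycles already breaks the count $|E|+\sum_i d_i$.
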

\begin{proof}
    Let the input instance to NAE3SAT consist of variables $x_1,...,x_n$ and clauses $C_1,...,C_m$. Suppose $x_i$ appears $a_i$ times positively and $a'_i$ times negatively for each $i\in [n]$ and let $d_i=\max\{a_i,a'_i\}$ for each $i\in [n]$. We construct an undirected graph $G=(V,E)$ in the following way. For each $i\in [n]$, construct a cycle $R_i$ of length $2d_i$ with vertices $u_1^i,u_2^i...,u_{2d_i}^i$ in cyclic order. For each clause $C_j$, there is a node $v_j$ corresponding to it. Let $(u^i_{2s-1},v_j)\in E$ if $x_i$ appears positively in $C_j$ for some $s\in [d_i]$; let $(u^i_{2s},v_j)\in E$ if $x_i$ appears negatively in $C_j$ for some $s\in [d_i]$. We choose $s$ in a way that each $v_j$ is adjacent to a distinct $u^i_t$, $t\in [2d_i]$. For those $u^i_t$ not adjacent to any $v_j$, we connect them to a special node $v_0$. Finally, we add an edge $(v_j,v_0)$ for each $j\in [m]$ (see Figure \ref{fig:complexity-SWNZF-1} (1) for details and Figure \ref{fig:complexity-SWNZF-3} for an example). All arc costs are unit, i.e., $c(e^+)=c(e^-)=1\ \forall e\in E$.
    \begin{figure}[htbp]
        \centering
        \includegraphics[width=\linewidth]{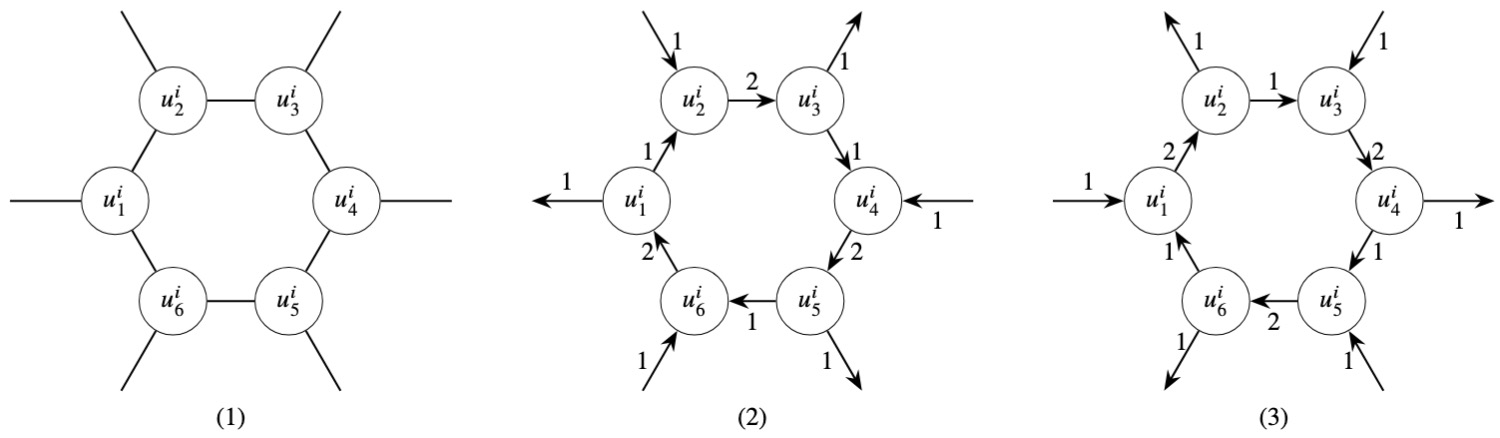}
        \caption{(1) Cycle $R_i$ corresponding to variable $x_i$ with $d_i=3$. (2) Part of a nowhere-zero $(\vec{E},f)$ when $x_i=1$. (3) Part of a nowhere-zero $(\vec{E},f)$ when $x_i=0$.}
        \label{fig:complexity-SWNZF-1}
    \end{figure}

        \begin{figure}[htbp]
    \centering
\includegraphics[width=0.8\linewidth]{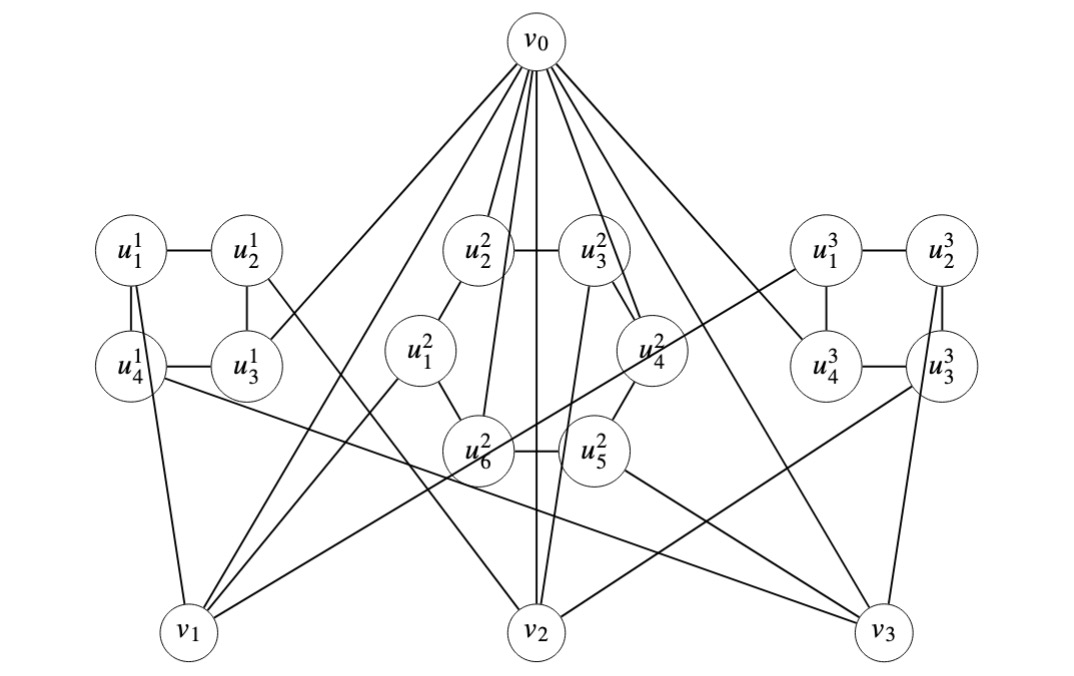}
        \caption{The graph $G$ for NAE3SAT instance $(x_1\vee x_2\vee x_3)\wedge (\bar{x}_1\vee x_2\vee x_3)\wedge (\bar{x}_1\vee x_2\vee \bar{x}_3)$.}
        \label{fig:complexity-SWNZF-3}
    \end{figure}
    We claim that the NAE3SAT instance is satisfiable if and only if the min total value of a nowhere-zero flow of $G$ equals $|E|+\sum_{i=1}^n d_i$. First, observe that the total value of a nowhere-zero flow is at least $|E|+\sum_{i=1}^n d_i$. Indeed, the degree of each $u^i_t$ is $3\ \forall i\in[n], t\in[2d_i]$. By flow conservation, at least one arc adjacent to $u^i_t$ has flow value at least $2$. Thus, the number of arcs having flow value at least $2$ is at least $\frac{1}{2}(\sum_{i=1}^n 2d_i)=\sum_{i=1}^n d_i$, where we divide by $2$ because $u_t^i$ and $u_{t+1}^i$ may share an arc of flow at least $2$ (we assume $u^i_{2d_i+1}:=u^i_1$). Therefore, every nowhere-zero flow has value at least $|E|+\sum_{i=1}^n d_i$.

        \begin{figure}[htbp]
    \centering
\includegraphics[width=0.8\linewidth]{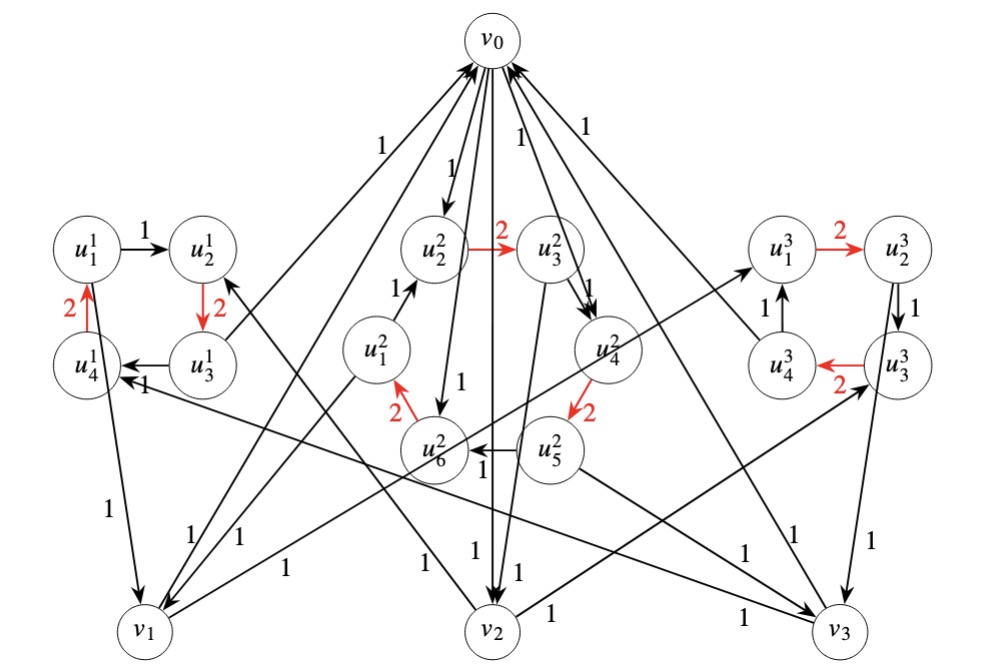}
        \caption{The nowhere-zero flow of total value $|E|+\sum_{i=1}^n d_i$ corresponding to a feasible assignment $x_1=1,x_2=1,x_3=0$ for NAE3SAT instance $(x_1\vee x_2\vee x_3)\wedge (\bar{x}_1\vee x_2\vee x_3)\wedge (\bar{x}_1\vee x_2\vee \bar{x}_3)$. The arcs of flow value $2$ are colored red, which are the even arcs of $R_1$, the even arcs of $R_2$, and the odd arcs of $R_3$.}
        \label{fig:complexity-SWNZF-4}
    \end{figure}
    Suppose $x_1,...,x_n$ is a feasible assignment such that each clause has at least one positive and one negative value. We construct a nowhere-zero flow $(\vec{E},f)$ of total value $|E|+\sum_{i=1}^n d_i$ in the following way. Orient each cycle $R_i$ as $u^i_1\rightarrow u^i_2\rightarrow\cdots \rightarrow u^i_{2d_i}\rightarrow u^i_1$. If $x_i=1$, orient $(u^i_{2s-1},v_j)\in \vec{E}$ and assign $f(u^i_{2s-1},v_j)=1$ for every $s\in[d_i], j\in \{0,1,...,m\}$ such that $(u^i_{2s},v_j)\in E$; orient $(v_j, u^i_{2s})\in \vec{E}$ and assign $f(v_j,u^i_{2s})=1$ for every $s\in[d_i], j\in \{0,1,...,m\}$ such that $(u^i_{2s},v_j)\in E$. Let $f(u^i_{2s-1},u^i_{2s})=1$, $f(u^i_{2s},u^i_{2s+1})=2\ \forall s\in[d_i]$ (see Figure \ref{fig:complexity-SWNZF-1} (2)). If $x_i=0$, orient $(v_j,u^i_{2s-1})\in \vec{E}$ and assign $f(v_j,u^i_{2s-1})=1$ for every  for every $s\in[d_i], j\in \{0,1,...,m\}$ such that $(u^i_{2s-1},v_j)\in E$; orient $(u^i_{2s},v_j)\in \vec{E}$ and assign $f(u^i_{2s},v_j)=1$ for every for every $s\in[d_i], j\in \{0,1,...,m\}$ such that $(u^i_{2s},v_j)\in E$. Let $f(u^i_{2s-1},u^i_{2s})=2$, $f(u^i_{2s},u^i_{2s+1})=1\ \forall s\in[d_i]$ (see Figure \ref{fig:complexity-SWNZF-1} (3)). Finally, for each $C_j$, there are either two positive and one negative values or one positive and two negative values. In the former case we orient $(v_j,v_0)\in \vec{E}$ and assign $f(v_j,v_0)=1$, while in the latter case we orient $(v_0,v_j)\in \vec{E}$ and assign $f(v_0,v_j)=1$ (see Figure \ref{fig:complexity-SWNZF-4}). This is indeed a nowhere-zero flow of total value $|E|+\sum_{i=1}^n d_i$, since the only arcs having flow value $2$ are the even arcs or the odd arcs in the cycles.

    We are left to prove the other direction: if the min total value of a nowhere-zero flow $(\vec{E},f)$ equals $|E|+\sum_{i=1}^nd_i$, then the instance is satisfiable. It follows from the proof of $c(f)\geq |E|+\sum_{i=1}^nd_i$ that the equality holds if and only if the flow values are all $1$ or $2$, and the value $2$ arcs are the even arcs or the odd arcs in each cycle $R_i$. Note that the arcs in $R_i$ have to be oriented consistently, since the flow values of arcs of the form $(u^i_t,v_j)$ or $(v_j,u^i_t)$ are all $1$. Moreover, the edges of the form $(u^i_t,v_j)$ are oriented in a way that either $(u^i_{2s-1},v_j)\in \vec{E}$, $(v_j,u^i_{2s})\in \vec{E}\ \forall s\in[d_i]$ or $(v_j,u^i_{2s-1})\in \vec{E}$, $(u^i_{2s},v_j)\in \vec{E}\ \forall s\in[d_i]$, depending on the value $2$ arcs are the even or odd arcs of $R_i$. In the former case we assign $x_i=1$, while in the latter case we assign $x_i=0$. This way, the value of $x_i$ is positive in $C_j$ if and only if $(u^i_t,v_j)\in \vec{E}$ for some $t\in [2d_i]$. This is indeed a feasible assignment, because for each clause $v_j$, $|\delta_{\vec{E}}^+(v_j)|=|\delta_{\vec{E}}^-(v_j)|=2$, since the arcs adjacent to $v_j$ all have flow values $1$. This implies that $C_j$ has at least one positive value and one negative value. This completes the proof.
\end{proof}
In fact, the nowhere-zero flows we use in the proof of Theorem \ref{thm:complexity-SWNZF} are nowhere-zero $3$-flows. Thus, the proof also implies that $\operatorname{SWNZF}(k)$ is NP-hard for every $k\geq 3$. Moreover, it is NP-hard to approximate $\operatorname{SWNZF}(k)$ within any finite factor for $k=3,4$, and for $k=5$ if Tutte's $5$-flow conjecture is false. Indeed, setting the costs $c=0$ recovers the feasibility problem of deciding whether a graph has a nowhere-zero $k$-flow for $k=3,4,5$ which are NP-complete.
Hence, Theorem \ref{thm:complexity-SWNZF-k} follows.

\subsection{Approximation algorithms}\label{sec:SWNZF-approx}
As we discussed earlier, every nowhere-zero $6$-flow is a $5$-approximation to $\operatorname{SWNZF}(k)$ for every finite integer $k\geq 6$ and for $k=\infty$.  Indeed, let $(\vec{E},f)$ be an arbitrary nowhere-zero $6$-flow, which is also a nowhere-zero $k$-flow. Let $\opt$ be a min cost nowhere-zero $k$-flow. Then,
\[
c(f)=\sum_{e\in\vec{E}} c(e)f(e)\leq \sum_{e\in\vec{E}} 5c(e)\leq 5c(\opt),
\]
where the first inequality follows from the fact that $f$ is a $6$-flow. The second inequality follows from the fact that $\opt$ is nowhere-zero.

We provide an improved $3$-approximation algorithm by finding a cheaper nowhere-zero $6$-flow. Given a nowhere-zero $6$-flow $(\vec{E},f)$, for an arbitrary directed cycle $C\in\vec{E}$, if $\sum_{e\in C} c(e)f(e)>\sum_{e\in C} c(e)(6-f(e))$, we can push $6$ units flow along the reverse direction of $C$. This way, we reduce the cost of the flow along the cycle while maintaining a nowhere-zero $6$-flow. This observation inspires the definition of locally optimal nowhere-zero $6$-flow: we say that a nowhere-zero $6$-flow is \emph{locally optimal} if for every directed cycle $C$ (wrt the flow),
\[
    \sum_{e\in C} c(e)f(e)\leq \sum_{e\in C} c(e)(6-f(e)),
\]
i.e.,
\begin{equation}\label{eq:local-opt}
    \sum_{e\in C} c(e)f(e)\leq 3\sum_{e\in C} c(e).
\end{equation}

We prove the following lemma.

\begin{lemma}\label{lemma:localopt}
    For every finite integer $k\geq 6$ and for $k=\infty$, every locally optimal nowhere-zero $6$-flow is a $3$-approximation to $\operatorname{SWNZF}(k)$.
\end{lemma}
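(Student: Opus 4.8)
Let $(\vec{E},f)$ be a locally optimal nowhere-zero $6$-flow and let $\opt=(\vec{E}^*,f^*)$ be a minimum cost nowhere-zero $k$-flow of $G=(V,E)$. Since the costs are symmetric, I write $c(e)$ for the common value $c(e^+)=c(e^-)$, and for a nowhere-zero flow I write $f(e)$ for the positive flow value it places on the orientation of $e$ it uses. The plan is to squeeze $c(f)$ between two bounds stated purely in terms of $\sum_{e\in E}c(e)$. The lower bound is immediate: $\opt$ orients every edge and assigns it flow value at least $1$, so $c(\opt)=\sum_{e\in E}c(e)f^*(e)\ge\sum_{e\in E}c(e)$. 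Hence it suffices to prove the matching upper bound $c(f)\le 3\sum_{e\in E}c(e)$.

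For the upper bound, let $\vec{E}=\supp^+(f)$ be the orientation carrying $f$; on $\vec{E}$, $f$ is a strictly positive circulation with values in $\{1,\dots,5\}$. First I would invoke the standard circulation decomposition: write $f=\sum_{i=1}^{t}\lambda_i\chi_{C_i}$, where each $C_i$ is a (simple) directed cycle contained in $\vec{E}$ and $\lambda_i>0$, so that $f(e)=\sum_{i:\,e\in C_i}\lambda_i$ for every arc $e\in\vec{E}$ (the $\lambda_i$ can be taken to be positive integers since $f$ is integral, though this is not needed). Each $C_i$ is a directed cycle with respect to the flow, so local optimality \eqref{eq:local-opt} applies: $\sum_{e\in C_i}c(e)f(e)\le 3\sum_{e\in C_i}c(e)$. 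Multiplying the $i$-th inequality by $\lambda_i$, summing over $i$, and swapping the order of summation on both sides using $\sum_{i:\,e\in C_i}\lambda_i=f(e)$ gives
\[
\sum_{e\in E}c(e)f(e)^2\ \le\ 3\sum_{e\in E}c(e)f(e)\ =\ 3\,c(f).
\]
Finally I would apply Cauchy--Schwarz: writing $c(f)=\sum_{e\in E}\sqrt{c(e)}\cdot\bigl(\sqrt{c(e)}\,f(e)\bigr)$ yields $c(f)^2\le\bigl(\sum_{e\in E}c(e)\bigr)\bigl(\sum_{e\in E}c(e)f(e)^2\bigr)\le 3\,c(f)\sum_{e\in E}c(e)$, and therefore $c(f)\le 3\sum_{e\in E}c(e)\le 3\,c(\opt)$ (the case $c(f)=0$ being trivial).

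The step that requires genuine care is the bridge from the per-cycle inequality \eqref{eq:local-opt}, which only bounds the ratio $\bigl(\sum_{C}c(e)f(e)\bigr)/\bigl(\sum_{C}c(e)\bigr)$ on individual cycles, to the global cost $c(f)$. The $\lambda_i$-weighted aggregation of the cycle decomposition is what links these: it produces the quadratic quantity $\sum_{e\in E}c(e)f(e)^2$, and Cauchy--Schwarz is precisely the tool that converts this quadratic estimate back into the desired linear bound on $c(f)$. The remaining content of the argument is routine bookkeeping; note in particular that the value of $k$ (finite $\ge 6$ or $\infty$) plays no role beyond guaranteeing that the nowhere-zero $6$-flow $f$ is a feasible nowhere-zero $k$-flow so that the comparison with $\opt$ is legitimate.
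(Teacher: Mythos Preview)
Your proof is correct and follows essentially the same route as the paper: decompose the circulation $f$ into directed cycles, apply the local-optimality inequality \eqref{eq:local-opt} to each cycle, aggregate to obtain $\sum_e c(e)f(e)^2\le 3\,c(f)$, and then use Cauchy--Schwarz to convert this into $c(f)\le 3\sum_e c(e)\le 3\,c(\opt)$. The only cosmetic difference is that the paper writes the cycle decomposition as a multiset of unit-weight cycles rather than with explicit weights $\lambda_i$, and it chains the same inequalities in a slightly different order.
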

\begin{proof}
    Let $(\vec{E},f)$ be a locally optimal nowhere-zero $6$-flow. We recall that a circulation $f$ can be decomposed into directed cycles $\mathscr{C}$, i.e., $f=\sum_{C\in\mathscr{C}} \chi(C)$, where $\chi(C)$ is the characteristic vector of $C$.
    Then,
    \[
    \begin{aligned}
        \sum_{e\in \vec{E}}c(e)f(e)
        =&\sum_{e\in \vec{E}}c(e)\sum_{C\in\mathscr{C}} \mathbf{1}\{e\in C\}
        =\sum_{C\in\mathscr{C}}\sum_{e\in C} c(e)\\
        \geq& \frac{1}{3}\sum_{C\in\mathscr{C}}\sum_{e\in C} c(e)f(e)
        = \frac{1}{3}\sum_{e\in\vec{E}}c(e)f(e)^2
        \geq \frac{1}{3}\frac{(\sum_{e\in\vec{E}}c(e)f(e))^2}{\sum_{e\in\vec{E}}c(e)},
    \end{aligned}
    \]
    where the first inequality follows from \eqref{eq:local-opt} and the second inequality follows from the Cauchy-Schwarz inequality. Let $\opt$ be the min cost nowhere-zero $k$-flow.
    It follows that
\[
c(f)=\sum_{e\in \vec{E}}c(e)f(e)\leq 3 \sum_{e\in\vec{E}}c(e)\leq 3c(\opt). 
\]
\end{proof}

Next, we show that a local optimum exists by giving a pseudo-polynomial algorithm to find it. 
\begin{prop}
\label{prop:local}
    A locally optimal nowhere-zero $6$-flow exists and there is a pseudo-polynomial algorithm to find it.
\end{prop}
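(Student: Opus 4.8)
The plan is to prove the proposition by a cost-decreasing cycle-cancellation procedure, starting from any nowhere-zero $6$-flow. First I would invoke Seymour's theorem~\cite{seymour1981nowhere} together with Younger's algorithm~\cite{younger1983integer} to obtain, in polynomial time, an initial nowhere-zero $6$-flow $(\vec{E}_0, f_0)$ of the (2-edge-connected) input graph $G$; this both seeds the algorithm and anchors the existence argument.

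Next I would formalize one augmentation step. Given a nowhere-zero $6$-flow $(\vec{E}, f)$ and a directed cycle $C\subseteq\vec{E}$ that violates \eqref{eq:local-opt}, i.e.\ $\sum_{e\in C} c(e)f(e) > \sum_{e\in C} c(e)(6-f(e))$, push $6$ units of flow along the reverse of $C$. Two checks are needed: (a) the result is again a nowhere-zero $6$-flow — flow conservation is preserved because we push along a closed walk, while on $C$ every arc flips orientation and acquires value $6-f(e)\in\{1,\dots,5\}$, and arcs outside $C$ are untouched; (b) using symmetry of $c$, the new cost equals $c(f)-\sum_{e\in C} c(e)\bigl(2f(e)-6\bigr)$, and the subtracted quantity is positive by the violation; since $c$ and $f$ are integer-valued, it is a positive (indeed even) integer, so the cost drops by at least $1$.

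For existence and an iteration bound, I would note that $c(f)=\sum_{e\in\vec{E}} c(e)f(e)\ge 0$ always (costs are nonnegative and flow values are at least $1$ on the support), while $c(f_0)\le 5\sum_{e\in E^+\cup E^-} c(e)$. Since each augmentation reduces the integer cost by at least one, after at most $5\sum_{e} c(e)$ augmentations we reach a nowhere-zero $6$-flow admitting no violating cycle, which is by definition locally optimal; this also proves existence. To detect a violating cycle or certify there is none at a given step, I would give each arc $e\in\vec{E}$ the weight $w(e):=c(e)(6-2f(e))$, so that a directed cycle $C$ violates \eqref{eq:local-opt} precisely when $\sum_{e\in C} w(e)<0$; such a negative-weight directed cycle can be found in polynomial time by the Bellman--Ford algorithm (note $\vec{E}$ is strongly connected, being $6$-cut-balanced by Lemma~\ref{lemma:equivalence}). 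Combining the polynomial per-iteration cost with the pseudo-polynomial iteration bound yields the claimed algorithm.

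I do not expect a serious obstacle: the only points requiring care are verifying that an augmentation keeps the flow nowhere-zero with all values in $\{1,\dots,5\}$, and bounding the number of iterations, both of which follow immediately from integrality (and, for the cost drop, from the symmetry of $c$). I would also remark that the iteration count, and hence the overall running time, can be made genuinely polynomial by phrasing the search for a locally optimal nowhere-zero $6$-flow reachable from $f_0$ via $6$-unit cycle reversals as a single min-cost circulation problem, but this refinement is not needed for the proposition as stated.
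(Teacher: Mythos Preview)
Your proposal is correct and follows essentially the same approach as the paper: start from a nowhere-zero $6$-flow via Younger's algorithm, detect a violating directed cycle as a negative cycle with Bellman--Ford (the paper uses weights $c(e)(3-f(e))$, equivalent up to a factor of $2$ to your $w(e)=c(e)(6-2f(e))$), push $6$ units along its reverse, and bound the number of iterations by the initial integer cost $c(f_0)\le 5\sum_e c(e)$. Your treatment is in fact slightly more explicit than the paper's in verifying that the augmentation preserves the nowhere-zero property and in noting strong connectivity for the negative-cycle search.
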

\begin{proof}
    Start from an arbitrary nowhere-zero $6$-flow $(\vec{E},f)$, which can be found in polynomial time \cite{younger1983integer}. Check if there is a directed cycle $C\subseteq \vec{E}$ violating \eqref{eq:local-opt}, i.e., $\sum_{e\in C} c(e)(3-f(e))<0$. This can be done using an algorithm such as the Bellman-Ford algorithm that detects negative cycles in a weighted digraph $\vec{E}$ with weights $c(e)(3-f(e))\ \forall e\in\vec{E}$. If so, we push $6$ units flow along the reverse direction of $C$. This way, we obtain a new nowhere-zero $6$-flow, and then we repeat the above step.

    We show that the algorithm terminates. In each step, the total cost of the flow $\sum_{e\in \vec{E}}c(e)f(e)$ reduces by $\sum_{e\in C} c(e)\big(f(e)-(6-f(e))\big)=\sum_{e\in C} 2c(e)(f(e)-3)>0$. Since $c\in\Z_{\geq 0}$, it always reduces by an integer at least $1$. Thus, the algorithm terminates in at most $\sum_{e\in \vec{E}}c(e)f(e)\leq 5\sum_{e\in \vec{E}}c(e)$ steps.
\end{proof}

We note that to arrive at a locally optimal nowhere-zero 6-flow $f'$ starting with $f$ in the above proof, we add a collection of cycles pushing $6$ units flow. This sum of cycles each pushing $6$ units flow gives a feasible solution to a circulation problem where all the nonzero flows in the circulation have value $6$. We formulate the task of finding such a circulation of minimum cost to obtain a strongly polynomial time algorithm for finding a locally optimal nowhere-zero $6$-flow.

Let $(\vec{E},f)$ be an arbitrary nowhere-zero $6$-flow. We are now allowed to flip the direction of some arcs of $\vec{E}$ while maintaining a nowhere-zero $6$-flow. We restrict ourselves to always pushing $6$ units flow along the reverse direction of some directed cycle $C$. This way, for every arc $e\in \vec{E}$ we flip, the flow value $f(e)$ assigned to it becomes $6-f(e)$. Thus, the change in the cost by flipping $e$ becomes $c(e)\big((6-f(e))-f(e)\big)=2c(e)
(3-f(e))$. This motivates us to formulate the following min cost circulation problem.

Let $D=(V,\cev{E})$ be the directed graph consisting of arcs in $\cev{E}:=(\vec{E})^{-1}$. For each $e\in \cev{E}$, let cost $c'(e):=c(e)(3-f(e^{-1}))$ and capacity $l(e)=0,\ u(e)=1$. An integral $g:\cev{E}\rightarrow \Z$ is feasible if it is a circulation, i.e., $g(\delta_{\cev{E}}^+(v))=g(\delta_{\cev{E}}^-(v))\ \forall v\in V$, and it satisfies the capacity constraints $l(e)\leq g(e)\leq u(e)\ \forall e\in \cev{E}$. Let $g$ be a min cost integral circulation of the instance. Return $f+6g$ (notice that $g$ is a $2$-flow, which allows us to define $f+6g$ the same way as in Section \ref{sec:basic}). In the lemma below, we show that $f+6g$ is a locally optimal nowhere-zero $6$-flow.

\begin{lemma}\label{lemma:localopt+mincostflow}
    Let $(\vec{E},f)$ be a nowhere-zero $6$-flow and $D=(V,\cev{E})$ be a digraph with costs $c'(e)=c(e)(3-f(e))\ \forall e\in\cev{E}$ and capacities $l=0,u=1$. Then, $g:\cev{E}\rightarrow \Z$ is a min cost circulation if and only if $f+6g$ is a locally optimal nowhere-zero $6$-flow.
\end{lemma}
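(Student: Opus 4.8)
The plan is to establish the equivalence by relating the cost of $f+6g$ to the cost of $g$ via an affine formula, and then using the correspondence between circulations $g$ in $D$ and sets of reversed cycles to match up local optimality of $f+6g$ with minimality of $g$. First I would observe that $f+6g$ is always a nowhere-zero $6$-flow: since $g$ takes values in $\{0,1\}$ on the arcs of $\cev{E}$, on each edge $e\in E$ with $f(e^{-1})$-oriented arc in $\vec E$, either $g$ is zero (so the flow value stays $f(e)\in\{1,\dots,5\}$) or $g$ pushes one unit in the reverse direction scaled by $6$, turning the value into $6-f(e)\in\{1,\dots,5\}$. Flow conservation of $f+6g$ follows because $f$ and $6g$ are both circulations (extending $g$ to $E^+\cup E^-$ as in Section \ref{sec:basic}). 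So the map $g\mapsto f+6g$ is a well-defined bijection between $\{0,1\}$-circulations $g$ on $\cev E$ and those nowhere-zero $6$-flows obtainable from $f$ by flipping a ``reversible'' subset of arcs; its inverse records, for the resulting flow $f'$, the set of arcs whose value changed.

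Next I would compute the cost. For $g$ a feasible circulation in $D$,
\[
c(f+6g)=\sum_{e\in\vec E}c(e)\,(f+6g)(e)=\sum_{e\in\vec E}c(e)f(e)+6\sum_{e\in\cev E}c(e^{-1})\bigl((3-f(e^{-1}))-0\bigr)\cdot\mathbf 1\{g(e)=1\}+\text{(correction)},
\]
but it is cleaner to argue arc-by-arc: flipping the arc $e\in\vec E$ (equivalently setting $g(e^{-1})=1$) changes the contribution of that arc from $c(e)f(e)$ to $c(e)(6-f(e))$, a net change of $2c(e)(3-f(e))=2\,c'(e^{-1})$. Hence
\[
c(f+6g)=c(f)+2\sum_{e\in\cev E}c'(e)g(e)=c(f)+2\,c'^{\!\top}g,
\]
so $c(f+6g)$ is an increasing affine function of $c'^{\top}g$. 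Therefore $g$ minimizes $c'^{\top}g$ over feasible circulations if and only if $f+6g$ has minimum cost among all nowhere-zero $6$-flows of the form $f+6g'$ with $g'$ a feasible $\{0,1\}$-circulation.

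It remains to convert ``minimum cost among flows $f+6g'$'' into ``locally optimal.'' For the forward direction, suppose $f+6g$ is not locally optimal, so some directed cycle $C$ (with respect to the flow $f+6g$) violates \eqref{eq:local-opt}, i.e.\ $\sum_{e\in C}c(e)\bigl((f+6g)(e)-3\bigr)>0$. Pushing $6$ units along $C^{-1}$ yields another nowhere-zero $6$-flow of strictly smaller cost, and this flow again has the form $f+6g''$ for a feasible $\{0,1\}$-circulation $g''$ (it is obtained from $f$ by flipping the symmetric difference of the flipped-arc sets), contradicting minimality of $c'^{\top}g$. For the converse, suppose $f+6g$ is locally optimal but $g$ is not minimum; pick a feasible circulation $h$ with $c'^{\top}h<c'^{\top}g$, so $f+6h$ is a cheaper nowhere-zero $6$-flow. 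The flow $6h-6g$ is a circulation supported on $\pm1$ values times $6$; decompose $6(h-g)$ into directed cycles (each carrying $6$ units, in the orientation of $f+6g$), say $6(h-g)=\sum_{C}6\chi(\cev C)$. Going from $f+6g$ to $f+6h$ amounts to successively reversing these cycles, and since the total cost strictly decreases, at least one such reversal strictly decreases cost, i.e.\ some $\cev C$ satisfies $\sum_{e\in\cev C}c(e)\bigl(6-(f+6g)(e)\bigr)<\sum_{e\in\cev C}c(e)(f+6g)(e)$, which means the cycle $C$ (in the flow's own orientation) violates \eqref{eq:local-opt} — contradicting local optimality.

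I expect the main obstacle to be the converse direction's cycle-decomposition bookkeeping: one must be careful that each cycle in the decomposition of $6(h-g)$ is genuinely a directed cycle \emph{with respect to the flow $f+6g$} (so that reversing it is a legitimate local move keeping the flow nowhere-zero with values in $\{1,\dots,5\}$), and that the cost change along intermediate flows telescopes correctly so that a net strict decrease forces at least one strictly improving single-cycle move. This is where I would be most careful; the forward direction and the cost formula are routine.
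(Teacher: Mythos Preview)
Your proposal is correct and follows essentially the same route as the paper: establish the correspondence between feasible $\{0,1\}$-circulations $g$ on $\cev E$ and nowhere-zero $6$-flows $f+6g$, then argue each implication via a single cycle.

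The only noteworthy difference is precisely where you flag the obstacle. The paper proves the correspondence as a genuine \emph{if and only if} (``$g$ is a feasible circulation $\Leftrightarrow$ $f+6g$ is a nowhere-zero $6$-flow'') and then uses it backwards in the converse direction, which absorbs the bookkeeping you are worried about. Concretely: if $g$ is not a min-cost circulation, the standard negative-cycle optimality criterion gives a cycle $C$ with $c'(C)>0$ such that $g':=g-\chi_C$ is still feasible; by the equivalence, $f+6g'=(f+6g)-6\chi_C$ is again a nowhere-zero $6$-flow, and \emph{this} is what forces $C\subseteq \vec E':=\supp^+(f+6g)$ (any $e\in C\setminus\vec E'$ would make $(f+6g)(e)-6\le -7$). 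Then the computation $\sum_{e\in C}c(e)\bigl(3-(f+6g)(e)\bigr)=-c'(C)<0$ contradicts local optimality. Your direct verification that the cycles of $g-h$ (not $h-g$) live in $\vec E'$ also works---since $g-h\in\{-1,0,1\}$ the cycles are arc-disjoint and the cost changes add---but the paper's use of the iff is the cleaner way to discharge exactly the step you singled out.
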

\begin{proof}
    First, we claim that for an arbitrary $g:\cev{E}\rightarrow\Z$, $g$ is feasible, i.e., a circulation obeying capacity constraints $0\leq g(e)\leq 1\ \forall e\in \cev{E}$, if and only if $f+6g$ is a nowhere-zero $6$-flow. If $g$ is feasible, then $g(e)\in\{0,1\}\ \forall e\in\cev{E}$. For an arbitrary $e\in \cev{E}$ with $g(e)=1$, one has $1=(-5)+6\leq f(e)+6g(e)\leq (-1)+6=5$, where we use the facts that $f(e)=-f(e^{-1})$ and $1\leq f(e^{-1})\leq 5$. Otherwise, $g(e)=0$ and thus $f(e)+6g(e)=f(e)$, which implies $-5\leq f(e)+6g(e) \leq -1$. This implies that $f+6g$ is a nowhere-zero $6$-flow. Conversely, if $g$ is not feasible, we show that  $f+6g$ is not a nowhere-zero $6$-flow. If $g$ is not a circulation, then $f+6g$ is not a circulation and thus not a nowhere-zero flow. If $g$ violates capacity constraints, then there exists $e\in \cev{E}$ such that $g(e)\geq 2$ or $g(e)\leq -1$, then $f(e)+6g(e)\geq (-5)+6\cdot 2=7$ or $f(e)+6g(e)\leq (-1)+6\cdot (-1)=-7$, respectively, which implies that $f+6g$ is not a nowhere-zero $6$-flow.

For convenience, from now on we will work with the following variant of the problem. For an arbitrary flow $g:\cev{E}\rightarrow \Z$, we extend its domain to $E^+\cup E^-$ 
by letting $g(e):=-g(e^{-1})\ \forall e\in \vec{E}$. For each $e\in \vec{E}$, let cost $c'(e):=-c'(e^{-1})=-c(e)(3-f(e))$. 
Now, the cost of $g$ becomes $\sum_{e\in \vec{E}\cup \cev{E}}c(e)g(e)=2\sum_{e\in \cev{E}}c(e)g(e)$. Since for circulations $g$ and $g'$, $\sum_{e\in \vec{E}\cup \cev{E}}c'(e)g(e)\geq \sum_{e\in \vec{E}\cup \cev{E}}c'(e)g'(e)$ if and only if $\sum_{e\in \cev{E}}c(e)g(e)\geq \sum_{e\in \cev{E}}c(e)g'(e)$, the min cost circulation stay unchanged. 

    Now, we prove the ``only if" direction. Suppose $g$ is a min cost circulation. Then, $g(e)\in\{0,1\}\ \forall e\in \cev{E}$. Moreover, let $\vec{E}':=\supp^+(f+6g)$ be the orientation associated with $f+6g$, i.e., $\{e\in E^+\cup E^-: (f+6g)(e)>0\}$. It follows from Proposition \ref{prop:sum-flows} that for an arbitrary $e\in \vec{E}'$, $e\in \cev{E}$ if and only if $g(e)=1$. For the sake of contradiction assume that $f+6g$ is not a local optimum. Then, by \eqref{eq:local-opt}, there exists some cycle $C\subseteq \vec{E}'$, satisfying 
    \begin{equation}\label{eq:negative_cycle}
        \begin{aligned}
            0>&\sum_{e\in C} c(e)(3-(f+6g)(e))=\sum_{e\in C\cap \vec{E}} c(e)(3-f(e))+\sum_{e\in C\cap \cev{E}} c(e)(3-(-f(e^{-1})+6))\\
            =&\sum_{e\in C\cap \vec{E}} c(e)(3-f(e))+\sum_{e\in C\cap \cev{E}} c(e)(f(e^{-1})-3)=-\sum_{e\in C}c'(e).
        \end{aligned}
    \end{equation}
    Let $\chi_C$ be the indicator vector of $C$ defined as $\chi_C(e)=1\ \forall e\in C$, $\chi_C(e)=-1\ \forall e\in C^{-1}$, and $\chi_C(e)=0, \text{otherwise}$.
    Let $g'=g-\chi_C$. Observe that $f+6g'=f+6g-6\chi_C$ is a nowhere-zero $6$-flow since $C\subseteq\vec{E}'$. Thus, it follows from the claim at the beginning of the proof that $g'$ is feasible. However, $c'(g')-c'(g)=-c'(C)<0$, a contradiction to the optimality of $g$. 

    We then prove ``if" direction. Suppose $f+6g$ is a locally optimal nowhere-zero $6$-flow. By the claim, $g$ is feasible. For the sake of contradiction, assume $g$ is not a min cost circulation. Let $g^*$ be a min cost circulation. The difference $g-g^*$ is also a circulation, and thus can be decomposed into cycles. Since $c'(g)>c'(g^*)$, there exists some cycle $C$ with $c'(C)>0$ such that $g':=g-\chi_C$ is also a feasible circulation. Thus, it follows from the claim that $f+6g-6\chi_C=f+6g'$ is also a nowhere-zero $6$-flow. This implies $C\subseteq \vec{E}'$. By \eqref{eq:negative_cycle}, $\sum_{e\in C} c(e)(3-(f+6g)(e))=-c'(C)<0$. This contradicts the condition \eqref{eq:local-opt} that has to be satisfied when $f+6g$ is a local optimum.
\end{proof}

Lemmas \ref{lemma:localopt} and \ref{lemma:localopt+mincostflow} together complete the proof of Theorem \ref{thm:approx-SWNZF-k}.

\begin{proof}[Proof of Theorem \ref{thm:approx-SWNZF-k}]
     Start from an arbitrary nowhere-zero $6$-flow, which can be computed in polynomial time using Younger's algorithm~\cite{younger1983integer}. Lemma \ref{lemma:localopt+mincostflow} implies that a locally optimal nowhere-zero $6$-flow can be found in strongly polynomial time using min cost circulation algorithms (see e.g. \cite{ford2015flows,chen2022maximum}). Lemma~\ref{lemma:localopt} implies that this is a $3$-approximation to $\operatorname{SWNZF}(k)\ \forall k\geq 6$ and $k=\infty$.
\end{proof}

\begin{remark}
    The approximation ratio $3$ is tight for this algorithm. To see this, consider the entire graph to be a cycle $C$, with unit costs. An optimal nowhere-zero flow is $(\vec{C},f)$ with $f(e)=1\ \forall e\in\vec{C}$, whose cost is $|C|$. However, if our algorithm starts from the nowhere-zero $6$-flow $(\vec{C},f')$ with $f'(e)=3\ \forall e\in\vec{C}$, this is already a local optimum, so we only obtain a nowhere-zero flow of cost $3|C|$.
\end{remark}

\section{Conclusion}
We conclude with some interesting directions for future research. Firstly, is it possible to obtain a bicriteria approximation for min-cost well-balanced orientations? In particular, given an undirected graph $G=(V, E)$, is it possible to construct an orientation $\vec{E}$ such that (i) the cost of the orientation is at most a constant factor of a min-cost well-balanced orientation and (ii) the orientation $\vec{E}$ is constant-approximately well-balanced, i.e., $\lambda_{\vec{E}}(u,v)\ge \rounddown{\lambda_E(u,v)/\rho}$ for some constant $\rho$? Secondly, can we improve on the $(6,6)$-bicriteria approximation for $\operatorname{WNZF}(k)$, on the $(k,6)$-bicriteria approximation for $\operatorname{WCBO}(k)$, or on the $3$-factor approximation for $\operatorname{SWNZF}(k)$ for finite integer $k\ge 6$?

\paragraph*{Acknowledgements} The first two authors would like to thank Eml\'ekt\'abla workshop in July 2024 and ICERM reunion workshop on \emph{Discrete Optimization: Mathematics, Algorithms, and Computation} in August 2024, where this work was initiated. We would like to thank Krist\'of B\'erczi and G\'erard Cornu\'ejols for initial discussions.


\bibliographystyle{plain} 
\bibliography{reference}

\end{document}